\documentclass[aps,twocolumn,superscriptaddress,footinbib]{revtex4}
\usepackage{epstopdf}
\usepackage{graphicx}  
\usepackage{bm}        
\usepackage{braket}
\usepackage{exscale}                  
\usepackage[intlimits]{amsmath}       
\usepackage{amsfonts}
\usepackage{amssymb,amscd}
\usepackage{amsthm}
\usepackage{color}
\usepackage{hyperref}
\usepackage{subfigure}
\usepackage[normalem]{ulem}
\usepackage{slashed}
\usepackage{leftidx}
\usepackage{xcolor,soul}
\usepackage{txfonts}
\usepackage{charter}
\usepackage{dsfont}
\usepackage{mathrsfs}
\newtheorem{theorem}{Theorem}
\newtheorem{lemma}{Lemma}
\newtheorem{corollary}{Corollary}

\newtheorem{proposition}{Proposition}
\newtheorem{definition}{Definition}
\newtheorem{remark}{Remark}

\usepackage{times}

\hypersetup{
    colorlinks=true,
    linkcolor=blue,
    citecolor=blue,
    urlcolor=blue
}
\begin{document}
\title{Unitary-accessible coherence-erasure distance: Exact qubit solution and a tight qutrit bound}
\author{Dong-Ping Xuan\footnote{E-mail: xuandongping@ccut.edu.cn}}
\affiliation{School of Mathematics and Statistics, Changchun University of Technology, Changchun 130012, Jilin, China}
\author{Hua Nan\footnote{E-mail: nanhua@ybu.edu.cn}}
\affiliation{Department of Mathematics, College of Sciences, Yanbian University, Yanji 133002, China}
\author{Zhi-Xi Wang\footnote{E-mail: wangzhx@cnu.edu.cn}}
\affiliation{School of Mathematical Sciences, Capital Normal University, Beijing 100048, China}
\author{Shao-Ming Fei\footnote{E-mail: feishm@cnu.edu.cn}}
\affiliation{School of Mathematical Sciences, Capital Normal University, Beijing 100048, China}
\affiliation{Max-Planck-Institute for Mathematics in the Sciences, 04103 Leipzig, Germany}
\begin{abstract}
\textbf{
We introduce the unitary-accessible coherence-erasure distance, which
measures the minimum unitary-control cost required to transform a
quantum state into an incoherent state without changing its spectrum.
The target states are restricted to incoherent states on the same
unitary orbit, leading to a geometry different from conventional
state-space proximity.
We establish inequalities relating orbit-restricted quantum transport,
Bures geometry, and the unitary-control distance. These quantities
coincide for pure states but can differ for mixed states.
For qubits, we derive an exact expression and show that, in the nondegenerate
case, the coherence-erasure cost depends on the eigenbasis orientation
rather than on the eigenvalues.
For nondegenerate qutrits, we reduce the problem to an optimization
over the monomial unitary group. We derive the tight universal bound
$\arccos[(2\sqrt2-1)/4]$ and construct an explicit eigenbasis that
attains it. The resulting distance also determines the minimum
coherence-erasure time under bounded unitary driving, giving it a
direct operational meaning.
}
\end{abstract}
\maketitle

\vspace{.4cm}


\section{Introduction}
\label{sec:introduction}

Quantum coherence is a fundamental feature of quantum mechanics and
has been identified as an important resource for quantum information
processing~\cite{SASAKI7501,ABERG150402,BCP,BAGAN160406,YuanHouTang2020,LiSun2018}.
A systematic framework for the quantification of quantum coherence
was introduced by Baumgratz \emph{et al.}~\cite{BCP}. In this
framework, incoherent states are defined with respect to a fixed
reference basis, incoherent operations are identified as the
corresponding free operations, and a set of conditions is proposed
for a proper coherence measure. This resource-theoretic framework has
since been further developed and reviewed in a variety of
settings~\cite{Streltsov2017}.

Within this framework, a standard distance-based approach quantifies
coherence by minimizing the distance from a given state to the set of
incoherent states. Given a suitable distance measure $D$, one defines
 $C_D(\rho) = \min_{\delta\in\mathcal I} D(\rho,\delta)$,
where $\mathcal I$ denotes the set of incoherent states~\cite{BCP,GuoGuanLiHuFei2026}.
This construction characterizes the proximity of $\rho$ to the
incoherent set in state space.

Such a state-space minimization does not take into account whether the
target state is accessible under the allowed dynamics. This matters
when the evolution is restricted to unitary transformations. Since
unitary evolution preserves the spectrum, a state $\rho$ cannot in
general be transformed into an arbitrary incoherent state. If
$\mathcal O_\rho$ denotes the unitary orbit of $\rho$, the accessible
incoherent states are restricted to
$\mathcal I_\rho=\mathcal I\cap\mathcal O_\rho$. Therefore, the distance
to the full incoherent set does not in general reflect the unitary
control required to reach an incoherent state.

Related problems have been studied from several perspectives. Modi and
Gu considered unitary orbits in the study of quantum correlations and
used them to distinguish coherent and incoherent contributions
\cite{ModiGu2013}. Distances between isospectral mixed states have also
been studied in connection with quantum evolution and quantum speed
limits~\cite{AnderssonHeydari2013,Marvian2016,Hornedal2022}.
A closely related quantity is the quantum charging distance introduced
in Ref.~\cite{Gyhm2024}. It gives the minimum unitary control required
to transform one given state into another state with the same spectrum
when the driving strength is bounded. For pure states, this distance
coincides with the Bures angle, while for mixed states the two
distances can be different.


This leads to a natural question in the study of quantum coherence:
what is the minimum unitary-control cost required to erase coherence
while preserving the spectrum? Existing distance-based coherence
measures minimize over the full incoherent set and therefore do not
account for unitary accessibility. Conversely, the quantum charging
distance quantifies the control cost between two prescribed isospectral
states, but does not optimize over incoherent targets.

Motivated by this question, we introduce the unitary-accessible coherence-erasure distance
 $D_{\mathcal I}^{U}(\rho) = \min_{\delta\in\mathcal I_\rho} D_U(\rho,\delta)$,
where $D_U$ is the unitary-control distance between states with the
same spectrum. The quantity $D_{\mathcal I}^{U}(\rho)$ measures the
minimum unitary cost required to transform $\rho$ into an incoherent
state while preserving its spectrum. It therefore combines the
fixed-basis description of coherence with the control cost of
transformations between states on the same unitary orbit.

The term ``coherence erasure'' has also been used for processes in
which coherence is removed by noise or decohering operations
\cite{SinghBeraMisraPati2015}. The situation considered here is
different. We remove coherence relative to a fixed reference basis
using only closed system unitary evolution. As a result, the spectrum,
purity, and entropy of the state remain unchanged during the process.
Accordingly, $D_{\mathcal I}^{U}$ is used here to characterize the
unitary cost of coherence erasure rather than as a coherence measure
in the usual resource-theoretic sense.

The relation between coherence, quantum resources, and evolution time
has also been studied in the context of quantum speed limits
\cite{Marvian2016,Campaioli2022,MaiYu2024}. In contrast, our problem
requires the final state to be incoherent while keeping the same
spectrum as the initial state.

We obtain analytical results for both qubits and qutrits. For
nondegenerate qubits, the distance is obtained exactly and depends
only on the orientation of the eigenbasis relative to the fixed
reference basis. For nondegenerate qutrits, the problem reduces to an
optimization over monomial unitaries and yields a tight universal
bound. We further compare the distance with orbit-restricted
state-space distances and connect it with the minimum
coherence-erasure time under bounded unitary driving.

The paper is organized as follows.
Sec.~\ref{sec:preliminaries} introduces the notation and the geometric
quantities used throughout the paper.
Sec.~\ref{sec:framework} defines the unitary-accessible
coherence-erasure distance and establishes its general properties and
its relation to orbit-restricted distances.
Sec.~\ref{sec:qubit} presents the exact analysis for qubits.
Sec.~\ref{sec:qutrit} treats the qutrit problem and Sec.~\ref{sec:operational_examples} determines its
extremal configuration.
Finally, Sec.~\ref{sec:conclusion} summarizes the results.

\section{Preliminaries}

\label{sec:preliminaries}
We first recall the basic notions used throughout this work~\cite{Nielsenbook}, including
incoherent states and unitary orbits, the Bures angle, quantum
charging distance, quantum optimal transport cost, and the matrix
notation needed in the subsequent analysis.

\paragraph*{Incoherent states and unitary orbits.}
Let $\mathcal H$ be a $d$-dimensional Hilbert space equipped with a
fixed reference basis $\{\ket{i}\}_{i=1}^d$~\cite{XiHuLiFan2018}. We denote by
$\mathcal D(\mathcal H)$ the set of density operators on
$\mathcal H$, and by $U(d)$ the group of $d\times d$ unitary
matrices. The set of incoherent states~\cite{BCP,SharmaPati2018} is
\begin{equation}
    \mathcal I
    =
    \left\{
    \delta=\sum_{i=1}^{d}p_i\ket{i}\bra{i}:
    p_i\ge0,\quad \sum_{i=1}^{d}p_i=1
    \right\}.
    \label{eq:incoherent_set}
\end{equation}
For a density operator $\rho$, its unitary orbit is
\begin{equation}
    \mathcal O_\rho
    =
    \left\{
    U\rho U^\dagger:\;U\in U(d)
    \right\}.
    \label{eq:unitary_orbit}
\end{equation}
Two density operators belong to the same unitary orbit if and only
if they have the same spectrum, including multiplicities~\cite{ModiGu2013,PhysRevA.78.042108}.

\paragraph*{Fidelity and the Bures angle.}
We adopt the squared Uhlmann fidelity~\cite{Uhlmann1976,Jozsa1994},
\begin{equation}
    F(\rho,\sigma)
    =
    \left[
    \operatorname{Tr}
    \sqrt{\sqrt{\rho}\sigma\sqrt{\rho}}
    \right]^2.
    \label{eq:fidelity}
\end{equation}
With this convention, the Bures angle~\cite{Nielsenbook} is defined as
\begin{equation}
    \theta_B(\rho,\sigma)
    =
    \arccos\sqrt{F(\rho,\sigma)}.
    \label{eq:bures_angle}
\end{equation}
For pure states $\rho=|\psi\rangle\langle\psi|$ and
$\sigma=|\phi\rangle\langle\phi|$, this reduces to
$\theta_B(\rho,\sigma)=\arccos|\langle\psi|\phi\rangle|$.

\paragraph*{Quantum charging distance.}
For two isospectral density operators $\rho$ and $\sigma$, the
quantum charging distance introduced in Ref.~\cite{Gyhm2024} is
defined as the minimum time required to transform $\rho$ into
$\sigma$ by unitary evolution under a normalized driving Hamiltonian $V_t$,
\begin{equation}
    D_{\rm CD}(\rho,\sigma)
    =
    \min_{\substack{V_t:\,\|V_t\|_\infty=1}}
    T,
    \label{eq:charging_distance_time}
\end{equation}
where $\sigma =U_T\rho U_T^\dagger$, $U_T =\mathcal T\exp\left( -i\int_0^T V_t\,dt\right)$,
and we set $\hbar=1$. Here, $\|\cdot\|_\infty$ denotes the
operator norm.

An equivalent expression is
\begin{equation}
    D_{\rm CD}(\rho,\sigma)
    =
    \min_{U:\,U\rho U^\dagger=\sigma}
    \left\|
        i\operatorname{ln}U
    \right\|_\infty.
    \label{eq:charging_distance}
\end{equation}
The minimum can always be achieved by a time-independent driving Hamiltonian.

If $\rho$ and $\sigma$ are nondegenerate and have spectral
decompositions $\rho = \sum_{j=1}^{d} \lambda_j\ket{r_j}\bra{r_j}$,  $\sigma = \sum_{j=1}^{d}\lambda_j\ket{s_j}\bra{s_j}$,
then the optimization reduces to
\begin{equation}
    D_{\rm CD}(\rho,\sigma)
    =
    \min_{\phi_1,\ldots,\phi_d}
    \left\|
        i\operatorname{ln}
        \left[
            \sum_{j=1}^{d}
            e^{i\phi_j}
            \ket{s_j}\bra{r_j}
        \right]
    \right\|_\infty.
    \label{eq:CD_nondegenerate}
\end{equation}
Therefore, as long as the eigenspaces and their multiplicities remain
unchanged, the charging distance depends only on the eigenspaces and
not on the corresponding eigenvalues.

For notational simplicity, for pure states we write
$D_{\rm CD}(\psi,\varphi)$ for
$D_{\rm CD}(\ket{\psi}\bra{\psi},\ket{\varphi}\bra{\varphi})$.
In this case,
\begin{equation}
    D_{\rm CD}(\psi,\varphi)
    =
    \arccos|\langle\psi|\varphi\rangle|
    =
    \theta_B(\psi,\varphi).
    \label{eq:CD_pure}
\end{equation}
More generally, for any two isospectral states,
\begin{equation}
    \theta_B(\rho,\sigma)
    \le
    D_{\rm CD}(\rho,\sigma),
    \label{eq:Bures_CD_bound}
\end{equation}
with equality for pure states.
\paragraph*{Quantum optimal transport cost.}
We briefly review the quantum optimal transport cost that will be
used later to compare different notions of distance in coherence
erasure~\cite{ZhangLi2025}.

Let $\rho_A$ and $\sigma_B$ be density operators on $d$-dimensional
Hilbert spaces $\mathcal H_A$ and $\mathcal H_B$, respectively.
A bipartite state
$\chi_{AB}\in\mathcal D(\mathcal H_A\otimes\mathcal H_B)$ is called
a \emph{quantum coupling} of $(\rho_A,\sigma_B)$ if
$\operatorname{Tr}_A\chi_{AB}=\sigma_B,
    \quad
    \operatorname{Tr}_B\chi_{AB}=\rho_A$.
We denote the set of all quantum couplings of
$(\rho_A,\sigma_B)$ by $\mathcal B(\rho_A,\sigma_B)$. This set is
always nonempty, since it contains, for example,
$\rho_A\otimes\sigma_B$.

On $\mathcal H_A\otimes\mathcal H_B$, the SWAP operator is defined as
\begin{equation}
    \mathsf S
    =
    \sum_{i,j=1}^{d}
    \ket{ij}\bra{ji}.
    \label{eq:swap_operator}
\end{equation}
The projectors onto the symmetric and antisymmetric subspaces are
$P_s(d)=\frac{1}{2}(I+\mathsf S)$, $P_a(d)=\frac{1}{2}(I-\mathsf S)$,
respectively~\cite{KlieschRoth2021}. The quantum optimal transport cost introduced in
Ref.~\cite{Friedland2022} is
\begin{equation}
    T(\rho_A,\sigma_B)
    =
    \min_{\chi_{AB}\in\mathcal B(\rho_A,\sigma_B)}
    \operatorname{Tr}
    \left[
        \chi_{AB}P_a(d)
    \right].
    \label{eq:quantum_transport_cost}
\end{equation}
Since $P_a(d)=(I-\mathsf S)/2$, this quantity gives the minimum
weight of a quantum coupling in the antisymmetric subspace.

The cost $T$ is monotone under quantum channels for qubits, but this
property does not hold in general in higher dimensions. To address
this issue, M\"uller-Hermes introduced a stabilized version of the
quantum transport cost
~\cite{mullerhermes2022monotonicityquantumoptimaltransport},
\begin{equation}
    T_s(\rho_A,\sigma_B)
    =
    \inf_{\gamma}
    T(\rho_A\otimes\gamma,\sigma_B\otimes\gamma),
    \label{eq:stabilized_transport_cost}
\end{equation}
where the infimum is taken over ancillary density operators of
arbitrary finite dimension. Following Ref.~\cite{PhysRevA.109.052443},
we also refer to $T_s$ as the revised quantum transport cost.
The stabilized cost can be evaluated using a two-dimensional
maximally mixed ancillary state,
\begin{equation}
    T_s(\rho_A,\sigma_B)
    =
    T\left(
        \rho_A\otimes\frac{I_2}{2},
        \sigma_B\otimes\frac{I_2}{2}
    \right).
    \label{eq:Ts_qubit_ancilla}
\end{equation}
In particular,
\begin{equation}
    T_s(\rho_A,\sigma_B)
    \le
    T(\rho_A,\sigma_B).
    \label{eq:Ts_le_T}
\end{equation}
Moreover, $T_s$ is unitarily invariant and monotone under quantum
channels.

Based on the revised transport cost, Shi
~\cite{PhysRevA.109.052443} introduced the coherence quantifier
\begin{equation}
    \widetilde T(\rho)
    =
    \min_{\delta\in\mathcal I}
    T_s(\rho,\delta),
    \label{eq:transport_coherence_quantifier}
\end{equation}
where $\mathcal I$ denotes the set of incoherent states with respect
to the fixed reference basis. For a pure state $\ket{\psi} = \sum_{i=1}^{d} c_i\ket{i}$,
define $p_{\max}= \max_i |c_i|^2$.
Then
\begin{equation}
    \widetilde T(\psi)
    =
    \frac{1-p_{\max}}{2}.
    \label{eq:pure_transport_coherence}
\end{equation}
Thus, $\widetilde T(\rho)$ characterizes the minimum revised
transport cost from $\rho$ to the full incoherent set. In
Sec.~\ref{sec:framework}, we instead restrict the target set to
incoherent states on the same unitary orbit as $\rho$, leading to an
orbit-restricted transport cost.

\paragraph*{Matrix notation.}
Let $S_d$ denote the symmetric group on $d$ elements, i.e., the group of
all permutations of $\{1,\ldots,d\}$~\cite{HoraObata2007}. For each $\pi\in S_d$, let
$P_\pi$ be the corresponding permutation matrix, defined by
\[
P_\pi |j\rangle = |\pi(j)\rangle,
\qquad j=1,\ldots,d.
\]
We denote the diagonal unitary subgroup by
\begin{equation*}
    \mathcal U_{\rm diag}(d)
    =
    \left\{
    \operatorname{diag}
    (e^{i\phi_1},\ldots,e^{i\phi_d})
    :
    \phi_j\in\mathbb R
    \right\}.
\end{equation*}
The group of monomial unitaries is then defined as~\cite{PhysRevLett.117.030401,FrancaHashagen2018}
\begin{equation}
    \mathcal M_d
    =
    \left\{
    P_\pi D:
    \pi\in S_d,\quad
    D\in\mathcal U_{\rm diag}(d)
    \right\}.
    \label{eq:monomial_group}
\end{equation}
Equivalently, a monomial unitary has exactly one nonzero entry of
unit modulus in each row and each column.

For a matrix $X$, we denote its Hermitian part by
\begin{equation*}
    \operatorname{Re}X
    =
    \frac{X+X^\dagger}{2}.
\end{equation*}
For a Hermitian matrix $H$, $\lambda_{\min}(H)$ denotes its smallest
eigenvalue. The operator norm is denoted by $\|\cdot\|_\infty$.


\section{Distance for unitary coherence erasure}
\label{sec:framework}

Conventional coherence distances characterize how close a quantum
state is to the full set of incoherent states. However, when the
evolution is restricted to unitary dynamics, the spectrum of the
density operator is preserved. Therefore, the incoherent state that
minimizes a conventional coherence distance may not be accessible
from the initial state by unitary evolution.

Motivated by this observation, we introduce a distance that measures
the minimum unitary cost required to reach an incoherent state with
the same spectrum. We first identify the incoherent states that can
be reached through unitary evolution and then define the distance to
this accessible set. For nondegenerate states, we show that the
optimization reduces to the distance between the eigenbasis matrix
and the group of monomial unitaries. Finally, we compare the proposed
distance with the Bures angle and the quantum optimal transport cost.

\subsection{Incoherent states on the unitary orbit}
For a density operator $\rho$, we define the set of 
unitary-accessible incoherent states by
\begin{equation}
    \mathcal I_\rho
    :=
    \mathcal I \cap \mathcal O_\rho,
    \label{eq:accessible_incoherent_set}
\end{equation}
where $\mathcal I$ is the set of incoherent states defined in
Eq.~\eqref{eq:incoherent_set}, and $\mathcal O_\rho$ is the unitary
orbit of $\rho$ defined in Eq.~\eqref{eq:unitary_orbit}.
Thus, $\mathcal I_\rho$ consists of those incoherent states that can
be reached from $\rho$ by a unitary transformation without changing
its spectrum. The structure of $\mathcal I_\rho$ can be characterized explicitly, as
the following Lemma~\ref{prop:accessible_targets}shows.
\begin{lemma}
\label{prop:accessible_targets}
Let $\rho =R\Lambda R^\dagger$, $\Lambda=\operatorname{diag} (\lambda_1,\ldots,\lambda_d).$
Then
\begin{equation}
    \mathcal I_\rho
    =
    \left\{
    P_\pi\Lambda P_\pi^\dagger:
    \pi\in S_d
    \right\}.
    \label{eq:accessible_targets}
\end{equation}
If the spectrum is nondegenerate, these are $d!$ distinct states.
In the presence of degeneracies, different permutations may
correspond to the same incoherent state.
\end{lemma}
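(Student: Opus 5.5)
The plan is to prove the set equality \eqref{eq:accessible_targets} by double inclusion, and then count the elements in the nondegenerate case.

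\emph{Inclusion $\supseteq$.} For any $\pi\in S_d$, the matrix $P_\pi\Lambda P_\pi^\dagger$ is diagonal, because $P_\pi\Lambda P_\pi^\dagger=\sum_j\lambda_j\ket{\pi(j)}\bra{\pi(j)}$. Its entries are nonnegative and sum to one, so it lies in $\mathcal I$. It also lies in $\mathcal O_\rho$: since $\Lambda=R^\dagger\rho R$, we have $P_\pi\Lambda P_\pi^\dagger=(P_\pi R^\dagger)\rho(P_\pi R^\dagger)^\dagger$, and $P_\pi R^\dagger\in U(d)$.

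\emph{Inclusion $\subseteq$.} Take $\delta\in\mathcal I_\rho$. Then $\delta=\sum_i p_i\ket{i}\bra{i}$, and $\delta=U\rho U^\dagger$ for some unitary $U$. By the orbit characterization recalled after \eqref{eq:unitary_orbit}, $\delta$ and $\rho$ have the same spectrum with multiplicities. The spectrum of the diagonal $\delta$ is the multiset $\{p_i\}$, and that of $\rho$ is $\{\lambda_j\}$. Equal multisets means there is a bijection $\pi\in S_d$ with $p_{\pi(j)}=\lambda_j$ for all $j$. Hence $\delta=\sum_j\lambda_j\ket{\pi(j)}\bra{\pi(j)}=P_\pi\Lambda P_\pi^\dagger$.

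The one point needing care is the existence of the bijection when eigenvalues repeat. This follows from equality of multisets: match equal values group by group, choosing any bijection inside each group.

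\emph{Counting.} If the $\lambda_j$ are pairwise distinct, then $P_\pi\Lambda P_\pi^\dagger=P_\sigma\Lambda P_\sigma^\dagger$ forces $\lambda_{\pi^{-1}(i)}=\lambda_{\sigma^{-1}(i)}$ for all $i$. Distinctness then gives $\pi=\sigma$, so there are exactly $d!$ distinct states.

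If some $\lambda_j=\lambda_k$ with $j\neq k$, let $\tau$ be the transposition of $j$ and $k$. Then $\pi$ and $\pi\circ\tau$ give the same state, which establishes the remark about degeneracies.
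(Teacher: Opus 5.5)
Your proof is correct and follows the same route as the paper: double inclusion based on equality of spectra with multiplicities, which forces the diagonal entries of $\delta$ to be a permutation of the $\lambda_j$. You also supply details the paper leaves implicit: an explicit unitary $P_\pi R^\dagger$ for the reverse inclusion (the paper instead uses that equal spectrum implies membership in the orbit), and actual arguments for the $d!$ count and for the degenerate collision between $\pi$ and $\pi\circ\tau$, all of which are sound.
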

\begin{proof}
Every state belonging to $\mathcal O_\rho$ has the same spectrum as
$\rho$. Therefore, any $\delta\in\mathcal I_\rho$ must be diagonal in the reference basis
and have eigenvalues $\{\lambda_1,\ldots,\lambda_d\}$, including their multiplicities.
Hence its diagonal entries can only differ from those of $\Lambda$ by a permutation, and therefore $\delta =P_\pi\Lambda P_\pi^\dagger$ for some $\pi\in S_d$.

Conversely, every matrix $P_\pi\Lambda P_\pi^\dagger$ is diagonal in the reference basis and
has the same spectrum as $\rho$. It therefore belongs to both $\mathcal I$ and $\mathcal O_\rho$.
\end{proof}

\subsection{Distance for unitary coherence erasure}

The restriction to $\mathcal I_\rho$ suggests a natural notion of
the cost required to erase coherence using unitary dynamics.

\begin{definition}
\label{def:unitary_erasure_distance}
The \emph{unitary coherence-erasure distance} of a density operator
$\rho$ is defined as
\begin{equation}
    D_{\mathcal I}^{U}(\rho)
    :=
    \min_{\delta\in\mathcal I_\rho}
    D_{\rm CD}(\rho,\delta).
    \label{eq:unitary_erasure_distance}
\end{equation}
\end{definition}

Here, $\mathcal I_\rho$ is the set of incoherent states on the
unitary orbit of $\rho$, introduced in
Eq.~\eqref{eq:accessible_incoherent_set}, while $D_{\rm CD}$ denotes
the quantum charging distance in Eq.~\eqref{eq:charging_distance}.
Fig.~\ref{fig:unitary_coherence_erasure_schematic} schematically
illustrates the restriction of the optimization to the accessible
incoherent subset $\mathcal I_\rho$.

\begin{figure}[t]
    \centering
    \includegraphics[width=0.95\linewidth]
    {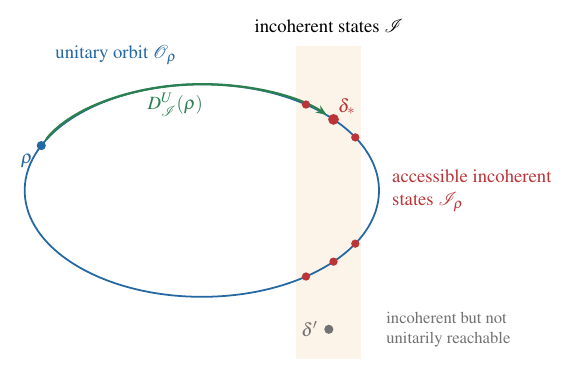}
    \caption{
    Schematic illustration of the unitary coherence-erasure distance
    $D_{\mathcal I}^{U}(\rho)$.
    The unitary orbit $\mathcal O_\rho$ intersects the incoherent set
    $\mathcal I$ at the accessible incoherent states
    $\mathcal I_\rho=\mathcal I\cap\mathcal O_\rho$.
    The optimal target $\delta_\ast\in\mathcal I_\rho$ minimizes the
    quantum charging distance from $\rho$, while incoherent states outside
    $\mathcal O_\rho$ are not unitarily reachable.
    }
    \label{fig:unitary_coherence_erasure_schematic}
\end{figure}

The quantity $D_{\mathcal I}^{U}(\rho)$ measures the minimum
unitary-control distance required to transform $\rho$ into an
incoherent state while preserving its spectrum. In contrast to a
conventional distance to the incoherent set, the optimization in
Eq.~\eqref{eq:unitary_erasure_distance} excludes incoherent target
states that are not reachable from $\rho$ through unitary evolution.

Operationally, $D_{\mathcal I}^{U}(\rho)$ quantifies the minimum
unitary-control effort required to transform $\rho$ into a state
diagonal in the fixed reference basis without altering its spectrum.
Equivalently, under a bound on the operator norm of the driving
Hamiltonian, it determines the minimum time required to eliminate
all coherence with respect to the reference basis by closed-system
unitary dynamics. This process should be distinguished from
dissipative decoherence, since the spectrum, purity, and entropy
remain unchanged throughout the unitary evolution.

\begin{lemma}
\label{prop:basic_properties}
The unitary coherence-erasure distance satisfies the following
properties:
\begin{enumerate}
    \item
    $D_{\mathcal I}^{U}(\rho)\ge0$, and  $D_{\mathcal I}^{U}(\rho)=0 \Longleftrightarrow  \rho\in\mathcal I$.
    \item
    For every monomial unitary $W\in\mathcal M_d$,
    \begin{equation}
        D_{\mathcal I}^{U}(W\rho W^\dagger)
        =
        D_{\mathcal I}^{U}(\rho).
        \label{eq:monomial_invariance}
    \end{equation}
\end{enumerate}
\end{lemma}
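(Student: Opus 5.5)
The plan is to work directly from Definition~\ref{def:unitary_erasure_distance}, using the expression~\eqref{eq:charging_distance} for $D_{\rm CD}$ and the characterization of $\mathcal I_\rho$ in Lemma~\ref{prop:accessible_targets}.

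\emph{Property 1.} Nonnegativity is immediate, since $\|i\ln U\|_\infty\ge0$. The minimum in Eq.~\eqref{eq:unitary_erasure_distance} is over a finite nonempty set, namely at most $d!$ states, so it is attained.

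If $\rho\in\mathcal I$, then $\rho\in\mathcal I_\rho$. Taking $\delta=\rho$ and $U=I$ in Eq.~\eqref{eq:charging_distance} gives $D_{\rm CD}(\rho,\rho)=0$.

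Conversely, suppose $D_{\mathcal I}^{U}(\rho)=0$. Then some $\delta\in\mathcal I_\rho$ has $D_{\rm CD}(\rho,\delta)=0$. By Eq.~\eqref{eq:Bures_CD_bound}, $\theta_B(\rho,\delta)=0$, so $F(\rho,\delta)=1$ and hence $\rho=\delta\in\mathcal I$. A more self-contained alternative is also available. The minimum in Eq.~\eqref{eq:charging_distance} is attained because the set of admissible $U$ is compact and $U\mapsto\|i\ln U\|_\infty$ (principal branch, eigenphases in $(-\pi,\pi]$) is the maximal modulus of eigenphases, which is lower semicontinuous. Then $\|i\ln U\|_\infty=0$ forces all eigenphases to vanish, so $U=I$ and $\delta=\rho$. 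I would use the Bures route to avoid the attainment subtleties.

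\emph{Property 2.} Let $W\in\mathcal M_d$ and set $\rho'=W\rho W^\dagger$. The first step is to show $W\mathcal I_\rho W^\dagger=\mathcal I_{\rho'}$. Since $\rho'$ has the same spectrum as $\rho$, we have $\mathcal O_{\rho'}=\mathcal O_\rho$, and Lemma~\ref{prop:accessible_targets} gives $\mathcal I_{\rho'}=\mathcal I_\rho$. Moreover, conjugating a diagonal state by $W=P_\pi D$ gives $P_\pi\delta P_\pi^\dagger$, which is again a permutation of the spectrum. So $\delta\mapsto W\delta W^\dagger$ is a bijection of $\mathcal I_\rho$ onto itself.

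The second step is unitary covariance of $D_{\rm CD}$: $D_{\rm CD}(W\rho W^\dagger,W\delta W^\dagger)=D_{\rm CD}(\rho,\delta)$ for any unitary $W$. The unitaries $U'$ with $U'\rho' U'^\dagger=W\delta W^\dagger$ are exactly $U'=WUW^\dagger$ with $U\rho U^\dagger=\delta$. Conjugation preserves the spectrum, so $\ln(WUW^\dagger)=W(\ln U)W^\dagger$, and the operator norm is unchanged.

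Combining the two steps,
\begin{align*}
D_{\mathcal I}^U(\rho')&=\min_{\delta'\in\mathcal I_{\rho'}}D_{\rm CD}(\rho',\delta')\\
&=\min_{\delta\in\mathcal I_\rho}D_{\rm CD}(W\rho W^\dagger,W\delta W^\dagger)\\
&=D_{\mathcal I}^U(\rho).
\end{align*}

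The only delicate point is the behavior of the matrix logarithm under conjugation and the branch choice when an eigenvalue equals $-1$. Since the logarithm is defined by functional calculus on the spectrum, conjugation commutes with it for any fixed branch convention, so I expect no real obstacle.
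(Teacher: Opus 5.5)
Your proposal is correct and follows the paper's proof essentially step for step. Property 1 goes through nonnegativity of $D_{\rm CD}$ and the fact that it vanishes only for coincident states, and Property 2 goes through $W$ permuting $\mathcal I_\rho$ together with the unitary covariance of the charging distance, where you also justify the step the paper only asserts ($D_{\rm CD}(\rho,\delta)=0$ forces $\rho=\delta$) soundly via $\theta_B\le D_{\rm CD}$ and $F=1$.
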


\begin{proof}
The first property follows directly from the nonnegativity of the
charging distance. If $\rho\in\mathcal I$, then
$\rho\in\mathcal I_\rho$, and choosing $\delta=\rho$ in
Eq.~\eqref{eq:unitary_erasure_distance} gives
$D_{\mathcal I}^{U}(\rho)=0$. Conversely, if
$D_{\mathcal I}^{U}(\rho)=0$, there exists
$\delta\in\mathcal I_\rho$ such that
$D_{\rm CD}(\rho,\delta)=0$. Since the charging distance vanishes
only when the two states coincide, one has
$\rho=\delta\in\mathcal I$.

Finally, a monomial unitary maps the incoherent set onto itself and
maps $\mathcal O_\rho$ onto
$\mathcal O_{W\rho W^\dagger}$. Together with the unitary invariance
of the charging distance, this gives
Eq.~\eqref{eq:monomial_invariance}.
\end{proof}

\begin{remark}
Here, $D_{\mathcal I}^{U}$ should be understood as a unitary-control
distance for coherence erasure, rather than as a resource-theoretic
coherence measure. It quantifies the minimum unitary effort needed to
reach an incoherent state without changing the spectrum. Hence, the
resource-theoretic monotonicity properties usually required for
coherence measures are not part of its definition.
\end{remark}

\subsection{Eigenbasis representation}

We now consider a nondegenerate state
\[
    \rho=R\Lambda R^\dagger,
    \qquad
    R\in U(d),
\]
where $\Lambda=\operatorname{diag}(\lambda_1,\ldots,\lambda_d)$,
and the columns of $R$ form an eigenbasis
$\{\ket{r_j}\}_{j=1}^d$ of $\rho$~\cite{KleinmannKampermannMeyerBruss2006}. Equivalently,
$R^\dagger = \sum_{j=1}^d \ket{j}\bra{r_j}$.

For an accessible incoherent target state
\[
    \delta_\pi
    =
    P_\pi\Lambda P_\pi^\dagger,
    \qquad
    \pi\in S_d,
\]
an eigenvector corresponding to $\lambda_j$ can be chosen as
$\ket{s_j}=P_\pi\ket{j}$.
Therefore, using the nondegenerate charging distance
Eq.~\eqref{eq:CD_nondegenerate}, we obtain
\begin{equation}
\begin{aligned}
    D_{\rm CD}(\rho,\delta_\pi)
    &=
    \min_{\phi_1,\ldots,\phi_d}
    \left\|
    i\operatorname{ln}
    \left[
        \sum_{j=1}^d
        e^{i\phi_j}
        P_\pi\ket{j}\bra{r_j}
    \right]
    \right\|_\infty
    \\
    &=
    \min_{D\in\mathcal U_{\rm diag}(d)}
    \left\|
    i\operatorname{ln}
    \left(
        P_\pi D R^\dagger
    \right)
    \right\|_\infty .
\end{aligned}
\label{eq:CD_target_phase}
\end{equation}
Here $ D= \operatorname{diag} (e^{i\phi_1},\ldots,e^{i\phi_d})$.

Recall the monomial unitary group $\mathcal M_d$ defined in
Eq.~\eqref{eq:monomial_group}. Each matrix
$M=P_\pi D\in\mathcal M_d$ contains the two degrees of freedom
appearing in Eq.~\eqref{eq:CD_target_phase}. The permutation
$P_\pi$ selects the accessible incoherent target, whereas the
diagonal unitary $D$ accounts for the arbitrary phases of its
eigenvectors. Therefore, minimizing over all accessible incoherent
targets and all eigenvector phases is equivalent to minimizing over
$\mathcal M_d$.

This motivates the following distance from the eigenbasis matrix
$R$ to the monomial unitary group:
\begin{equation}
    D_{\mathcal M}(R)
    :=
    \min_{M\in\mathcal M_d}
    \left\|
        i\operatorname{ln}
        \left(
            MR^\dagger
        \right)
    \right\|_\infty .
    \label{eq:monomial_distance}
\end{equation}
The quantity $D_{\mathcal M}(R)$ therefore measures the minimum
unitary rotation required to align the eigenbasis of $\rho$ with
the incoherent reference basis, up to permutations and phase
changes of the reference basis vectors.

\begin{lemma}
\label{prop:monomial_reduction}
For every nondegenerate state
$\rho=R\Lambda R^\dagger$, the unitary-accessible
coherence-erasure distance satisfies
\begin{equation}
    D_{\mathcal I}^{U}(\rho)
    =
    D_{\mathcal M}(R).
    \label{eq:monomial_reduction}
\end{equation}
Consequently, within the nondegenerate sector,
$D_{\mathcal I}^{U}(\rho)$ depends only on the eigenbasis of $\rho$
relative to the incoherent reference basis, and not on the numerical
values of its eigenvalues.
\end{lemma}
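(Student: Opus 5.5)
The plan is to chain together three facts already established: Definition~\ref{def:unitary_erasure_distance}, Lemma~\ref{prop:accessible_targets}, and the phase-reduced formula Eq.~\eqref{eq:CD_target_phase}. The argument is essentially a rewriting of a double minimum as a single minimum over $\mathcal M_d$.

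First, since $\rho$ is nondegenerate, Lemma~\ref{prop:accessible_targets} identifies $\mathcal I_\rho$ with the $d!$ distinct states $\delta_\pi=P_\pi\Lambda P_\pi^\dagger$, $\pi\in S_d$. The minimum in Eq.~\eqref{eq:unitary_erasure_distance} is therefore a finite minimum over $\pi$. Each $\delta_\pi$ is nondegenerate with eigenvectors $\ket{s_j}=P_\pi\ket{j}$ for the eigenvalue $\lambda_j$. This labelling is canonical: the eigenvalue $\lambda_j$ must be matched with $\lambda_j$, so no further relabelling freedom arises. Hence Eq.~\eqref{eq:CD_nondegenerate} applies and yields Eq.~\eqref{eq:CD_target_phase}.

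One point needs checking in Eq.~\eqref{eq:CD_target_phase}: the operator $\sum_j e^{i\phi_j}P_\pi\ket{j}\bra{r_j}$ equals $P_\pi D R^\dagger$, using $R^\dagger=\sum_j\ket{j}\bra{r_j}$. I would also note why the restriction to this family is exact. Any unitary $U$ with $U\rho U^\dagger=\delta_\pi$ must map each one-dimensional eigenspace $\mathrm{span}\{\ket{r_j}\}$ onto $\mathrm{span}\{P_\pi\ket{j}\}$. Consequently $U$ has exactly this form for some phases, and Eq.~\eqref{eq:charging_distance} reduces to Eq.~\eqref{eq:CD_target_phase}.

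Combining these steps gives
\begin{equation*}
D_{\mathcal I}^{U}(\rho)=\min_{\pi\in S_d}\min_{D\in\mathcal U_{\rm diag}(d)}\left\|i\ln\left(P_\pi D R^\dagger\right)\right\|_\infty .
\end{equation*}
By Eq.~\eqref{eq:monomial_group}, the map $(\pi,D)\mapsto P_\pi D$ is a surjection onto $\mathcal M_d$. The iterated minimum is thus exactly the minimum over $M\in\mathcal M_d$, which is $D_{\mathcal M}(R)$ by Eq.~\eqref{eq:monomial_distance}.

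The main technical point is well-definedness. The minima should be attained, and the value must not depend on the branch of the logarithm. Following the convention of Eq.~\eqref{eq:charging_distance}, $\|i\ln U\|_\infty$ is read as $\max_k|\theta_k|$ with eigenphases $\theta_k\in(-\pi,\pi]$. This function is continuous in the eigenvalues, hence lower semicontinuous on $U(d)$. Since $\mathcal M_d$ is compact, being a finite union of tori, the minimum exists.

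I would also check that the result does not depend on the choice of $R$. Another eigenbasis matrix is $R D'$ with $D'$ diagonal unitary, since the spectrum is nondegenerate. Then $M(RD')^\dagger=(MD'^\dagger)R^\dagger$, and $MD'^\dagger$ ranges over all of $\mathcal M_d$ as $M$ does. Hence $D_{\mathcal M}(RD')=D_{\mathcal M}(R)$.

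The consequence stated in the lemma then follows immediately. The right-hand side $D_{\mathcal M}(R)$ involves only $R$, that is, the eigenbasis up to phases. The eigenvalues enter only through the requirement of nondegeneracy.
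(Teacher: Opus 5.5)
Your proposal is correct and follows the paper's proof. Both use Lemma~\ref{prop:accessible_targets} to enumerate the targets $\delta_\pi$, apply Eq.~\eqref{eq:CD_target_phase} to each target, and merge the double minimum over $(\pi,D)$ into a single minimum over $\mathcal M_d$; your additional checks are valid and make explicit what the paper leaves implicit, namely that every unitary carrying $\rho$ to $\delta_\pi$ has the form $P_\pi D R^\dagger$, that the minimum is attained by compactness, and that the result is invariant under $R\mapsto RD'$.
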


\begin{proof}
By Lemma~\ref{prop:accessible_targets}, the unitary-accessible
incoherent states are $\delta_\pi = P_\pi\Lambda P_\pi^\dagger,  \pi\in S_d$.
Hence, using Eq.~\eqref{eq:CD_target_phase},
$$ D_{\mathcal I}^{U}(\rho)=
    \min_{\delta_\pi\in\mathcal I_\rho}
    D_{\rm CD}(\rho,\delta_\pi)
    =
    \min_{\pi\in S_d}
    \min_{D\in\mathcal U_{\rm diag}(d)}
    \left\|
        i\operatorname{ln}
        \left(
            P_\pi D R^\dagger
        \right)
    \right\|_\infty.$$
Since every product $P_\pi D$ is an element of $\mathcal M_d$, and
conversely every $M\in\mathcal M_d$ can be written in this form, the
two minimizations can be combined into
\begin{equation*}
\begin{aligned}
    D_{\mathcal I}^{U}(\rho)=
    \min_{M\in\mathcal M_d}
    \left\|
        i\operatorname{ln}
        \left(
            MR^\dagger
        \right)
    \right\|_\infty
   = D_{\mathcal M}(R),
\end{aligned}
\end{equation*}
which proves the result.
\end{proof}

Fig.~\ref{fig:eigenbasis_reduction} summarizes this reduction
geometrically in the nondegenerate sector.

\begin{figure}[t]
    \centering
    \includegraphics[width=\linewidth]
    {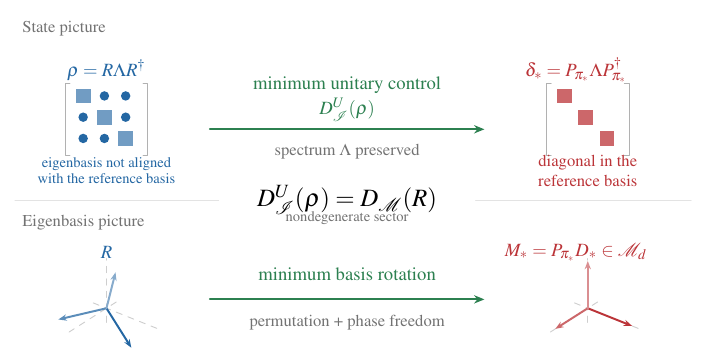}
    \caption{
    Schematic illustration of the eigenbasis reduction of unitary
    coherence erasure for a nondegenerate state.
    At the state level, the transformation from
    $\rho=R\Lambda R^\dagger$ to the optimal accessible incoherent state
    $\delta_\ast$ preserves the spectrum $\Lambda$.
    Equivalently, the problem reduces to rotating the eigenbasis matrix
    $R$ to the monomial unitary set $\mathcal M_d$, so that
    $D_{\mathcal I}^{U}(\rho)=D_{\mathcal M}(R)$.
    }
    \label{fig:eigenbasis_reduction}
\end{figure}

\begin{remark}
\label{rem:degeneracy}
The reduction in Lemma~\ref{prop:monomial_reduction} relies on
the nondegeneracy of $\rho$. When degeneracies are present, the
eigenvectors within each degenerate eigenspace can be mixed by
arbitrary unitary transformations without changing the state.
Consequently, the freedom in the charging distance optimization is
larger than the permutations and individual phase changes represented
by $\mathcal M_d$, and the above monomial unitary reduction no longer
applies directly. This additional freedom will be seen explicitly in
the qubit case discussed in Sec.~\ref{sec:qubit}.
\end{remark}
\subsection{Relations to the Bures angle and quantum transport cost}
To compare the unitary coherence-erasure distance with other
state-space quantities, we restrict the Bures angle and the revised
quantum transport cost to the same set of accessible incoherent
states $\mathcal I_\rho$.

We first define the orbit-restricted revised transport cost as
\begin{equation}
    \mathcal T_{\rm orb}(\rho)
    :=
    \min_{\delta\in\mathcal I_\rho}
    T_s(\rho,\delta).
    \label{eq:T_orb}
\end{equation}
Similarly, the orbit-restricted Bures angle is defined by
\begin{equation}
    \Theta_B^{\rm orb}(\rho)
    :=
    \min_{\delta\in\mathcal I_\rho}
    \theta_B(\rho,\delta).
    \label{eq:Bures_orb}
\end{equation}
Thus, $\mathcal T_{\rm orb}(\rho)$ and
$\Theta_B^{\rm orb}(\rho)$ measure, respectively, the minimum revised
transport cost and the minimum Bures angle from $\rho$ to an
incoherent state on the same unitary orbit. Unlike the corresponding
quantities obtained by minimizing over the full incoherent set
$\mathcal I$, both quantities incorporate the spectral constraint
imposed by unitary evolution.

Since $\mathcal I_\rho\subseteq\mathcal I$, the transport-based
coherence quantifier $\widetilde T(\rho)$ defined in
Eq.~\eqref{eq:transport_coherence_quantifier} satisfies
$\widetilde T(\rho)\le \mathcal T_{\rm orb}(\rho)$.
This follows directly because $\widetilde T(\rho)$ is minimized over
the full incoherent set, whereas $\mathcal T_{\rm orb}(\rho)$ is
minimized over the smaller set $\mathcal I_\rho$.

We next compare
$\mathcal T_{\rm orb}(\rho)$,
$\Theta_B^{\rm orb}(\rho)$, and
$D_{\mathcal I}^{U}(\rho)$.
Since all three quantities involve minimization over the same target
set $\mathcal I_\rho$, their relations can be obtained from
corresponding bounds between the underlying state-space quantities.
We begin with the revised transport cost and the Bures angle.
\begin{lemma}
\label{lem:transport_bures}
For any two density operators $\rho$ and $\sigma$ on the same
finite-dimensional Hilbert space,
\begin{equation}
    \arcsin\sqrt{2T_s(\rho,\sigma)}
    \le
    \theta_B(\rho,\sigma).
    \label{eq:pairwise_transport_bures}
\end{equation}
\end{lemma}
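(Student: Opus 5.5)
The plan is to prove the equivalent squared form
\[
2T_s(\rho,\sigma)\le 1-F(\rho,\sigma)=\sin^2\theta_B(\rho,\sigma).
\]
It is equivalent because $\theta_B\in[0,\pi/2]$, $\arcsin$ is increasing on $[0,1]$, and the inequality itself shows $2T_s\le 1$, so the left-hand side of Eq.~\eqref{eq:pairwise_transport_bures} is well defined. The strategy has three steps: compute the transport cost exactly enough for pure states, lift $\rho,\sigma$ to Uhlmann-optimal purifications, and come back down using the channel monotonicity of $T_s$ recalled in Sec.~\ref{sec:preliminaries}.

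\emph{Step 1 (pure states).} For unit vectors $\ket{\psi},\ket{\varphi}$ in a space of dimension $n$, the product coupling $\chi=\ket{\psi}\bra{\psi}\otimes\ket{\varphi}\bra{\varphi}$ is admissible. Using $\bra{\psi\varphi}\mathsf S\ket{\psi\varphi}=|\langle\psi|\varphi\rangle|^2$ we get
\[
\operatorname{Tr}[\chi P_a(n)]=\tfrac12\bigl(1-|\langle\psi|\varphi\rangle|^2\bigr).
\]
By Eqs.~\eqref{eq:quantum_transport_cost} and \eqref{eq:Ts_le_T} this gives
\[
T_s(\psi,\varphi)\le T(\psi,\varphi)\le \tfrac12\bigl(1-|\langle\psi|\varphi\rangle|^2\bigr).
\]
Only this upper bound is needed.

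\emph{Step 2 (purification).} Let $\mathcal H_R\cong\mathcal H$ be a reference system. By Uhlmann's theorem there are purifications $\ket{\Psi_\rho},\ket{\Psi_\sigma}\in\mathcal H\otimes\mathcal H_R$ of $\rho$ and $\sigma$ with $|\langle\Psi_\rho|\Psi_\sigma\rangle|^2=F(\rho,\sigma)$, using the squared-fidelity convention of Eq.~\eqref{eq:fidelity}. Both purifications live on the same $d^2$-dimensional space, so $T$ and $T_s$ are defined for them as in Sec.~\ref{sec:preliminaries}.

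\emph{Step 3 (monotonicity).} Apply the partial trace $\operatorname{Tr}_R$, which is the same quantum channel for both arguments. Monotonicity of $T_s$ under channels, combined with Step~1, gives
\[
T_s(\rho,\sigma)\le T_s(\Psi_\rho,\Psi_\sigma)\le \tfrac12\bigl(1-F(\rho,\sigma)\bigr),
\]
which is the claim.

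The main point needing care is Step~3. Monotonicity is used for a channel that changes the dimension, from $\mathcal H\otimes\mathcal H_R$ to $\mathcal H$, so one must check that the stabilized cost of Ref.~\cite{mullerhermes2022monotonicityquantumoptimaltransport} is monotone under general CPTP maps between different spaces, and not only under channels on a fixed space. If only fixed-space monotonicity were available, I would instead reduce directly. The coupling $\ket{\Psi_\rho}\bra{\Psi_\rho}\otimes\ket{\Psi_\sigma}\bra{\Psi_\sigma}\otimes\tfrac{I_2}{2}\otimes\tfrac{I_2}{2}$ would have to be replaced by a suitable coupling of $\rho\otimes\tfrac{I_2}{2}$ and $\sigma\otimes\tfrac{I_2}{2}$. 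Its antisymmetric weight would then be controlled through the operator inequality $\operatorname{Tr}_{RR'}$ applied to the SWAP on the doubled system. This route is messier, so I would first try to confirm the general monotonicity statement.
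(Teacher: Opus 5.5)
Your proof is correct, but it reaches the key inequality $2T_s(\rho,\sigma)\le 1-F(\rho,\sigma)$ by a different route. The final trigonometric step is the same as in the paper.

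\emph{The paper's route.} It quotes the Friedland \emph{et al.} bound $T(\rho,\sigma)\le\frac{1-F(\rho,\sigma)}{2}$ for the unstabilized cost and then uses $T_s\le T$. This is a one-line argument and gives the stronger statement for $T$ itself. However, it rests entirely on that external theorem.

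\emph{Your route.} You derive the bound for $T_s$ directly from three ingredients: the exact pure-state value, Uhlmann purifications, and data processing of $T_s$ under $\operatorname{Tr}_R$. This makes it transparent why the squared fidelity appears and why equality holds for pure states. The price is that stabilization is essential. Your argument cannot be run for $T$, since $T$ itself is not monotone under partial traces in general.

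\emph{Your dimension-change concern.} This is not a real obstacle: monotonicity of $T_s$ holds for general channels, and partial traces are precisely the key case. Your fallback can also be made explicit using only the definition of $T_s$ as an infimum. Set
\[
X_s=\operatorname{Tr}_{RR'}\bigl[(I\otimes P_s^{RR'})(\Psi_\rho\otimes\Psi_\sigma)\bigr],
\]
and define $X_a$ likewise with $P_a^{RR'}$. Then $X_s,X_a\ge0$ and $X_s+X_a=\rho\otimes\sigma$. Moreover,
\[
\operatorname{Tr}[(X_s-X_a)\mathsf S]=|\langle\Psi_\rho|\Psi_\sigma\rangle|^2=F.
\]
It follows that
\[
\chi=X_s\otimes\tfrac13 P_s(2)+X_a\otimes P_a(2)
\]
is a coupling of $\rho\otimes\frac{I_2}{2}$ and $\sigma\otimes\frac{I_2}{2}$. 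Its antisymmetric weight is exactly $\frac{1-F}{2}$, which gives $T_s(\rho,\sigma)\le\frac{1-F}{2}$ directly.
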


\begin{proof}
The quantum optimal transport cost satisfies
\begin{equation*}
    T(\rho,\sigma)
    \le
    \frac{1-F(\rho,\sigma)}{2},
\end{equation*}
where $F$ is the squared Uhlmann fidelity
~\cite{Friedland2022}. Since the revised transport cost satisfies
$T_s(\rho,\sigma)\le T(\rho,\sigma)$, we obtain
\begin{equation*}
    2T_s(\rho,\sigma)
    \le
    1-F(\rho,\sigma).
\end{equation*}
Using
$\theta_B(\rho,\sigma)=\arccos\sqrt{F(\rho,\sigma)}$, we have
\begin{equation*}
    \sqrt{1-F(\rho,\sigma)}
    =
    \sin\theta_B(\rho,\sigma).
\end{equation*}
Since $0\le\theta_B(\rho,\sigma)\le\pi/2$, it follows that
\begin{equation*}
    \arcsin\sqrt{2T_s(\rho,\sigma)}
    \le
    \theta_B(\rho,\sigma).
\end{equation*}
\end{proof}
We now apply these bounds to the common target set
$\mathcal I_\rho$. Combining them with the Bures lower bound for the
charging distance gives the following relation among the three
quantities.
\begin{theorem}
\label{thm:geometric_hierarchy}
For every finite-dimensional density operator $\rho$,
\begin{equation}  
    \arcsin
    \sqrt{
        2\mathcal T_{\rm orb}(\rho)
    }
    \le
    \Theta_B^{\rm orb}(\rho)
    \le
    D_{\mathcal I}^{U}(\rho)
    .
    \label{eq:orbit_geometric_hierarchy}
\end{equation}
\end{theorem}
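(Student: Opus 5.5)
The plan is to prove both inequalities in Eq.~\eqref{eq:orbit_geometric_hierarchy} pointwise over the common target set $\mathcal I_\rho$, and then pass to the minima. The key structural fact is that all three quantities $\mathcal T_{\rm orb}(\rho)$, $\Theta_B^{\rm orb}(\rho)$ and $D_{\mathcal I}^{U}(\rho)$ are minima over the \emph{same} set $\mathcal I_\rho$. By Lemma~\ref{prop:accessible_targets}, this set is finite, consisting of at most $d!$ states $P_\pi\Lambda P_\pi^\dagger$. Hence every minimum is attained, and we never have to worry about infima over a continuum. Every $\delta\in\mathcal I_\rho$ is isospectral to $\rho$, so $D_{\rm CD}(\rho,\delta)$ is well defined.

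For the right inequality, fix $\delta\in\mathcal I_\rho$. Since $\rho$ and $\delta$ are isospectral, Eq.~\eqref{eq:Bures_CD_bound} gives $\theta_B(\rho,\delta)\le D_{\rm CD}(\rho,\delta)$. Let $\delta_\ast\in\mathcal I_\rho$ attain the minimum in Definition~\ref{def:unitary_erasure_distance}. Then
\[
\Theta_B^{\rm orb}(\rho)\le\theta_B(\rho,\delta_\ast)\le D_{\rm CD}(\rho,\delta_\ast)=D_{\mathcal I}^{U}(\rho).
\]

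For the left inequality, let $\delta_B\in\mathcal I_\rho$ attain the minimum in Eq.~\eqref{eq:Bures_orb}. Lemma~\ref{lem:transport_bures} gives $\arcsin\sqrt{2T_s(\rho,\delta_B)}\le\theta_B(\rho,\delta_B)=\Theta_B^{\rm orb}(\rho)$. Moreover $\mathcal T_{\rm orb}(\rho)\le T_s(\rho,\delta_B)$ by definition of the minimum. The function $x\mapsto\arcsin\sqrt{2x}$ is monotone nondecreasing, so we obtain
\[
\arcsin\sqrt{2\mathcal T_{\rm orb}(\rho)}\le\arcsin\sqrt{2T_s(\rho,\delta_B)}\le\Theta_B^{\rm orb}(\rho).
\]

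The only step needing care is that $\arcsin\sqrt{2x}$ is well defined, i.e.\ that $0\le 2T_s\le 1$. Nonnegativity is clear, since $T_s$ is the infimum of weights $\operatorname{Tr}[\chi P_a]\ge 0$. The upper bound follows from the chain in the proof of Lemma~\ref{lem:transport_bures}, namely $2T_s\le 1-F\le 1$. Combining the two inequalities yields Eq.~\eqref{eq:orbit_geometric_hierarchy}. I expect no real obstacle: the main point is recognizing that the pairwise bounds must be evaluated at the appropriate minimizers, $\delta_\ast$ for the right inequality and $\delta_B$ for the left, rather than at a single common target.
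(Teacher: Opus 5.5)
Your proof is correct and follows the paper's route. You apply Lemma~\ref{lem:transport_bures} and Eq.~\eqref{eq:Bures_CD_bound} pointwise for each $\delta\in\mathcal I_\rho$, then pass to minima over this common (finite) target set using the monotonicity of $x\mapsto\arcsin\sqrt{2x}$. Your explicit minimizers $\delta_\ast$, $\delta_B$ and your check that $0\le 2T_s\le 1$ only make explicit what the paper leaves implicit.
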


\begin{proof}
By Lemma~\ref{lem:transport_bures}, for any $\arcsin\sqrt{2T_s(\rho,\delta)}\le\theta_B(\rho,\delta)$.
Since the function $x\longmapsto\arcsin\sqrt{2x}$
is monotonically increasing on the relevant interval, minimization
over the same target set $\mathcal I_\rho$ yields
\begin{equation*}
\begin{aligned}
    \arcsin
    \sqrt{
        2\mathcal T_{\rm orb}(\rho)
    }
    &=
    \arcsin
    \sqrt{
        2
        \min_{\delta\in\mathcal I_\rho}
        T_s(\rho,\delta)
    }
    \\
    &\le
    \min_{\delta\in\mathcal I_\rho}
    \theta_B(\rho,\delta)
    \\
    &=
    \Theta_B^{\rm orb}(\rho).
\end{aligned}
\end{equation*}
Furthermore, by definition every
$\delta\in\mathcal I_\rho$ belongs to the same unitary orbit as
$\rho$ and is therefore isospectral to $\rho$. Hence
Eq.~\eqref{eq:Bures_CD_bound} gives $\theta_B(\rho,\delta) \le D_{\rm CD}(\rho,\delta), \delta\in\mathcal I_\rho$.

Taking the minimum over the common target set gives
   $$ \Theta_B^{\rm orb}(\rho)
    =
    \min_{\delta\in\mathcal I_\rho}
    \theta_B(\rho,\delta)
   \le \min_{\delta\in\mathcal I_\rho}
    D_{\rm CD}(\rho,\delta)
= D_{\mathcal I}^{U}(\rho),$$
which completes the proof.
\end{proof}
The two inequalities in
Theorem~\ref{thm:geometric_hierarchy} become simultaneously tight
for pure states. In this case, the unitary-accessible incoherent set
has a particularly simple form, allowing all three quantities to be
evaluated explicitly.
For pure states, the above hierarchy becomes an equality, leading to
the following Corollary~\ref{cor:pure_state_equality}.
\begin{corollary}
\label{cor:pure_state_equality}
Let  $\ket{\psi}
    =
    \sum_{i=1}^{d}
    \lambda_i\ket{i},
    p_{\max}
    :=
    \max_i |\lambda_i|^2$.
Then
\begin{equation}
    \arcsin\sqrt{2\mathcal T_{\rm orb}(\psi)}
    =
    \Theta_B^{\rm orb}(\psi)
    =
    D_{\mathcal I}^{U}(\psi)
    =
    \arccos\sqrt{p_{\max}}.
    \label{eq:pure_state_distance_equality}
\end{equation}
Moreover,
\begin{equation}
    \mathcal T_{\rm orb}(\psi)
    =
    \widetilde T(\psi)
    =
    \frac{1-p_{\max}}{2}.
    \label{eq:pure_state_transport_equality}
\end{equation}
\end{corollary}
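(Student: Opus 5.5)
The plan is to evaluate each of the three quantities directly on the accessible set. For a pure state $\rho=\ket{\psi}\bra{\psi}$ the spectrum is $(1,0,\ldots,0)$. By Lemma~\ref{prop:accessible_targets}, the accessible incoherent states are exactly the $d$ projectors $\ket{k}\bra{k}$, $k=1,\ldots,d$, since distinct permutations that fix the position of the eigenvalue $1$ give the same state. All three minimizations therefore become minimizations over $k$ of closed-form pure-state expressions.

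\textbf{Charging distance and Bures angle.} By Eq.~\eqref{eq:CD_pure},
\[
D_{\rm CD}(\psi,k)=\theta_B(\psi,k)=\arccos|\langle k|\psi\rangle|=\arccos|\lambda_k|.
\]
Since $\arccos$ is decreasing on $[0,1]$, minimizing over $k$ gives
\[
\Theta_B^{\rm orb}(\psi)=D_{\mathcal I}^{U}(\psi)=\arccos\sqrt{p_{\max}}.
\]

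\textbf{Transport cost.} Here I need $T_s(\psi,\ket{k}\bra{k})$ exactly. Theorem~\ref{thm:geometric_hierarchy} already gives the upper bound $2T_s(\psi,k)\le 1-|\lambda_k|^2$. For the lower bound I would use two facts.
\begin{itemize}
\item[(i)] By Eq.~\eqref{eq:pure_transport_coherence}, $\widetilde T(\psi)=(1-p_{\max})/2$.
\item[(ii)] Since $\mathcal I_\psi\subseteq\mathcal I$, we have $\widetilde T(\psi)\le \mathcal T_{\rm orb}(\psi)$.
\end{itemize}
Combining these with the upper bound yields
\[
\frac{1-p_{\max}}{2}=\widetilde T(\psi)\le \mathcal T_{\rm orb}(\psi)\le \min_k \frac{1-|\lambda_k|^2}{2}=\frac{1-p_{\max}}{2}.
\]
This proves Eq.~\eqref{eq:pure_state_transport_equality}. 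It then follows that
\[
\arcsin\sqrt{2\mathcal T_{\rm orb}(\psi)}=\arcsin\sqrt{1-p_{\max}}=\arccos\sqrt{p_{\max}},
\]
which completes the chain in Eq.~\eqref{eq:pure_state_distance_equality}. In this way the pairwise value $T_s(\psi,k)$ never has to be computed; only the sandwich between the upper and lower bounds is needed.

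\textbf{Main obstacle.} The delicate point is justifying the upper bound $\mathcal T_{\rm orb}\le\min_k(1-|\lambda_k|^2)/2$. This relies on the pairwise inequality $2T_s\le 1-F$ quoted in the proof of Lemma~\ref{lem:transport_bures}, namely $T\le(1-F)/2$ from Ref.~\cite{Friedland2022} together with Eq.~\eqref{eq:Ts_le_T}, rather than on Theorem~\ref{thm:geometric_hierarchy} alone. If one prefers a self-contained argument, an explicit coupling works instead: take $\chi=\ket{\Phi}\bra{\Phi}$ with $\ket{\Phi}=\ket{\psi}\otimes\ket{k}$. This is a valid coupling for pure marginals, and
\[
\operatorname{Tr}[\chi P_a]=\frac{1-|\langle\psi|k\rangle|^2}{2}.
\]
Hence $T\le(1-|\lambda_k|^2)/2$, and Eq.~\eqref{eq:Ts_le_T} transfers the bound to $T_s$.
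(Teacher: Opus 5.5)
Your proof is correct and follows essentially the same route as the paper. The paper also reduces $\mathcal I_\psi$ to the projectors $\ket{i}\bra{i}$, sandwiches $\widetilde T(\psi)\le\mathcal T_{\rm orb}(\psi)\le T_s(\psi,\delta_0)\le T(\psi,\delta_0)=(1-p_{\max})/2$ using the product coupling $\psi\otimes\delta_0$ (which it notes is the only coupling of two pure marginals), and obtains the Bures and charging-distance values from the pure-state formula $D_{\rm CD}=\theta_B=\arccos|\langle i|\psi\rangle|$; the ``obstacle'' you flag is not a real one, since Theorem~\ref{thm:geometric_hierarchy} combined with $\Theta_B^{\rm orb}(\psi)=\arccos\sqrt{p_{\max}}$ already gives $2\mathcal T_{\rm orb}(\psi)\le 1-p_{\max}$, and your explicit product-coupling bound is exactly the paper's argument.
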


\begin{proof}
For the pure state $\ket{\psi}$, the spectrum is
$(1,0,\ldots,0)$. Hence, the incoherent states on the same unitary
orbit are precisely the reference-basis projectors,
$\mathcal I_\psi =\left\{ \ket{i}\bra{i}: i=1,\ldots,d \right\}$.

Let $i_0$ satisfy
$|\lambda_{i_0}|^2=p_{\max}$, and define
$\delta_0=\ket{i_0}\bra{i_0}$.
Since $\mathcal I_\psi\subseteq\mathcal I$, $\widetilde T(\psi)\le\mathcal T_{\rm orb}(\psi)$.
On the other hand, $\mathcal T_{\rm orb}(\psi) \le T_s(\psi,\delta_0)\le T(\psi,\delta_0).$

Because both $\psi$ and $\delta_0$ are pure states, their only
quantum coupling is the product state
$\psi\otimes\delta_0$. Therefore,
\begin{equation*}
\begin{aligned}
    T(\psi,\delta_0)
    =
    \operatorname{Tr}
    \left[
        (\psi\otimes\delta_0)P_a
    \right] 
    =
    \frac{
        1-\operatorname{Tr}(\psi\delta_0)
    }{2}
    =
    \frac{1-p_{\max}}{2}.
\end{aligned}
\end{equation*}
Together with $\widetilde T(\psi)=\frac{1-p_{\max}}{2}$,
we obtain
\begin{equation*}
    \mathcal T_{\rm orb}(\psi)
    =
    \widetilde T(\psi)
    =
    \frac{1-p_{\max}}{2}.
\end{equation*}

Next, minimizing the Bures angle over $\mathcal I_\psi$ gives
\begin{equation*}
\begin{aligned}
    \Theta_B^{\rm orb}(\psi)
    =
    \min_i
    \arccos|\langle i|\psi\rangle|
    =
    \arccos\sqrt{p_{\max}}.
\end{aligned}
\end{equation*}
For pure states, the charging distance coincides with the Bures
angle. Hence, $ D_{\mathcal I}^{U}(\psi)  = \min_i D_{\rm CD} \left( \psi,\ket{i}\bra{i} \right)
 =\arccos\sqrt{p_{\max}}.$
Finally, $$\arcsin\sqrt{2\mathcal T_{\rm orb}(\psi)} = \arcsin\sqrt{1-p_{\max}} = \arccos\sqrt{p_{\max}},$$
which completes the proof.
\end{proof}
Theorem~\ref{thm:geometric_hierarchy} places the three constructions
within a common unitary-accessible framework. The stabilized
transport cost characterizes proximity through quantum couplings,
the Bures angle measures state-space distinguishability, whereas
$D_{\mathcal I}^{U}$ additionally incorporates the requirement that
the incoherent target be reached through a spectrum-preserving
unitary transformation.

For pure states, the rescaled transport quantity
$\arcsin\sqrt{2\mathcal T_{\rm orb}}$, the orbit-restricted Bures
angle, and the unitary-accessible coherence-erasure distance coincide
exactly. For mixed states, these quantities need not coincide,
because state-space proximity and unitary accessibility impose
different geometric constraints. In the next section, we analyze
this distinction explicitly for qubit states.

\section{Unitary coherence erasure for qubits}
\label{sec:qubit}
We now consider qubit states, for which the unitary coherence-erasure
distance can be obtained exactly. The Bloch-sphere representation
gives a simple geometric picture and highlights the difference
between unitary and state-space distances for mixed states, as well
as the role of spectral degeneracy.
\subsection{Exact distance}
Consider a qubit state in Bloch form~\cite{LiuSunXuZhao2026},
\begin{equation*}
    \rho
    =
    \frac{1}{2}
    \left(
        I+\mathbf r\cdot\boldsymbol{\sigma}
    \right),
    \qquad
    \mathbf r=(r_x,r_y,r_z),
\end{equation*}
where
$r:=\|\mathbf r\|=\sqrt{r_x^2+r_y^2+r_z^2}\le 1$.
Its eigenvalues are
\[
\lambda_\pm=\frac{1\pm r}{2}.
\]
For $r>0$, the eigenvalues are different, so $\rho$ is
nondegenerate. Since unitary evolution preserves the spectrum, the
accessible incoherent states are obtained by placing
$\lambda_+$ and $\lambda_-$ on the diagonal in the reference basis.
This leads directly to the following proposition.

\begin{proposition}
\label{prop:qubit_accessible_targets}
For a nondegenerate qubit state with $r>0$, the set of accessible
incoherent states is
\begin{equation}
    \mathcal I_\rho
    =
    \left\{
        \frac12(I+r\sigma_z),
        \frac12(I-r\sigma_z)
    \right\}.
    \label{eq:qubit_accessible_targets}
\end{equation}
\end{proposition}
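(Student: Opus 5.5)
This is a direct specialization of Lemma~\ref{prop:accessible_targets} to $d=2$. The plan has three steps: write the spectral decomposition, apply the lemma, and convert the resulting diagonal matrices into Bloch form.

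First, write $\rho=R\Lambda R^\dagger$ with $\Lambda=\operatorname{diag}(\lambda_+,\lambda_-)$ and $\lambda_\pm=(1\pm r)/2$. This is possible for any Bloch vector, since $\mathbf r\cdot\boldsymbol\sigma$ has eigenvalues $\pm r$. The columns of $R$ are the eigenvectors along $\pm\hat{\mathbf r}$.

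Second, Lemma~\ref{prop:accessible_targets} gives $\mathcal I_\rho=\{P_\pi\Lambda P_\pi^\dagger:\pi\in S_2\}$. The group $S_2$ has two elements. The identity gives $\operatorname{diag}(\lambda_+,\lambda_-)$. The transposition, with $P_\pi=\sigma_x$, gives $\operatorname{diag}(\lambda_-,\lambda_+)$.

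Third, rewrite these matrices in Pauli form:
\begin{equation*}
\operatorname{diag}(\lambda_\pm,\lambda_\mp)=\tfrac{1}{2}\bigl(I\pm r\sigma_z\bigr),
\end{equation*}
because $\lambda_++\lambda_-=1$ and $\lambda_+-\lambda_-=r$. This yields exactly the two states in Eq.~\eqref{eq:qubit_accessible_targets}.

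Finally, the two states are distinct precisely because $r>0$, consistent with the nondegenerate clause of Lemma~\ref{prop:accessible_targets}. There is no real obstacle in this argument; the only point needing care is to state that $r>0$ is what makes the set have two elements rather than one. At $r=0$ the two states would coincide at $I/2$.
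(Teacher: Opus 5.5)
Your proof is correct, but it takes a slightly different route from the paper's. You specialize Lemma~\ref{prop:accessible_targets} to $d=2$. The two elements of $S_2$ give $\operatorname{diag}(\lambda_+,\lambda_-)$ and $\operatorname{diag}(\lambda_-,\lambda_+)$, and the identities $\lambda_++\lambda_-=1$, $\lambda_+-\lambda_-=r$ turn these into $\tfrac12(I\pm r\sigma_z)$.

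The paper never invokes that lemma and argues directly in the Bloch ball. Incoherent qubits are exactly those whose Bloch vector lies on the $z$ axis, and unitary evolution preserves the Bloch radius. Hence the orbit meets the $z$ axis only at $\pm r\mathbf e_z$.

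The geometric version fits the Bloch-sphere picture used in the rest of Sec.~\ref{sec:qubit}, for example the rotation-angle formula for $D_{\rm CD}$. As written, however, it spells out only the inclusion $\mathcal I_\rho\subseteq\{\tfrac12(I\pm r\sigma_z)\}$. That both points are actually reached rests tacitly on unitaries acting transitively on the sphere of radius $r$.

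Your algebraic route gets both inclusions automatically. Lemma~\ref{prop:accessible_targets} already contains the converse direction, since permutation matrices are unitary. Your route also makes clear that the proposition is just the $d=2$ case of the general permutation structure.
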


\begin{proof}
An incoherent qubit has its Bloch vector on the $z$ axis, while
unitary evolution preserves the Bloch radius $r$. Hence, the unitary
orbit of $\rho$ intersects the incoherent axis only at
$\pm r\mathbf e_z$, which gives
Eq.~\eqref{eq:qubit_accessible_targets}.
\end{proof}

We can now minimize the charging distance over these two accessible
incoherent states.

\begin{theorem}
\label{thm:qubit_exact_distance}
Let $\rho= \frac12\left(   I+\mathbf r\cdot\boldsymbol{\sigma}\right)$,
$ \mathbf r=(r_x,r_y,r_z)$,
be a nondegenerate qubit state with $r=\|\mathbf r\|>0$. Then
\begin{equation}
    D_{\mathcal I}^{U}(\rho)
    =
    \frac12
    \arccos
    \left(
        \frac{|r_z|}{r}
    \right).
    \label{eq:qubit_exact_distance}
\end{equation}
Equivalently,
\begin{equation}
    D_{\mathcal I}^{U}(\rho)
    =
    \frac12
    \arcsin
    \left(
        \frac{r_\perp}{r}
    \right),
    \qquad
    r_\perp=\sqrt{r_x^2+r_y^2}.
    \label{eq:qubit_exact_distance_perp}
\end{equation}
\end{theorem}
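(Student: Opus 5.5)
By Proposition~\ref{prop:qubit_accessible_targets}, the accessible incoherent targets are the two states $\delta_\pm=\frac12(I\pm r\sigma_z)$. Hence
$$D_{\mathcal I}^{U}(\rho)=\min_\pm D_{\rm CD}(\rho,\delta_\pm).$$
I will compute each charging distance through Lemma~\ref{prop:monomial_reduction}, or equivalently through Eq.~\eqref{eq:CD_target_phase}. Write $\mathbf r=r\hat{\mathbf n}$ and let $\gamma\in[0,\pi]$ be the polar angle of $\hat{\mathbf n}$, so that $\cos\gamma=r_z/r$. The eigenvectors of $\rho$ are the Bloch-sphere pure states $\ket{\pm\hat{\mathbf n}}$. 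The eigenvectors of $\delta_+$ are $\ket{0},\ket{1}$ with the same eigenvalue assignment; for $\delta_-$ the assignment is swapped.

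\textbf{Step 1: reduce to minimizing the rotation angle.} Every $U\in U(2)$ with $U\rho U^\dagger=\delta_\pm$ has the form $e^{i\alpha}\exp(-i\tfrac{\theta}{2}\hat{\mathbf m}\cdot\boldsymbol\sigma)$. The corresponding $SO(3)$ rotation must map $\hat{\mathbf n}$ to $\pm\mathbf e_z$.

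For such a $U$, the eigenvalues of $U$ are $e^{i(\alpha\mp\theta/2)}$. The quantity $\|i\ln U\|_\infty$ is the largest modulus of the principal arguments of these eigenvalues. Choosing $\alpha=0$ is optimal: the two arguments are symmetric about $\alpha$, so shifting $\alpha$ only increases the larger one. This gives $\|i\ln U\|_\infty=\theta/2$ for $\theta\in[0,\pi]$.

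Hence $D_{\rm CD}(\rho,\delta_\pm)=\tfrac12\Theta_\pm$, where $\Theta_\pm$ is the minimal rotation angle carrying $\hat{\mathbf n}$ to $\pm\mathbf e_z$.

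\textbf{Step 2: the geometric fact.} The minimal rotation angle carrying one unit vector to another equals the angle between them. Any rotation moving $\hat{\mathbf n}$ to $\hat{\mathbf n}'$ by angle $\theta$ satisfies
$$\hat{\mathbf n}\cdot\hat{\mathbf n}'\ge\cos\theta.$$
To see this, decompose $\hat{\mathbf n}$ along the rotation axis and perpendicular to it. The perpendicular component is rotated by $\theta$, giving
$$\hat{\mathbf n}\cdot R\hat{\mathbf n}=\cos\theta+(1-\cos\theta)(\hat{\mathbf m}\cdot\hat{\mathbf n})^2\ge\cos\theta.$$
The value $\cos\theta$ is attained by the axis $\hat{\mathbf n}\times\hat{\mathbf n}'$. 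Thus $\Theta_+=\gamma$ and $\Theta_-=\pi-\gamma$.

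\textbf{Step 3: combine.} We obtain
$$D_{\mathcal I}^{U}(\rho)=\tfrac12\min(\gamma,\pi-\gamma)=\tfrac12\arccos\left(\frac{|r_z|}{r}\right).$$
The equivalent form follows because $\min(\gamma,\pi-\gamma)\in[0,\pi/2]$, and there $\arccos(|r_z|/r)=\arcsin(r_\perp/r)$.

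\textbf{Main obstacle.} The delicate step is Step 1: the global phase freedom $\alpha$, and the branch of the logarithm, must not lower the norm below $\theta/2$. I will argue this directly from the eigenvalue arguments $\alpha\pm\theta/2$, taken modulo $2\pi$ with principal branch. Their maximal modulus is at least half their separation, which is $\min(\theta,2\pi-\theta)/2$. This yields $\theta/2$ for $\theta\le\pi$, and rotation angles above $\pi$ never need to be considered.
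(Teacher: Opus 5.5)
Your proposal is correct and follows the paper's proof. Like the paper, you restrict to the two targets $\delta_\pm$, identify each charging distance with half the Bloch-sphere angle between $\hat{\mathbf n}$ and $\pm\mathbf e_z$, and take the smaller one. The only difference is that the paper simply invokes the qubit formula $D_{\rm CD}(\mathbf n,\mathbf m)=\frac12\arccos(\mathbf n\cdot\mathbf m)$, whereas your Steps~1--2 derive it (via the $SU(2)\to SO(3)$ correspondence, the global-phase/branch argument, and the bound $\hat{\mathbf n}\cdot R\hat{\mathbf n}\ge\cos\theta$), with only the harmless antiparallel case, where $\hat{\mathbf n}\times\hat{\mathbf n}'=0$ and any perpendicular axis works, left implicit.
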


\begin{proof}
By Proposition~\ref{prop:qubit_accessible_targets}, the two accessible
incoherent states are $\delta_\pm  =  \frac12(I\pm r\sigma_z)$.
Let $\mathbf n=\mathbf r/r$ be the direction of the Bloch vector.
For two nondegenerate isospectral qubit states with Bloch directions
$\mathbf n$ and $\mathbf m$, the charging distance is
\begin{equation}
    D_{\rm CD}(\mathbf n,\mathbf m)
    =
    \frac12
    \arccos(\mathbf n\cdot\mathbf m).
    \label{eq:qubit_CD_angle}
\end{equation}
Therefore,
\begin{equation*}
    D_{\rm CD}(\rho,\delta_\pm)
    =
    \frac12
    \arccos
    \left(
        \pm\frac{r_z}{r}
    \right).
\end{equation*}
Taking the smaller of the two values gives
\begin{equation*}
\begin{aligned}
    D_{\mathcal I}^{U}(\rho)
    &=
    \frac12
    \min
    \left\{
        \arccos\frac{r_z}{r},
        \arccos\left(-\frac{r_z}{r}\right)
    \right\}
    \\
    &=
    \frac12
    \arccos
    \left(
        \frac{|r_z|}{r}
    \right),
\end{aligned}
\end{equation*}
where we used the monotonicity of $\arccos$ on $[-1,1]$.

Finally, since
\[
    \frac{r_\perp}{r}
    =
    \sqrt{1-\frac{r_z^2}{r^2}},
\]
we obtain
\[
    \arccos\frac{|r_z|}{r}
    =
    \arcsin\frac{r_\perp}{r},
\]
which proves Eq.~\eqref{eq:qubit_exact_distance_perp}.
\end{proof}

Eq.~\eqref{eq:qubit_exact_distance} has a simple geometric
meaning. Unitary coherence erasure rotates the Bloch vector onto the
reference $z$ axis while preserving its length. Hence,
$D_{\mathcal I}^{U}$ is one half of the smallest Bloch-sphere
rotation angle needed to reach the $z$ axis.

The distance depends only on the direction $\mathbf r/r$ and not on
the Bloch radius $r$. Thus, for all nondegenerate qubit states,
$D_{\mathcal I}^{U}$ measures the mismatch between the eigenbasis of
$\rho$ and the reference basis rather than the mixedness of the
state. This is consistent with the eigenvalue independence established
in Lemma~\ref{prop:monomial_reduction}.

\subsection{Comparison with the Bures angle}

The exact expression obtained above shows that the
unitary-accessible coherence-erasure distance of a nondegenerate
qubit depends only on the direction of its Bloch vector. It is then
natural to compare this unitary-control distance with the
orbit-restricted Bures angle introduced in
Eq.~\eqref{eq:Bures_orb}. Unlike
$D_{\mathcal I}^{U}(\rho)$, the Bures angle is generally sensitive
to both the direction and the length of the Bloch vector.

For two qubit states
$ \rho  =  \frac12 \left(    I+\mathbf r\cdot\boldsymbol{\sigma}\right)$,
 $ \sigma= \frac12 \left(    I+\mathbf s\cdot\boldsymbol{\sigma} \right)$,
their squared Uhlmann fidelity can be written as
\begin{equation}
    F(\rho,\sigma)
    =
    \frac12
    \left[
        1+\mathbf r\cdot\mathbf s
        +
        \sqrt{
            (1-\|\mathbf r\|^2)
            (1-\|\mathbf s\|^2)
        }
    \right].
    \label{eq:qubit_fidelity}
\end{equation}

For the state considered in the previous subsection,
the two unitary-accessible incoherent targets are
$\delta_\pm= \frac12\left(I\pm r\sigma_z\right)$,
whose Bloch vectors are  $\mathbf s_\pm =\pm r\mathbf e_z$.
Since $\|\mathbf s_\pm\|=\|\mathbf r\|=r$, substitution into
Eq.~\eqref{eq:qubit_fidelity} gives
\begin{equation*}
    F(\rho,\delta_\pm)
    =
    \frac12
    \left[
        1
        \pm rr_z
        +(1-r^2)
    \right]
   =
    1-\frac{r^2}{2}
    \pm\frac{rr_z}{2}.
\label{eq:qubit_target_fidelity}
\end{equation*}
The Bures angle is minimized by the target having the larger
fidelity with $\rho$. Therefore,
\begin{equation}
    \max_{\delta\in\mathcal I_\rho}
    F(\rho,\delta)
    =
    1-\frac{r^2-r|r_z|}{2}.
    \label{eq:qubit_max_fidelity}
\end{equation}
We thus obtain the following closed-form expression.
\begin{proposition}
\label{prop:qubit_bures}
For every nondegenerate qubit state
\begin{equation*}
    \rho
    =
    \frac12
    \left(
        I+\mathbf r\cdot\boldsymbol{\sigma}
    \right),
    \qquad
    r=\|\mathbf r\|>0,
\end{equation*}
the orbit-restricted Bures angle is
\begin{equation}
    \Theta_B^{\rm orb}(\rho)
    =
    \arccos
    \sqrt{
        1-\frac{r^2-r|r_z|}{2}
    }.
    \label{eq:qubit_bures}
\end{equation}
Equivalently,
\begin{equation}
    \Theta_B^{\rm orb}(\rho)
    =
    \frac12
    \arccos
    \left(
        1-r^2+r|r_z|
    \right).
    \label{eq:qubit_bures_halfangle}
\end{equation}
\end{proposition}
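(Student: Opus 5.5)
The proof combines three ingredients already in place: Proposition~\ref{prop:qubit_accessible_targets}, the qubit fidelity formula Eq.~\eqref{eq:qubit_fidelity}, and the maximal fidelity Eq.~\eqref{eq:qubit_max_fidelity}. The first gives $\mathcal I_\rho=\{\delta_+,\delta_-\}$ with $\delta_\pm=\frac12(I\pm r\sigma_z)$. Hence $\Theta_B^{\rm orb}(\rho)=\min\{\theta_B(\rho,\delta_+),\theta_B(\rho,\delta_-)\}$. Since $\theta_B=\arccos\sqrt F$ and $x\mapsto\arccos\sqrt x$ is strictly decreasing on $[0,1]$, minimizing the Bures angle over the two targets is the same as maximizing the fidelity.

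First, I will substitute $\mathbf s_\pm=\pm r\mathbf e_z$ into Eq.~\eqref{eq:qubit_fidelity}. Because both Bloch vectors have length $r$, the square-root term equals $1-r^2$ exactly, with no sign ambiguity. This gives $F(\rho,\delta_\pm)=1-\frac{r^2}{2}\pm\frac{rr_z}{2}$. The larger of the two values is obtained by choosing the sign of $r_z$, which yields Eq.~\eqref{eq:qubit_max_fidelity}. Applying $\arccos\sqrt{\cdot}$ then gives Eq.~\eqref{eq:qubit_bures} directly.

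For the half-angle form Eq.~\eqref{eq:qubit_bures_halfangle}, I will use the identity $\cos 2\theta=2\cos^2\theta-1$. Set $\theta=\Theta_B^{\rm orb}(\rho)$, so that $\cos^2\theta=F_{\max}=1-\frac{r^2-r|r_z|}{2}$. Then
\begin{equation*}
\cos 2\theta = 2F_{\max}-1 = 1-r^2+r|r_z|.
\end{equation*}

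The only delicate point is inverting this relation to $\theta=\frac12\arccos(1-r^2+r|r_z|)$, which needs $2\theta\in[0,\pi]$. This holds because $\theta_B\in[0,\pi/2]$. I will also check that the argument lies in $[-1,1]$, so that $\arccos$ is defined. Since $|r_z|\le r\le1$, we have $r|r_z|\le r^2$, so the argument is at most $1$. On the other side, $1-r^2\ge0$ and $r|r_z|\ge0$, so the argument is nonnegative. The argument therefore lies in $[0,1]$, and in fact $2\theta\in[0,\pi/2]$.

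I expect no real obstacle here. The one step requiring care is this range bookkeeping, which makes the equivalence of the two expressions rigorous rather than formal.
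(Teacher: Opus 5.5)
Your proposal is correct and follows essentially the same route as the paper: use the two accessible targets $\delta_\pm=\frac12(I\pm r\sigma_z)$, evaluate $F(\rho,\delta_\pm)$ from Eq.~\eqref{eq:qubit_fidelity}, turn the Bures-angle minimization into fidelity maximization, and pass to the half-angle form using $\cos 2\theta=2\cos^2\theta-1$ together with $0\le\Theta_B^{\rm orb}\le\pi/2$. Your extra check that $1-r^2+r|r_z|\in[0,1]$ is a harmless refinement that the paper leaves implicit.
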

\begin{proof}
By definition of the orbit-restricted Bures angle~\eqref{eq:Bures_orb}
\begin{equation*}
    \Theta_B^{\rm orb}(\rho)
    =
    \min_{\delta\in\mathcal I_\rho}
    \arccos\sqrt{F(\rho,\delta)}.
\end{equation*}
Since $\arccos x$ is monotonically decreasing on $[0,1]$, this
minimization is equivalent to maximizing the fidelity over
$\mathcal I_\rho$. Using
Eq.~\eqref{eq:qubit_max_fidelity}, we immediately obtain
\begin{equation*}
    \Theta_B^{\rm orb}(\rho)
    =
    \arccos
    \sqrt{
        1-\frac{r^2-r|r_z|}{2}
    }.
\end{equation*}
Note that
\begin{equation*}
    1-\frac{r^2-r|r_z|}{2}
    =
    \frac{
        1+\left(1-r^2+r|r_z|\right)
    }{2}.
\end{equation*}
Using the half-angle identity $\cos^2\frac{\theta}{2} =\frac{1+\cos\theta}{2}$,
together with  $0\le\Theta_B^{\rm orb}\le\pi/2$, gives Eq.~\eqref{eq:qubit_bures_halfangle}.
\end{proof}
Eq.~\eqref{eq:qubit_bures_halfangle} makes the contrast with
the unitary-accessible coherence-erasure distance particularly
transparent. While
$D_{\mathcal I}^{U}(\rho)$ depends only on the ratio
$|r_z|/r$, and hence only on the direction of the Bloch vector,
$\Theta_B^{\rm orb}(\rho)$ also depends explicitly on its length
$r$. The latter therefore retains information about the mixedness
of the state. As shown below, this difference leads to a strict
separation between the two quantities for mixed coherent qubit
states.


\subsection{Accessibility difference}
\label{subsec:qubit_degeneracy}

The preceding results show that, for a mixed coherent qubit, the
orbit-restricted Bures angle is strictly smaller than the
unitary-accessible coherence-erasure distance. To quantify this
difference, we define the \emph{accessibility difference} as
\begin{equation}
    \Delta_{\rm acc}(\rho)
    :=
    D_{\mathcal I}^{U}(\rho)
    -
    \Theta_B^{\rm orb}(\rho).
    \label{eq:accessibility_difference}
\end{equation}
For a nondegenerate qubit state, $r>0$,
Eqs.~\eqref{eq:qubit_exact_distance} and
\eqref{eq:qubit_bures_halfangle} give
\begin{equation}
    \Delta_{\rm acc}(\rho)
    =
    \frac12
    \left[
        \arccos\left(\frac{|r_z|}{r}\right)
        -
        \arccos\left(1-r^2+r|r_z|\right)
    \right].
    \label{eq:qubit_accessibility_difference}
\end{equation}
Thus, $\Delta_{\rm acc}$ quantifies the difference between the
minimum unitary-control distance required to reach an incoherent
state and the corresponding proximity measured by the Bures
geometry.
It is useful to examine this difference in two limiting regimes.

We first consider the pure-state case, $r=1$. In this limit,
\begin{equation*}
    1-r^2+r|r_z|
    =
    |r_z|,
\end{equation*}
and therefore
\begin{equation}
    \Theta_B^{\rm orb}(\rho)
    =
    D_{\mathcal I}^{U}(\rho)
    =
    \frac12\arccos|r_z|.
    \label{eq:qubit_pure_distance}
\end{equation}
For a pure qubit, the largest population in the reference basis is
\begin{equation*}
    p_{\max}
    =
    \frac{1+|r_z|}{2}.
\end{equation*}
Using the half-angle identity,
Eq.~\eqref{eq:qubit_pure_distance} can equivalently be written as
\begin{equation*}
    \frac12\arccos|r_z|
    =
    \arccos
    \sqrt{\frac{1+|r_z|}{2}}
    =
    \arccos\sqrt{p_{\max}}.
\end{equation*}
Hence
$ \Delta_{\rm acc}(\rho)= 0$, $ r=1$,
in agreement with the pure-state saturation obtained in
Corollary~\ref{cor:pure_state_equality}.

A qualitatively different behavior appears near the maximally mixed
state. To see this, consider a family of nondegenerate qubit states
approaching the center of the Bloch ball while keeping the direction
of the Bloch vector fixed. Introducing
\begin{equation*}
    a
    :=
    \frac{|r_z|}{r},
    \qquad
    0\le a<1,
\end{equation*}
we obtain
\begin{equation*}
    D_{\mathcal I}^{U}(\rho)
    =
    \frac12\arccos a,
\end{equation*}
which is independent of the Bloch radius $r$. By contrast,
\begin{equation*}
    \Theta_B^{\rm orb}(\rho)
    =
    \frac12
    \arccos
    \left[
        1-r^2(1-a)
    \right].
    \label{eq:qubit_bures_fixed_direction}
\end{equation*}
Consequently, $\lim_{r\to0^+} \Theta_B^{\rm orb}(\rho) = 0$,
whereas $\lim_{r\to0^+} D_{\mathcal I}^{U}(\rho)  = \frac12\arccos a$.
More explicitly, using
$\arccos(1-x)\sim\sqrt{2x}$ as $x\to0^+$, we find
\begin{equation*}
    \Theta_B^{\rm orb}(\rho)
    \sim
    r\sqrt{\frac{1-a}{2}},
    \qquad
    r\to0^+.
    \label{eq:qubit_bures_origin_asymptotic}
\end{equation*}
The accessibility difference therefore approaches
\begin{equation*}
    \lim_{r\to0^+}
    \Delta_{\rm acc}(\rho)
    =
    \frac12\arccos a.
    \label{eq:qubit_difference_origin_limit}
\end{equation*}
Thus, along every fixed coherent direction $a<1$, the Bures
distance to the accessible incoherent set vanishes as the state
approaches the center of the Bloch ball, whereas the corresponding
unitary-control distance remains finite.

At the exactly maximally mixed state   $\rho_\ast =\frac{I}{2}$,
the situation changes qualitatively. The state is itself incoherent,
and its unitary orbit consists only of the state itself. Hence
\begin{equation*}
    D_{\mathcal I}^{U}
    \left(
        \frac{I}{2}
    \right)
    =
    \Theta_B^{\rm orb}
    \left(
        \frac{I}{2}
    \right)
    =
    0,
    \label{eq:maximally_mixed_zero}
\end{equation*}
and consequently
\begin{equation*}
    \Delta_{\rm acc}
    \left(
        \frac{I}{2}
    \right)
    =
    0.
\end{equation*}
Therefore, for a family approaching $I/2$ along a fixed coherent
Bloch direction,
\begin{equation*}
    \lim_{r\to0^+}
    D_{\mathcal I}^{U}(\rho)
    \neq
    D_{\mathcal I}^{U}
    \left(
        \frac{I}{2}
    \right),
    \label{eq:qubit_degeneracy_discontinuity}
\end{equation*}
and likewise
\begin{equation*}
    \lim_{r\to0^+}
    \Delta_{\rm acc}(\rho)
    \neq
    \Delta_{\rm acc}
    \left(
        \frac{I}{2}
    \right).
\end{equation*}
This singular behavior originates from the change in spectral
multiplicity at $r=0$. For every $r>0$, the two eigenvalues are
distinct, and the direction of the Bloch vector determines a
distinguished eigenbasis. A finite rotation of this eigenbasis may
therefore be required even arbitrarily close to the maximally mixed
state. At $r=0$, by contrast, the state is fully degenerate and every
orthonormal basis is an eigenbasis. The eigenbasis mismatch with
respect to the incoherent reference basis then ceases to distinguish
different representations of the state, and the unitary-accessible
coherence-erasure distance collapses to zero.

The qubit case therefore reveals two complementary aspects of the
geometry introduced in this work. Within the nondegenerate sector,
$D_{\mathcal I}^{U}$ depends only on the orientation of the
eigenbasis relative to the incoherent reference basis and is
independent of the eigenvalues, whereas the orbit-restricted Bures
angle remains sensitive to the mixedness of the state. Their
difference is captured by $\Delta_{\rm acc}$. At spectral
degeneracy, the enlargement of the eigenspace freedom changes this
geometry discontinuously. In higher dimensions, the same mechanism
is accompanied by the additional permutation and phase freedoms
encoded by the monomial unitary group. We next investigate this
richer structure for qutrit states.



\section{Unitary coherence erasure for qutrits}
\label{sec:qutrit}
We now consider qutrit states. Unlike the qubit case, the problem
cannot be described by a simple rotation on the Bloch sphere.
Instead, the optimization involves both permutations of the
eigenvalues and phases of the eigenvectors.

Using the monomial unitary formulation developed above, we reduce the
problem to a finite-dimensional matrix optimization. We then derive a
tight universal upper bound on the unitary coherence-erasure distance
and give an explicit eigenbasis that attains the bound.

\subsection{Qutrit formulation}
\label{subsec:qutrit_formulation}

Let
\begin{equation*}
    \rho
    =
    R\Lambda R^\dagger,
    \qquad
    R\in U(3),
\end{equation*}
be a nondegenerate qutrit state, where
\begin{equation*}
    \Lambda
    =
    \operatorname{diag}(\lambda_1,\lambda_2,\lambda_3),
    \qquad
    \lambda_i\neq\lambda_j
    \quad (i\neq j).
\end{equation*}
The columns of $R$ form an eigenbasis of $\rho$ relative to the
fixed reference basis.

As shown in Sec.~\ref{sec:framework}, the accessible incoherent
states are obtained by permuting the eigenvalues along the reference
basis. The remaining freedom is given by the phases of the
eigenvectors. These two freedoms are combined in the monomial
unitary group $\mathcal M_3$. Therefore, Lemma~\ref{prop:monomial_reduction} gives
\begin{equation}
    D_{\mathcal I}^{U}(\rho)
    =
    \min_{M\in\mathcal M_3}
    \left\|
        i\operatorname{ln}
        \left(
            MR^\dagger
        \right)
    \right\|_\infty.
    \label{eq:qutrit_monomial_distance}
\end{equation}

Eq.~\eqref{eq:qutrit_monomial_distance} shows that, for
nondegenerate qutrit states, the distance depends on the eigenbasis
$R$ but not on the eigenvalues
$\lambda_1,\lambda_2,\lambda_3$. Thus, the problem is determined by
the orientation of the eigenbasis relative to the reference basis.

To simplify Eq.~\eqref{eq:qutrit_monomial_distance}, define
\begin{equation*}
    \mu(R)
    :=
    \max_{M\in\mathcal M_3}
    \lambda_{\min}
    \left[
        \operatorname{Re}
        \left(
            MR^\dagger
        \right)
    \right].
\end{equation*}

The following lemma gives an equivalent expression for the distance.

\begin{lemma}
\label{lem:qutrit_mu}
For every nondegenerate qutrit state
$\rho=R\Lambda R^\dagger$,
\begin{equation}
    D_{\mathcal I}^{U}(\rho)
    =
    \arccos\mu(R).
    \label{eq:qutrit_distance_mu}
\end{equation}
\end{lemma}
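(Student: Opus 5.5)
The plan is to start from the monomial reduction of Lemma~\ref{prop:monomial_reduction}, in the form of Eq.~\eqref{eq:qutrit_monomial_distance}, and convert the operator norm of the logarithm into a spectral quantity of the Hermitian part. The argument is pointwise in $M$: for each fixed unitary $U=MR^\dagger$ I will show that $\|i\ln U\|_\infty=\arccos\lambda_{\min}(\operatorname{Re}U)$. The identity then passes through the minimization over $\mathcal M_3$ because $\arccos$ is decreasing.

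\emph{Step 1 (pointwise identity).} Let $U\in U(3)$. Since $U$ is normal, it has a spectral decomposition $U=\sum_k e^{i\theta_k}\ket{u_k}\bra{u_k}$ with an orthonormal basis $\{\ket{u_k}\}$. Take the principal branch, so that $\theta_k\in(-\pi,\pi]$. Then $i\ln U=-\sum_k\theta_k\ket{u_k}\bra{u_k}$ is Hermitian, and $\|i\ln U\|_\infty=\max_k|\theta_k|$. In the same basis,
\[
\operatorname{Re}U=\tfrac12(U+U^\dagger)=\sum_k\cos\theta_k\ket{u_k}\bra{u_k},
\]
so $\lambda_{\min}(\operatorname{Re}U)=\min_k\cos\theta_k$. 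Because $\cos$ is even and strictly decreasing on $[0,\pi]$, we get $\min_k\cos\theta_k=\cos(\max_k|\theta_k|)$. Since $\max_k|\theta_k|\in[0,\pi]$, this gives $\|i\ln U\|_\infty=\arccos\lambda_{\min}(\operatorname{Re}U)$. The boundary case of an eigenvalue $-1$, with $\theta=\pi$, is consistent: both sides equal $\pi$.

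\emph{Step 2 (exchange of min and arccos).} Insert the identity with $U=MR^\dagger$ into Eq.~\eqref{eq:qutrit_monomial_distance}:
\[
D_{\mathcal I}^U(\rho)=\min_{M\in\mathcal M_3}\arccos\lambda_{\min}[\operatorname{Re}(MR^\dagger)].
\]
Since $\arccos$ is strictly decreasing and continuous on $[-1,1]$, a minimum of $\arccos f(M)$ corresponds to a maximum of $f(M)$. This yields $\arccos\mu(R)$.

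\emph{Main point to check.} The only delicate issue is that the extremum is actually attained, so that writing $\min$ and $\max$ is legitimate. The map $\|i\ln U\|_\infty$ by itself is discontinuous on unitaries with eigenvalue $-1$, because of the branch cut. However, Step 1 shows that it coincides with $\arccos\lambda_{\min}(\operatorname{Re}U)$, which is continuous since $\lambda_{\min}$ is continuous on Hermitian matrices. Moreover, $\mathcal M_3$ is compact, being a finite union of copies of the torus $\mathcal U_{\rm diag}(3)$. Hence the maximum defining $\mu(R)$ exists, and the minimum in Eq.~\eqref{eq:qutrit_monomial_distance} is attained at the same $M$. The argument is dimension-independent, so the same identity holds for $D_{\mathcal M}(R)$ in any $d$.
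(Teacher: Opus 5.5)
Your proof is correct and follows the paper's argument essentially verbatim: you diagonalize $W=MR^\dagger$ with eigenphases in $(-\pi,\pi]$, identify $\|i\ln W\|_\infty=\max_j|\theta_j|$ and $\lambda_{\min}(\operatorname{Re}W)=\cos(\max_j|\theta_j|)$, and pass through the optimization over $\mathcal M_3$ using the monotonicity of $\arccos$; your compactness argument for attainment is a small addition that the paper leaves implicit. One wording point: the scalar $\|i\ln U\|_\infty$ is not discontinuous at eigenvalue $-1$ (only the matrix $i\ln U$ jumps there), and your own Step 1 shows this by equating it with a continuous function.
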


\begin{proof}
For a fixed $M\in\mathcal M_3$, let $W=MR^\dagger$. Since $W$ is
unitary, it can be written as
\begin{equation*}
    W
    =
    V\operatorname{diag}
    \left(
        e^{i\theta_1},
        e^{i\theta_2},
        e^{i\theta_3}
    \right)V^\dagger,
    \qquad
    \theta_j\in(-\pi,\pi].
\end{equation*}
Hence,
\begin{equation*}
\begin{aligned}
    \left\|i\operatorname{ln}W\right\|_\infty
    &=
    \max_j|\theta_j|,
    \\
    \lambda_{\min}(\operatorname{Re}W)
    &=
    \min_j\cos\theta_j
    =
    \cos\left(\max_j|\theta_j|\right).
\end{aligned}
\end{equation*}
Therefore,
\begin{equation*}
    \left\|i\operatorname{ln}W\right\|_\infty
    =
    \arccos
    \left[
        \lambda_{\min}(\operatorname{Re}W)
    \right].
\end{equation*}
Substituting $W=MR^\dagger$ into
Eq.~\eqref{eq:qutrit_monomial_distance} and using the monotonicity
of $\arccos$ gives
\begin{equation*}
\begin{aligned}
    D_{\mathcal I}^{U}(\rho)
    &=
    \arccos
    \left\{
        \max_{M\in\mathcal M_3}
        \lambda_{\min}
        \left[
            \operatorname{Re}
            \left(
                MR^\dagger
            \right)
        \right]
    \right\}
    \\
    &=
    \arccos\mu(R).
\end{aligned}
\end{equation*}
\end{proof}

Lemma~\ref{lem:qutrit_mu} replaces the matrix logarithm by an
optimization involving the smallest eigenvalue. Therefore, to obtain
a universal upper bound on $D_{\mathcal I}^{U}$, it is enough to find
a constant $c$ such that, for every $R\in U(3)$, there exists
$M\in\mathcal M_3$ satisfying
\begin{equation}
    \lambda_{\min}
    \left[
        \operatorname{Re}
        \left(
            MR^\dagger
        \right)
    \right]
    \ge c.
    \label{eq:qutrit_covering_problem}
\end{equation}
The largest such constant determines the largest possible value of
the unitary coherence erasure distance for qutrits. We determine it
in the next subsection.

\subsection{A tight bound}
\label{subsec:qutrit_tight_bound}

By Lemma~\ref{lem:qutrit_mu},  $ D_{\mathcal I}^{U}(\rho)  =\arccos\mu(R),   \rho=R\Lambda R^\dagger$.
Hence, the largest possible distance is determined by the smallest
value of $\mu(R)$. Define
\begin{equation}
    \mu_\ast
    :=
    \inf_{R\in U(3)}\mu(R).
    \label{eq:qutrit_mu_star}
\end{equation}
Then $\sup_{R\in U(3)} D_{\mathcal I}^{U}(R\Lambda R^\dagger)  = \arccos\mu_\ast$.
For qutrits, the extremal value $\mu_\ast$ can be determined exactly. Let
\[
    c_\ast
    :=
    \frac{2\sqrt2-1}{4},
    \qquad
    \theta_\ast
    :=
    \arccos c_\ast.
\]

\begin{theorem}
\label{thm:qutrit_tight_bound}
For every nondegenerate qutrit state
$\rho=R\Lambda R^\dagger$,
\begin{equation}
    D_{\mathcal I}^{U}(\rho)
    \le
    \theta_\ast
    =
    \arccos
    \left(
        \frac{2\sqrt2-1}{4}
    \right).
    \label{eq:qutrit_tight_bound}
\end{equation}
Equivalently,
\[
    \mu(R)
    \ge
    c_\ast
    =
    \frac{2\sqrt2-1}{4},
    \qquad
    R\in U(3).
\]
The bound is tight.
Numerically,
$c_\ast\simeq0.4571068$ and
$\theta_\ast\simeq1.09606~{\rm rad}\simeq62.80^\circ$.
\end{theorem}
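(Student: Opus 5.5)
The plan is to prove the inequality $\mu(R)\ge c_\ast$ for every $R\in U(3)$, and then exhibit an eigenbasis $R_\ast$ with $\mu(R_\ast)=c_\ast$. By Lemma~\ref{lem:qutrit_mu}, this gives both the bound $D_{\mathcal I}^{U}(\rho)\le\theta_\ast$ and its tightness. Throughout, write $U:=R^\dagger$. Since $\mathcal M_3$ is a group, $\mu(R)$ is invariant under $U\mapsto M_1UM_2$ with $M_1,M_2\in\mathcal M_3$ (the right factor only conjugates $\operatorname{Re}(MU)$). This lets us normalize $U$ before estimating.

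\emph{Step 1 (normalization).} The matrix $(|U_{ij}|^2)$ is unistochastic, hence doubly stochastic. By Birkhoff's theorem, there is a permutation $\pi$ for which the diagonal of $P_\pi U$ carries large weight; concretely, we choose $\pi$ to maximize $\sum_i|U_{\pi^{-1}(i)i}|$, or alternatively the smallest diagonal modulus. A diagonal phase $D$ then makes the diagonal of $W=P_\pi DU$ real and nonnegative, $W_{ii}=a_i\ge0$. Left multiplication by a further diagonal phase is then no longer free. However, we retain the freedom of conjugating by diagonal phases, and the freedom of rotating all phases jointly, i.e.\ applying $W\mapsto e^{i\alpha}DW$ with a general $D$. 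We keep the latter freedom as an optimization variable rather than fixing it at once.

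\emph{Step 2 (lower bound on $\lambda_{\min}$).} Write $\operatorname{Re}W=\operatorname{diag}(a_i)+H$, where $H_{ij}=(W_{ij}+\overline{W_{ji}})/2$ for $i\ne j$. Unitarity of $W$ constrains the off-diagonal moduli through row and column normalization, $\sum_{j\ne i}|W_{ij}|^2=1-a_i^2$, and through the orthogonality relations. Maximality of $\pi$ gives inequalities such as $a_i+a_j\ge|W_{ij}|+|W_{ji}|$, obtained by comparing with the transposed permutation, together with analogous three-cycle inequalities. We then bound the smallest root of the characteristic cubic of the $3\times3$ Hermitian matrix $\operatorname{Re}W$ in terms of the $a_i$ and the $|H_{ij}|$. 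We would first try a reduction to the worst case by symmetry and convexity arguments. The expectation is that the extremum occurs when two diagonal entries coincide and the off-diagonal structure is symmetric. In that case the cubic factorizes into a linear factor and a quadratic, and the quadratic root should produce the value $(2\sqrt2-1)/4$; the $\sqrt2$ arises from the quadratic.

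\emph{Step 3 (extremal eigenbasis).} Guided by the equality case of Step 2, we construct $R_\ast$, e.g.\ with moduli pattern $|U_{ij}|^2$ chosen so that several permutations tie, and with phases making the tying candidates equally bad. For each of the $6$ permutations we then maximize over the diagonal phases $\lambda_{\min}[\operatorname{Re}(P_\pi DU_\ast)]$. This is a smooth problem in two relative phases plus a global phase, and we check via first-order (KKT) conditions that the maximum equals $c_\ast$ in every case. Finally, we verify numerically that $\theta_\ast\simeq1.09606$.

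The main obstacle is Step 2. A crude Gershgorin bound will lose too much, so the actual minimax over $\mathcal M_3$ must be handled carefully, including the joint optimization of the phases rather than merely fixing the diagonal to be positive. If the direct estimate fails, the fallback is a covering argument: show that the sets $\{U:\lambda_{\min}[\operatorname{Re}(MU)]\ge c_\ast\}$, $M\in\mathcal M_3$, cover $U(3)$, by analyzing the boundary of each set and matching it with $R_\ast$.
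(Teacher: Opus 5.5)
\emph{Verdict: the proposal has a genuine gap.} Your Step~1 normalization is exactly the paper's starting point: move a maximal matching to the diagonal and make it real. But Step~2 carries the whole lower bound $\mu(R)\ge c_\ast$, and as written it is not an argument. The ``reduction to the worst case by symmetry and convexity'' is never specified, and the form of the extremizer is only an expectation. The fallback ``covering argument'' just restates the claim.

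Bounding roots of the characteristic cubic of $\operatorname{Re}W$ in terms of the $a_i$ and $|H_{ij}|$ also misses the fact that makes the problem tractable. Because $W$ is unitary, $\operatorname{Re}(e^{i\alpha}W)$ has eigenvalues $\cos(\theta_j+\alpha)$. The best global phase therefore gives exactly $\cos\delta$, where $2\delta$ is the shortest arc containing the spectrum of $W$. With the diagonal frozen real and nonnegative, $\lambda_{\min}(\operatorname{Re}W)$ is in general strictly smaller. Since $|\operatorname{tr}W|\le 1+2\cos\delta$, this yields $\mu(R)\ge (S(R)-1)/2$, and that settles every $R$ with maximal matching sum $S(R)\ge S_\ast=\sqrt2+\tfrac12$.

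The matching sum cannot finish the job, because van der Waerden only guarantees $S(R)\ge\sqrt3<S_\ast$. In the window $\sqrt3\le S<S_\ast$ the paper brings in the determinant, using three facts:
\begin{enumerate}
\item The spectrum of a $3\times3$ unitary is fixed by its trace and determinant.
\item With $\det R=e^{i\gamma}$, $\Phi=\cos\gamma$ is a function of the moduli through the cofactor identity $|R_{12}R_{21}|^2=a^2b^2+c^2-2abc\cos\gamma$.
\item At fixed $S$, the optimized value $\kappa=\cos\delta$ decreases as $\Phi$ increases.
\end{enumerate}
So one must maximize $\Phi$ over moduli subject to unitarity and the five matching-maximality inequalities. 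This takes an active-set analysis ending on the $TTC$ ridge. It is supplemented by a separate argument on the unistochastic boundary, where $R$ is a real rotation. There $c_\ast$ appears as the covering radius of $SO(3)$ by the rotation group of the cube. In Rodrigues coordinates the farthest Voronoi vertex is $(\alpha,\alpha,\alpha^2)$ with $\alpha=\sqrt2-1$, giving $\tan^2(\theta_\ast/2)=23-16\sqrt2$. None of these ingredients (the arc/trace bound, the determinant constraint, the orthostochastic covering) appears in your plan. Your guess of two equal diagonal entries is consistent with the true extremizer, a real rotation with diagonal $(1/\sqrt2,1/\sqrt2,1/2)$. Nothing in the proposal shows the worst case must have that form.

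Step~3 has its own gap. You never specify $R_\ast$. First-order (KKT) conditions in the phases also cannot certify the global maximum of $D\mapsto\lambda_{\min}[\operatorname{Re}(P_\pi DU_\ast)]$, which is nonconvex and nonsmooth. At the extremal basis with $M=I$, $R_\ast^T$ has spectrum $1,e^{\pm i\theta_\ast}$, so the smallest eigenvalue $c_\ast$ of $\operatorname{Re}(R_\ast^{T})$ is doubly degenerate. The function is therefore not even differentiable at the candidate optimum.

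The paper instead uses a bound that holds for all phases. Take $R_\ast$ with rows $(a,0,a)$, $(-b,a,b)$, $(-b,-a,b)$, where $a=1/\sqrt2$ and $b=1/2$. Two of the six permutations put the zero entry of $R_\ast$ on the diagonal of $\operatorname{Re}(MR_\ast^T)$, forcing $\lambda_{\min}\le0$. For the other four, a suitable $2\times2$ principal submatrix has diagonal entries at most $a$ and $b$, and off-diagonal modulus at least $(a-b)/2$ for every choice of phases. Cauchy interlacing then gives $\lambda_{\min}\le[a+b-\sqrt2(a-b)]/2=c_\ast$ uniformly.
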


\begin{proof}
The proof is given in Appendix~\ref{app:qutrit_bound}.
\end{proof}

Therefore,
\[
    \sup_{\rho\ {\rm nondegenerate}}
    D_{\mathcal I}^{U}(\rho)
    =
    \theta_\ast.
\]
This value is independent of the eigenvalues of $\rho$ and depends
only on the orientation of its eigenbasis relative to the reference
basis.

An explicit eigenbasis attaining the bound is given in the next subsection.

\subsection{A qutrit eigenbasis saturating the bound}
\label{subsec:qutrit_extremal}
The tight bound obtained in the previous subsection is attained by
an explicit qutrit eigenbasis. Consider
\begin{equation*}
R_\ast
=
\begin{pmatrix}
\dfrac1{\sqrt2} & 0 & \dfrac1{\sqrt2}\\[2mm]
-\dfrac12 & \dfrac1{\sqrt2} & \dfrac12\\[2mm]
-\dfrac12 &-\dfrac1{\sqrt2} & \dfrac12
\end{pmatrix}.
\end{equation*}
A direct calculation gives  $ R_\ast^T R_\ast=I, \det R_\ast=1$,
and hence $R_\ast\in SO(3)\subset U(3)$.

After a permutation of the second and third columns, its modulus
matrix takes the form
\begin{equation*}
    |R_\ast|
    \sim
    \begin{pmatrix}
        \dfrac1{\sqrt2}
        &
        \dfrac1{\sqrt2}
        &
        0
        \\[2mm]
        \dfrac12
        &
        \dfrac12
        &
        \dfrac1{\sqrt2}
        \\[2mm]
        \dfrac12
        &
        \dfrac12
        &
        \dfrac1{\sqrt2}
    \end{pmatrix}.
\end{equation*}
This matrix coincides with the extremal endpoint identified in
Appendix~\ref{app:qutrit_bound}.
The following proposition makes the saturation of the universal
bound explicit.

\begin{proposition}
\label{prop:qutrit_extremal}
Let
\begin{equation*}
    \Lambda
    =
    \operatorname{diag}
    (\lambda_1,\lambda_2,\lambda_3),
    \qquad
    \lambda_i\neq\lambda_j
    \quad (i\neq j),
\end{equation*}
and define $\rho_\ast:= R_\ast\Lambda R_\ast^\dagger$.
Then $\mu(R_\ast)= c_\ast =\frac{2\sqrt2-1}{4}$,
and therefore
\begin{equation*}
    D_{\mathcal I}^{U}(\rho_\ast)
    =
    \theta_\ast
    =
    \arccos
    \left(
        \frac{2\sqrt2-1}{4}
    \right).
\end{equation*}
Thus the universal qutrit bound in
Theorem~\ref{thm:qutrit_tight_bound} is attained for every
nondegenerate spectrum when the eigenbasis is $R_\ast$.
\end{proposition}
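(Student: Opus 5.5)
By Theorem~\ref{thm:qutrit_tight_bound} we already have $\mu(R_\ast)\ge c_\ast$. So it suffices to show the reverse inequality $\mu(R_\ast)\le c_\ast$. Lemma~\ref{lem:qutrit_mu} together with Lemma~\ref{prop:monomial_reduction} then gives $D_{\mathcal I}^{U}(\rho_\ast)=\arccos c_\ast=\theta_\ast$ for every nondegenerate $\Lambda$, because $\mu$ depends only on $R_\ast$. Concretely, I must show that for each of the $6$ permutations $\pi\in S_3$ and each diagonal phase matrix $D=\operatorname{diag}(e^{i\phi_1},e^{i\phi_2},e^{i\phi_3})$,
\begin{equation*}
\lambda_{\min}\bigl[\operatorname{Re}(P_\pi D R_\ast^\dagger)\bigr]\le c_\ast .
\end{equation*}

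\textbf{Reducing the cases.} First I will use symmetries of $R_\ast$. Swapping rows $2$ and $3$ of $R_\ast$ and flipping the sign of its second column maps $R_\ast$ to itself. Hence there is a monomial $W$ and a diagonal sign matrix $S$ with $WR_\ast S=R_\ast$. This identifies $\pi$ with its composition with the transposition $(23)$, which leaves three classes of permutations. For each class I will write down the diagonal of $P_\pi DR_\ast^\dagger$; its entries are of the form $e^{i\phi_j}\overline{(R_\ast)_{k j}}$.

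\textbf{Bounding each class.} I will use two elementary upper bounds.
\begin{enumerate}
\item For a unit vector $v$, $\lambda_{\min}\le\langle v|\operatorname{Re}W|v\rangle=\operatorname{Re}\langle v|W|v\rangle$.
\item The vector $v=R_\ast e_k$ gives a $\phi$-independent handle: $\langle v|P_\pi D R_\ast^\dagger|v\rangle=\langle R_\ast e_k|P_\pi D|e_k\rangle$, a single entry of $R_\ast^\dagger P_\pi D$.
\end{enumerate}
For permutation classes in which some diagonal entry of $P_\pi DR_\ast^\dagger$ vanishes identically (the zero in $R_\ast$ at position $(1,2)$), the basis vector $e_j$ immediately gives $\lambda_{\min}\le0<c_\ast$. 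For the remaining class the diagonal moduli are, up to ordering, $(1/\sqrt2,1/\sqrt2,1/2)$ or a similar pattern. There I will choose a two- or three-dimensional trial subspace spanned by $e_j$ and $R_\ast e_k$ and minimize the resulting small Hermitian compression over $\phi$. I expect the optimum to be governed by a $2\times2$ compression of the form
\begin{equation*}
\begin{pmatrix} a\cos\alpha & z\\ \bar z & b\cos\beta\end{pmatrix},
\end{equation*}
with the constraint $\alpha+\beta=$ const coming from the determinant and phase relations. Its smallest eigenvalue, maximized over the phases, should produce exactly $(2\sqrt2-1)/4$.

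\textbf{Main obstacle.} The hard step is this last class. The bound must hold for all phases $\phi$, and a cruder trial vector gives only a weaker constant. If the $2\times2$ compressions do not suffice, I will instead compute the characteristic polynomial of $\operatorname{Re}(P_\pi DR_\ast^\dagger)$ directly. It is a cubic whose coefficients are trigonometric polynomials in $\phi$, and using $\det\operatorname{Re}W$ and the trace I will show it is nonpositive at $x=c_\ast$ for all $\phi$, so some eigenvalue lies below $c_\ast$. I expect equality at a specific symmetric choice of phases, matching the extremal endpoint of Appendix~\ref{app:qutrit_bound} with modulus pattern $|R_\ast|$. That choice also serves as a check that the lower bound from Theorem~\ref{thm:qutrit_tight_bound} is attained.
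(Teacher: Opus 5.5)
\textbf{There is a genuine gap: the decisive inequality is only conjectured.} Several parts of your proposal are fine. The symmetry reduction is correct: $P_{(2\,3)}R_\ast\operatorname{diag}(1,-1,1)=R_\ast$, so $\pi$ is identified with $(2\,3)\circ\pi$. The class $\{\pi:\pi(2)=1\}$ is correctly disposed of by the zero entry $(R_\ast)_{12}$. The lower bound is also valid, though it needs no appendix: with $M=I$, $R_\ast\in SO(3)$ has trace $1+2c_\ast$, so $\operatorname{Re}R_\ast^T$ has eigenvalues $1,c_\ast,c_\ast$. The missing step is $\lambda_{\min}(H_M)\le c_\ast$, with $H_M:=\operatorname{Re}(MR_\ast^T)$, for the four permutations with $\pi(2)\neq1$. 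In cycle notation these form two classes, $\{\mathrm{id},(2\,3)\}$ and $\{(1\,2\,3),(1\,3)\}$, not one. You only assert that some compression ``should'' give $c_\ast$ after optimizing over phases. The missing idea is a phase-uniform bound on a principal $2\times2$ block in the reference basis, which is what the paper uses. For $M=D$, take indices $\{2,3\}$. The block has diagonal entries $a\cos\phi_2\le a$ and $b\cos\phi_3\le b$. Its off-diagonal entry is $z=\tfrac12\bigl(-ae^{i\phi_2}+be^{-i\phi_3}\bigr)$, a half-sum of terms of moduli $a=1/\sqrt2$ and $b=1/2$, so $|z|\ge(a-b)/2$ for every choice of phases. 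Cauchy interlacing and $B\preceq\bigl(\begin{smallmatrix}a&z\\ \bar z&b\end{smallmatrix}\bigr)$ then give $\lambda_{\min}(H_M)\le\tfrac12\bigl[a+b-\sqrt{(a-b)^2+4|z|^2}\bigr]\le\tfrac12\bigl[a+b-\sqrt2\,(a-b)\bigr]=c_\ast$. The pairs $\{1,2\}$, $\{2,3\}$, $\{1,3\}$ give the same bound for $(2\,3)$, $(1\,2\,3)$, $(1\,3)$. No optimization over $\phi$ is needed. There is also no ``$\alpha+\beta=$ const'' constraint from a determinant: $D$ ranges over all of $\mathcal U_{\rm diag}(3)$, and imposing such a constraint would only bound a subfamily of $M$.

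\textbf{Two of your proposed tools do not work as stated.} First, the vector $R_\ast e_k$ carries no phase-independent information. One has $\langle R_\ast e_k|H_M|R_\ast e_k\rangle=\operatorname{Re}\bigl(e^{i\phi_k}\overline{(R_\ast)_{\pi(k)k}}\bigr)=(H_M)_{\pi(k)\pi(k)}$, which is just a diagonal entry again. Second, the fallback of showing that the characteristic polynomial has a fixed sign at $x=c_\ast$ for all phases is false. For $M=e^{i\epsilon}I$ the eigenvalues of $H_M$ are $\cos\epsilon$ and $\cos(\epsilon\pm\theta_\ast)$. Hence $\det(H_M-c_\ast I)<0$ for small $\epsilon>0$, with one eigenvalue below $c_\ast$. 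But $\det(H_M-c_\ast I)>0$ for $\theta_\ast<\epsilon<2\theta_\ast$, with two eigenvalues below $c_\ast$. In both cases $\lambda_{\min}(H_M)<c_\ast$. The determinant only detects the parity of the number of eigenvalues below $c_\ast$. To certify $\lambda_{\min}(H_M)\le c_\ast$ you need a principal submatrix of $H_M-c_\ast I$ that fails to be positive definite, and the $2\times2$ block argument above supplies exactly that.
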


\begin{proof}
The exact monomial unitary optimization for $R_\ast$ is established
in Appendix~\ref{app:exact_extremality}, where it is shown that $\mu(R_\ast)=c_\ast$.
Lemma~\ref{lem:qutrit_mu} then gives
\begin{equation*}
    D_{\mathcal I}^{U}(\rho_\ast)=\arccos\mu(R_\ast)
    = \arccos c_\ast=\theta_\ast.
\end{equation*}
Since $\mu(R)$ depends only on the eigenbasis in the nondegenerate
sector, the result is independent of the particular values of
$\lambda_1,\lambda_2,\lambda_3$.
\end{proof}

There is also a simple geometric feature of the extremal basis.
Since $R_\ast\in SO(3)$, it represents a three-dimensional rotation.
Moreover, $ \operatorname{tr}R_\ast =\sqrt2+\frac12 = 1+2c_\ast$.
Hence the rotation angle $\vartheta_\ast$ of $R_\ast$ satisfies
    $\cos\vartheta_\ast
    =
    \frac{\operatorname{tr}R_\ast-1}{2}
    =
    c_\ast$,
and therefore
$\vartheta_\ast  = \theta_\ast$.
 
For the identity monomial unitary, the relative unitary
$R_\ast^\dagger$ already has eigenphases
$0,\pm\theta_\ast$. The nontrivial part of the extremality result is
that no other permutation and phase choice in $\mathcal M_3$ can
reduce the maximal absolute eigenphase below $\theta_\ast$.

The extremality is therefore a property of the eigenbasis geometry
rather than of a particular spectrum. Any nondegenerate spectrum
combined with $R_\ast$ gives the same maximal unitary-accessible
coherence-erasure distance.

\subsection{Geometric interpretation of the qutrit bound}
\label{subsec:qutrit_geometric}
The qutrit result admits a natural geometric interpretation.
Eq.~\eqref{eq:qutrit_monomial_distance} expresses the
coherence-erasure problem as the distance from the eigenbasis matrix
$R$ to the monomial unitary subgroup $\mathcal M_3$,
   $ D_{\mathcal I}^{U}(\rho)
    =
    \min_{M\in\mathcal M_3}
    \left\|
        i\operatorname{Ln}(MR^\dagger)
    \right\|_\infty $.
Thus, in the nondegenerate sector, the tight constant
$\theta_\ast$ is the largest possible distance of a qutrit
eigenbasis from $\mathcal M_3$,
\begin{equation*}
\theta_\ast
    =
    \sup_{R\in U(3)}
    \min_{M\in\mathcal M_3}
    \left\|
        i\operatorname{ln}(MR^\dagger)
    \right\|_\infty .
    \label{eq:qutrit_geometric_radius}
\end{equation*}
In this sense, Theorem~\ref{thm:qutrit_tight_bound} determines the
covering radius associated with the monomial unitary reduction in
dimension three.

This also provides a direct comparison with the qubit case. For a
nondegenerate qubit with Bloch vector
$\boldsymbol r=(r_x,r_y,r_z)$, the corresponding distance is
    $D_{\mathcal I}^{U}(\rho)
    =
    \frac12
    \arccos
    \frac{|r_z|}{|\boldsymbol r|}$,
and therefore
\begin{equation*}
    \sup_{\rho\ {\rm nondegenerate}}
    D_{\mathcal I}^{U}(\rho)
    =
    \frac{\pi}{4}
    =
    45^\circ,
\end{equation*}
with the maximum attained when the Bloch vector lies in the
equatorial plane. For qutrits, Theorem~\ref{thm:qutrit_tight_bound}
instead gives
\begin{equation*}
    \sup_{\rho\ {\rm nondegenerate}}
    D_{\mathcal I}^{U}(\rho)
    =
    \theta_\ast
    =
    \arccos
    \left(
        \frac{2\sqrt2-1}{4}
    \right)
    \simeq
    62.80^\circ.
\end{equation*}
Hence $\frac{\pi}{4}<\theta_\ast$.

The qutrit case allows a larger maximal coherence-erasure distance
than the qubit case. In other words, some qutrit eigenbases require
a larger unitary rotation to reach the corresponding incoherent
reference basis than is possible for qubits. This difference arises
from the additional basis degrees of freedom in dimension three.
We do not claim that the maximal distance increases monotonically
with the Hilbert space dimension.

It is important, however, to distinguish this nondegenerate geometry
from the degenerate case. If the spectrum contains degeneracies, the
eigenbasis matrix $R$ is no longer unique, since arbitrary unitary
rotations may be performed within each degenerate eigenspace. More
precisely, for $\rho=R\Lambda R^\dagger$,
introduce the stabilizer of the spectrum,
\begin{equation*}
    \mathcal G_\Lambda
    :=
    \left\{
        V\in U(3):
        V\Lambda V^\dagger=\Lambda
    \right\}.
    \label{eq:qutrit_spectral_stabilizer}
\end{equation*}
The corresponding coherence-erasure optimization then takes the form
\begin{equation}
    D_{\mathcal I}^{U}(\rho)
    =
    \min_{\substack{\pi\in S_3\\ V\in\mathcal G_\Lambda}}
    \left\|
        i\operatorname{ln}
        \left(
            P_\pi V R^\dagger
        \right)
    \right\|_\infty ,
    \label{eq:qutrit_degenerate_reduction}
\end{equation}
with redundant permutations understood when eigenvalues coincide.

For a nondegenerate spectrum,
    $\mathcal G_\Lambda
    =
    \mathcal U_{\rm diag}(3)$,
and Eq.~\eqref{eq:qutrit_degenerate_reduction} reduces to the
monomial unitary formulation used throughout this section. By
contrast, if, for example,
$\lambda_1=\lambda_2\neq\lambda_3$, then
   $\mathcal G_\Lambda
    \simeq
    U(2)\times U(1)$,
so that additional rotations inside the two-dimensional degenerate
eigenspace become available.

\begin{remark}
The tight qutrit bound above applies only to nondegenerate states.
When the spectrum is degenerate, the eigenvectors inside a degenerate
eigenspace are not uniquely defined, and they can be mixed by unitary
transformations without changing the state. This gives additional
freedom in the coherence-erasure optimization and can change, or even
reduce, the corresponding distance. For example, the maximally mixed
qutrit $\rho=I/3$ is already incoherent and therefore satisfies
$D_{\mathcal I}^{U}(\rho)=0$.

A complete treatment of partially degenerate qutrits requires the more
general optimization in Eq.~\eqref{eq:qutrit_degenerate_reduction},
which is beyond the scope of the present work. We therefore do not make
any general statement here about the behavior of
$D_{\mathcal I}^{U}$ across spectral degeneracies.
\end{remark}

\section{Operational interpretation and examples}
\label{sec:operational_examples}

\subsection{Minimum time for coherence erasure under bounded driving}
\label{subsec:erasure_time}

The unitary-accessible coherence-erasure distance has a direct
dynamical interpretation in terms of the minimum time required to
eliminate coherence under bounded unitary driving. To make this
connection explicit, consider a closed-system evolution
\begin{equation*}
    \dot{\rho}_t
    =
    -i[H_t,\rho_t],
    \qquad
    \rho_{t=0}=\rho,
\end{equation*}
generated by a time-dependent Hamiltonian satisfying
\begin{equation}
    \|H_t\|_\infty
    \le
    h,
    \qquad
    h>0.
    \label{eq:hamiltonian_bound}
\end{equation}
We assume the same unrestricted unitary-control setting as in the
definition of the quantum charging distance, with the only
constraint being the bound in
Eq.~\eqref{eq:hamiltonian_bound}.

We define the minimum unitary coherence-erasure time by
\begin{equation*}
\begin{aligned}
    \tau_{\mathcal I}^{U}(\rho;h)
    :=
    \inf
    \Bigl\{
        \tau\ge0:
        \rho_\tau\in\mathcal I_\rho,\,
        \|H_t\|_\infty\le h
        \ \text{for }0\le t\le\tau
    \Bigr\}.
    \label{eq:minimum_erasure_time}
\end{aligned}
\end{equation*}
Since, for a fixed unitary-accessible incoherent target
$\delta\in\mathcal I_\rho$, the quantum charging distance
$D_{\rm CD}(\rho,\delta)$ gives the minimum evolution time for
unit driving strength, the corresponding minimum time under the
bound $\|H_t\|_\infty\le h$ is
\begin{equation*}
    \tau_{\rm CD}(\rho,\delta;h)
    =
    \frac{D_{\rm CD}(\rho,\delta)}{h}.
\end{equation*}
Minimizing over all unitary-accessible incoherent targets therefore
gives
\begin{equation}
    \tau_{\mathcal I}^{U}(\rho;h)
    =
    \frac{D_{\mathcal I}^{U}(\rho)}{h}.
\label{eq:erasure_time_distance}
\end{equation}
Eq.~\eqref{eq:erasure_time_distance} shows that
$D_{\mathcal I}^{U}$ is not only a geometric distance on the
unitary orbit. Under a fixed driving-strength constraint, it
directly determines the optimal time required to reach the
unitary-accessible incoherent set.

The distance hierarchy established in
Theorem~\ref{thm:geometric_hierarchy} then immediately leads to a
corresponding hierarchy of lower bounds on the coherence-erasure
time. Indeed,
\begin{equation*}
    \arcsin
    \sqrt{
        2T_{\rm orb}(\rho)
    }
    \le
    \Theta_{B}^{\rm orb}(\rho)
    \le
    D_{\mathcal I}^{U}(\rho),
\end{equation*}
and hence
\begin{equation}
    \frac{
        \arcsin\sqrt{2T_{\rm orb}(\rho)}
    }{h}
    \le
    \frac{
        \Theta_{B}^{\rm orb}(\rho)
    }{h}
    \le
    \tau_{\mathcal I}^{U}(\rho;h).
\label{eq:erasure_time_hierarchy}
\end{equation}
Thus, the orbit-restricted transport cost and the
orbit-restricted Bures angle provide geometric lower bounds on the
minimum coherence-erasure time, whereas
$D_{\mathcal I}^{U}$ gives the exact optimal time within the
present unitary-control model.

For pure states, the three geometric quantities coincide, and
Eq.~\eqref{eq:erasure_time_hierarchy} is saturated throughout:
\begin{equation}
    \tau_{\mathcal I}^{U}(\psi;h)
    =
    \frac{
        \Theta_{B}^{\rm orb}(\psi)
    }{h}
    =
    \frac{
        \arcsin\sqrt{2T_{\rm orb}(\psi)}
    }{h}.
    \label{eq:pure_state_erasure_time}
\end{equation}
For mixed states, by contrast, the inequalities can be strict,
reflecting the distinction between state-space proximity and the
unitary control required to rotate the eigenbasis into the
incoherent reference basis. The qubit example below illustrates
this separation explicitly.
\subsection{Qubit example}
\label{subsec:qubit_example}

To illustrate the distinction between the unitary-accessible
coherence-erasure distance and the corresponding state-space
distance, consider the one-parameter family of qubit states
\begin{equation}
    \rho_r
    =
    \frac12
    \left[
        I
        +
        r
        \left(
            \frac{\sqrt3}{2}\sigma_x
            +
            \frac12\sigma_z
        \right)
    \right],
    \qquad
    0<r\le1.
    \label{eq:qubit_example_family}
\end{equation}
The Bloch direction is fixed, while the Bloch radius $r$ controls
the mixedness of the state. In particular,
\begin{equation*}
    \frac{|r_z|}{r}
    =
    \frac12.
\end{equation*}
Therefore, by Theorem~\ref{thm:qubit_exact_distance},
\begin{equation*}
    D_{\mathcal I}^{U}(\rho_r)
    =
    \frac12
    \arccos\frac12
    =
    \frac{\pi}{6}.
    \label{eq:qubit_example_DI}
\end{equation*}
Thus, throughout the entire nondegenerate family, the
unitary-accessible coherence-erasure distance remains constant,
even though the purity of the state changes with $r$.

By contrast, the orbit-restricted Bures angle depends explicitly on
the Bloch radius. Using Proposition~\ref{prop:qubit_bures}, we obtain
\begin{equation*}
    \Theta_{B}^{\rm orb}(\rho_r)
    =
    \frac12
    \arccos
    \left(
        1-\frac{r^2}{2}
    \right).
    \label{eq:qubit_example_bures}
\end{equation*}
Hence the corresponding accessibility difference is
\begin{equation*}
    \Delta_{\rm acc}(\rho_r)
    =
    \frac{\pi}{6}
    -
    \frac12
    \arccos
    \left(
        1-\frac{r^2}{2}
    \right).
    \label{eq:qubit_example_gap}
\end{equation*}

Fig.~\ref{qubitfig1} illustrates these two distances as the purity
of the state is varied while the Bloch direction is kept fixed.
\begin{figure}[t]
    \centering
    \includegraphics[width=0.95\linewidth]{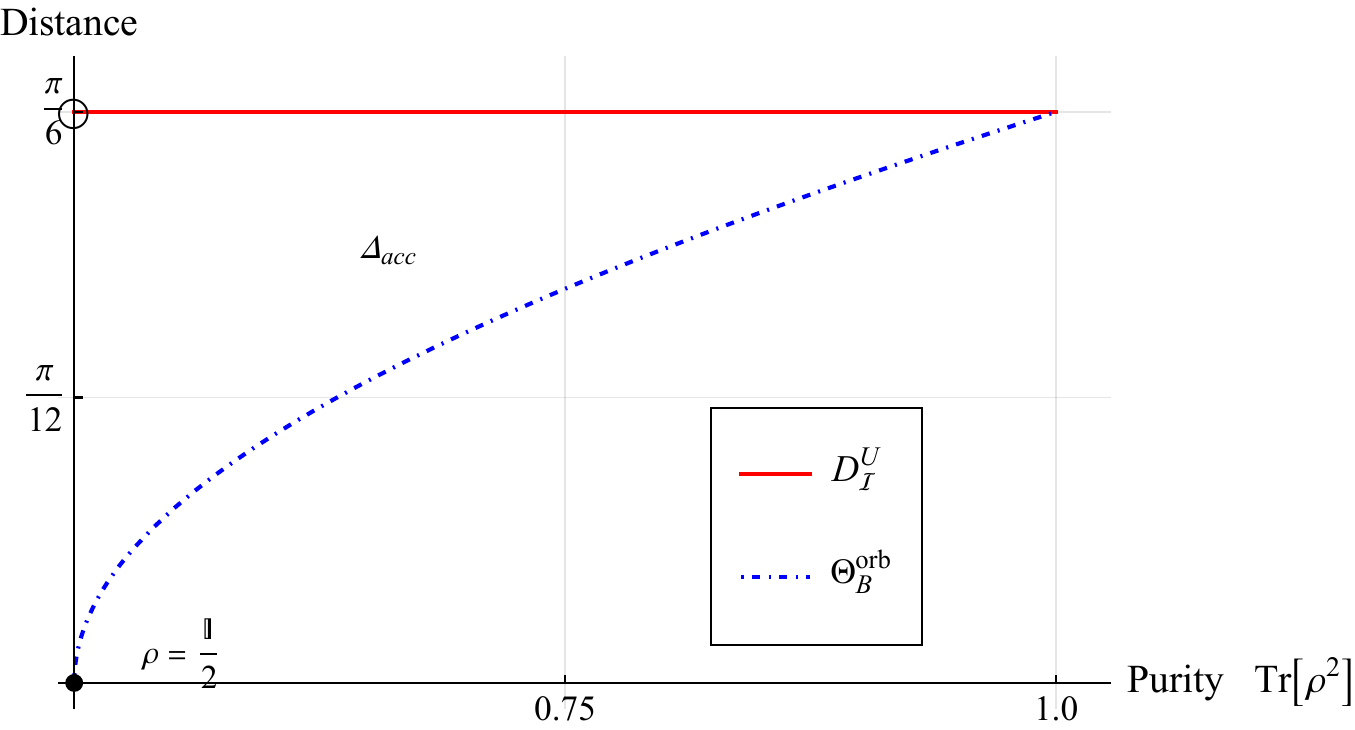}
    \caption{
    Unitary-accessible coherence-erasure distance
    $D_{\mathcal I}^{U}$ and orbit-restricted Bures angle
    $\Theta_B^{\rm orb}$ for the qubit family
    in Eq.~\eqref{eq:qubit_example_family}, with fixed Bloch direction
    $|r_z|/r=1/2$ and varying purity
    $\mathcal P=\operatorname{Tr}\rho^2$.
    In the nondegenerate sector,
    $D_{\mathcal I}^{U}=\pi/6$ is independent of the purity, whereas
    $\Theta_B^{\rm orb}$ decreases with increasing mixedness.
    Their separation is the accessibility difference
    $\Delta_{\rm acc}$.
    The open circle at $\mathcal P=1/2$ denotes the nondegenerate limit,
    while the filled point corresponds to the maximally mixed state
    $\rho=\mathbb I/2$, for which both distances vanish.
    }
    \label{qubitfig1}
\end{figure}
The two limiting regimes shown in Fig.~\ref{qubitfig1} are
particularly instructive. For the pure state endpoint $r=1$,
\begin{equation*}
    D_{\mathcal I}^{U}(\rho_1)
    =
    \Theta_{B}^{\rm orb}(\rho_1)
    =
    \frac{\pi}{6},
\end{equation*}
in agreement with the pure state equality discussed in
Sec.~\ref{sec:framework}.

As the state becomes increasingly mixed,
\begin{equation*}
    \Theta_{B}^{\rm orb}(\rho_r)
    \longrightarrow 0,
    \qquad
    r\to0^+,
\end{equation*}
whereas
\begin{equation*}
    D_{\mathcal I}^{U}(\rho_r)
    =
    \frac{\pi}{6}
\end{equation*}
remains unchanged throughout the nondegenerate sector.

At the maximally mixed state itself, however,
$\rho=\mathbb I/2$, both distances vanish. The open and filled
points in Fig.~\ref{qubitfig1} therefore distinguish the
nondegenerate limit $r\to0^+$ from the fully degenerate state
$r=0$.

This example makes the geometric distinction particularly
transparent. As the state becomes more mixed, the Bures angle
decreases and the accessibility difference grows, whereas the
unitary-accessible coherence-erasure distance remains fixed
throughout the nondegenerate family. The latter continues to measure
the finite rotation of the eigenbasis relative to the incoherent
reference basis.

\subsection{Qutrit examples and saturation of the bound}
\label{subsec:qutrit_examples}

We now illustrate the qutrit result by fixing the spectrum and
varying only the eigenbasis. Consider the nondegenerate spectrum
\begin{equation}
    \Lambda
    =
    \operatorname{diag}
    \left(
        \frac35,
        \frac3{10},
        \frac1{10}
    \right),
    \label{eq:qutrit_example_spectrum}
\end{equation}
and the family
\begin{equation*}
    \rho(R)
    =
    R\Lambda R^\dagger,
    \qquad
    R\in U(3).
    \label{eq:qutrit_example_family}
\end{equation*}
Since the spectrum is fixed and nondegenerate,
$D_{\mathcal I}^{U}(\rho(R))$ depends only on the orientation of
the eigenbasis $R$ relative to the reference basis.

We consider three representative choices of $R$.
First, for $R_0=I_3$, we have $\rho_0=\Lambda$, which is already
incoherent. Hence,
\begin{equation*}
    D_{\mathcal I}^{U}(\rho_0)=0.
    \label{eq:qutrit_example_identity}
\end{equation*}

Next, consider the qutrit Fourier basis
\begin{equation*}
    F_3
    =
    \frac1{\sqrt3}
    \begin{pmatrix}
        1 & 1 & 1\\
        1 & \omega & \omega^2\\
        1 & \omega^2 & \omega
    \end{pmatrix},
    \qquad
    \omega=e^{2\pi i/3}.
    \label{eq:qutrit_Fourier_basis}
\end{equation*}
For the same spectrum,
\begin{equation*}
    \rho_F
    :=
    F_3\Lambda F_3^\dagger
    =
    \begin{pmatrix}
        \dfrac13
        &
        \dfrac{2}{15}-\dfrac{i\sqrt3}{30}
        &
        \dfrac{2}{15}+\dfrac{i\sqrt3}{30}
        \\[2mm]
        \dfrac{2}{15}+\dfrac{i\sqrt3}{30}
        &
        \dfrac13
        &
        \dfrac{2}{15}-\dfrac{i\sqrt3}{30}
        \\[2mm]
        \dfrac{2}{15}-\dfrac{i\sqrt3}{30}
        &
        \dfrac{2}{15}+\dfrac{i\sqrt3}{30}
        &
        \dfrac13
    \end{pmatrix}.
    \label{eq:qutrit_example_rhoF}
\end{equation*}
Since $\mu(F_3)=1/2$,
\begin{equation*}
    D_{\mathcal I}^{U}(\rho_F)
    =
    \arccos\frac12
    =
    \frac{\pi}{3}
    \simeq
    60^\circ.
    \label{eq:qutrit_example_Fourier}
\end{equation*}
Thus, although the Fourier basis is mutually unbiased with respect
to the reference basis, it does not attain the qutrit bound.

Finally, consider the extremal eigenbasis $R_\ast$ introduced in
Sec.~\ref{subsec:qutrit_extremal}. For the same spectrum,
\begin{equation*}
    \rho_\ast
    :=
    R_\ast\Lambda R_\ast^\dagger
    =
    \begin{pmatrix}
        \dfrac7{20}
        &
        -\dfrac{\sqrt2}{8}
        &
        -\dfrac{\sqrt2}{8}
        \\[2mm]
        -\dfrac{\sqrt2}{8}
        &
        \dfrac{13}{40}
        &
        \dfrac1{40}
        \\[2mm]
        -\dfrac{\sqrt2}{8}
        &
        \dfrac1{40}
        &
        \dfrac{13}{40}
    \end{pmatrix}.
    \label{eq:qutrit_example_rhostar}
\end{equation*}
By Proposition~\ref{prop:qutrit_extremal},
\begin{equation*}
    D_{\mathcal I}^{U}(\rho_\ast)
    =
    \theta_\ast
    =
    \arccos
    \left(
        \frac{2\sqrt2-1}{4}
    \right)
    \simeq
    62.80^\circ.
    \label{eq:qutrit_example_extremal}
\end{equation*}
The three states have the same spectrum but different eigenbases,
and their distances satisfy
\begin{equation*}
    D_{\mathcal I}^{U}(\rho_0)
    =
    0,
    \qquad
    D_{\mathcal I}^{U}(\rho_F)
    =
    \frac{\pi}{3},
    \qquad
    D_{\mathcal I}^{U}(\rho_\ast)
    =
    \theta_\ast.
    \label{eq:qutrit_three_examples}
\end{equation*}
Thus, changing only the eigenbasis can vary the distance from zero
to the maximal qutrit value.

To show where these special cases lie among more general qutrit
eigenbases, we also sample $200$ matrices independently from the Haar
measure on $U(3)$. For each sampled matrix $R$, we compute
\begin{equation*}
    D_{\mathcal M}(R)
    :=
    \min_{M\in\mathcal M_3}
    \left\|
        i\operatorname{ln}(MR^\dagger)
    \right\|_\infty.
    \label{eq:qutrit_sample_distance}
\end{equation*}
For the fixed nondegenerate spectrum in
Eq.~\eqref{eq:qutrit_example_spectrum},
\begin{equation*}
    D_{\mathcal M}(R)
    =
    D_{\mathcal I}^{U}\bigl(\rho(R)\bigr).
\end{equation*}
Fig.~\ref{fig:qutrit_random_sampling} shows the normalized
histogram of the sampled values. The Fourier value $\pi/3$ and the
tight bound $\theta_\ast$ are marked by the blue dashed and red solid
lines, respectively. The sampling is included only to illustrate
the location of the special eigenbases relative to typical qutrit
eigenbases; the bound itself is established analytically.
\begin{figure}[t]
    \centering
    \includegraphics[width=0.95\linewidth]{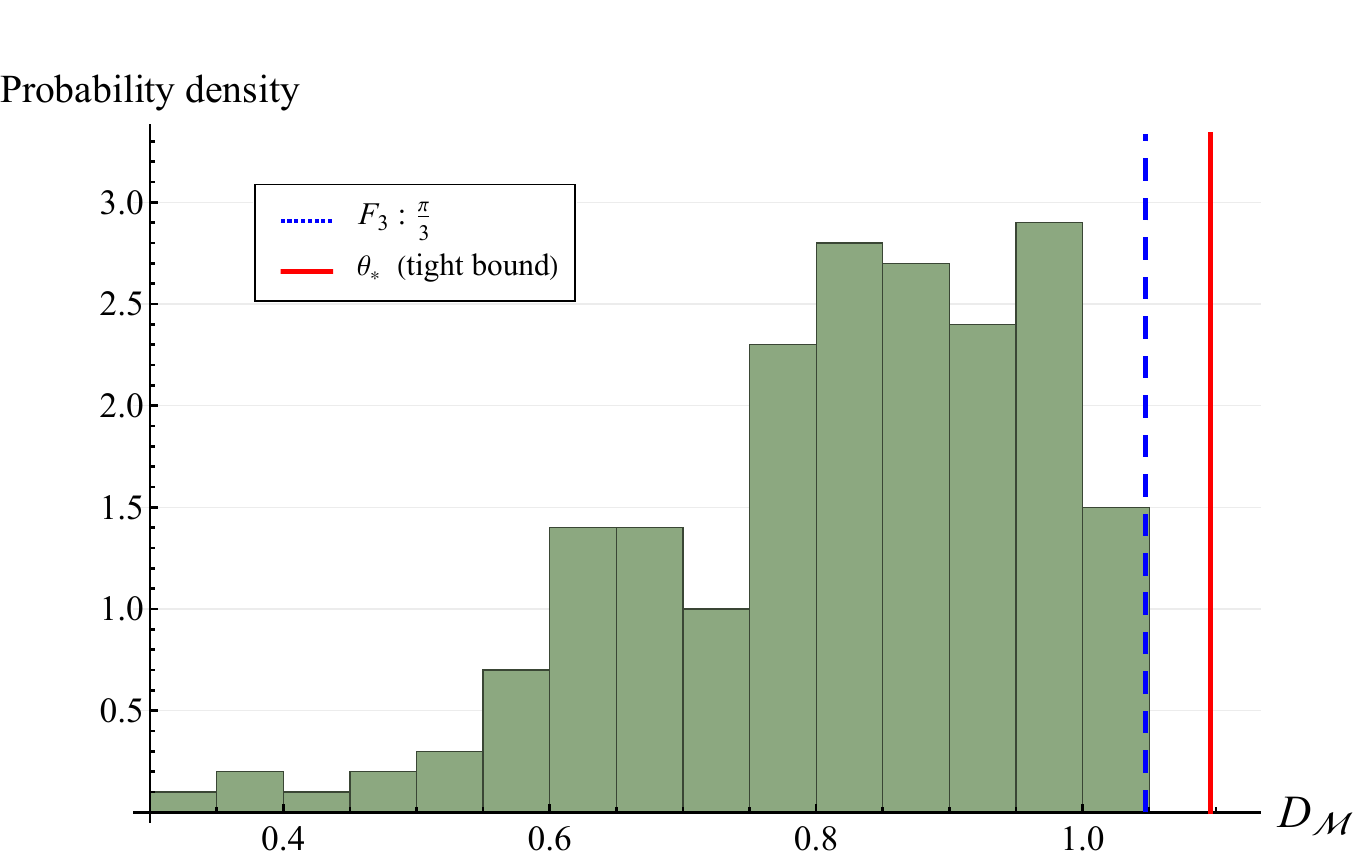}
    \caption{
    Normalized histogram of $D_{\mathcal M}(R)$ for
    $200$ Haar-random qutrit eigenbases $R\in U(3)$.
    The blue dashed line denotes the Fourier-basis value
    $D_{\mathcal M}(F_3)=\pi/3$, while the red solid line marks the
    tight qutrit bound
    $\theta_\ast=\arccos[(2\sqrt2-1)/4]$.
    The analytically constructed eigenbasis $R_\ast$ attains the
    bound.
    }
    \label{fig:qutrit_random_sampling}
\end{figure}
Under the bounded driving considered in
Sec.~\ref{subsec:erasure_time}, the same comparison applies to the
minimum erasure time. If $\|H_t\|_\infty\le h$, then
\begin{equation*}
    \tau_{\mathcal I}^{U}(\rho_0;h)=0,
    \qquad
    \tau_{\mathcal I}^{U}(\rho_F;h)
    =
    \frac{\pi}{3h},
    \qquad
    \tau_{\mathcal I}^{U}(\rho_\ast;h)
    =
    \frac{\theta_\ast}{h}.
    \label{eq:qutrit_three_erasure_times}
\end{equation*}
Together with the qubit example in
Sec.~\ref{subsec:qubit_example}, these results illustrate the role of
the eigenbasis and the spectrum from two different directions. In
the qubit example, the eigenbasis is fixed while the spectrum varies,
and $D_{\mathcal I}^{U}$ remains unchanged as long as the state is
nondegenerate. Here, the spectrum is fixed while the eigenbasis
varies, and the distance ranges from zero to the tight qutrit bound
$\theta_\ast$. Thus, for nondegenerate states, the unitary coherence
erasure distance is determined by the eigenbasis rather than by the
particular eigenvalues.
\section{Conclusions}
\label{sec:conclusion}

In this work, we introduced the unitary-accessible coherence-erasure
distance to quantify the minimum unitary-control cost required to
transform a quantum state into an incoherent state while remaining on
its unitary orbit. We established the ordering relation among the
orbit-restricted transport cost, the orbit-restricted Bures angle, and
the unitary-accessible distance, as summarized in
Theorem~\ref{thm:geometric_hierarchy}. The bounds become equal for pure
states, whereas mixed states can exhibit a strict separation between
state-space proximity and the unitary rotation required for coherence
erasure.

For qubits, we obtained an exact state-dependent expression for
$D_{\mathcal I}^{U}$ in Theorem~\ref{thm:qubit_exact_distance}. The
result shows that, for all nondegenerate qubit states, the
coherence-erasure cost depends on the orientation of the eigenbasis
relative to the incoherent reference basis rather than on the
eigenvalues themselves. The comparison with the orbit-restricted Bures
angle further demonstrates that this unitary-control distance can
remain finite even when the corresponding state-space distance becomes
arbitrarily small.

For nondegenerate qutrits, the problem was reduced to a
monomial unitary optimization and a tight universal bound was derived
in Eq.~\eqref{eq:qutrit_tight_bound}. An explicit eigenbasis
$R_\ast$ was shown to saturate this bound, establishing the exact
worst-case qutrit coherence-erasure distance. Together with the
minimum-time relation in Eq.~\eqref{eq:erasure_time_distance}, these
results provide a direct dynamical interpretation of the geometric
distance under bounded unitary driving.

Several natural directions remain for future study. In higher dimensions, it would be
natural to determine the corresponding monomial unitary covering
radius and characterize its saturating configurations. Spectral
degeneracies require an enlarged optimization over the stabilizer of
the spectrum, while physically restricted Hamiltonians may modify the
attainable geodesics and hence the minimum erasure time. Extensions
beyond closed unitary dynamics also provide a natural direction for
future work.

\section*{ACKNOWLEDGEMENTS}
This work is supported by the National Natural Science Foundation of China (NSFC) under Grant No. 12171044, the Natural Science Foundation of Jilin Province, China (Grant No. 20250102018JC), and the Specific Research Fund of the Innovation Platform for Academicians of Hainan Province.


\appendix

\section{Proof of the tight qutrit bound}
\label{app:qutrit_bound}

In this appendix we prove the tight qutrit bound stated in
Theorem~\ref{thm:qutrit_tight_bound}. For a nondegenerate qutrit
state $\rho=R\Lambda R^\dagger$, $R\in U(3)$,
$D_{\mathcal I}^{U}(\rho)=\arccos\mu(R)$,
where
\begin{equation}
\mu(R)
:=
\max_{M\in\mathcal M_3}
\lambda_{\min}
\left[
\operatorname{Re}(MR^\dagger)
\right].
\label{eq:app_mu}
\end{equation}
It is therefore sufficient to prove
\[
\mu(R)\ge c_\ast:=\frac{2\sqrt2-1}{4},
\qquad
R\in U(3),
\]
and to identify a matrix attaining equality. We also set
$\theta_\ast:=\arccos c_\ast$.

Define the maximal matching sum
\begin{equation}
S(R)
:=
\max_{\pi\in S_3}
\sum_{i=1}^{3}|R_{i,\pi(i)}|,
\label{eq:app_matching}
\end{equation}
and let $S_\ast:=\sqrt2+\frac12$, so that
$c_\ast=(S_\ast-1)/2$.

We use the following invariance property.

\begin{lemma}
\label{lem:app_monomial_invariance}
Let $A,B\in\mathcal M_3$. Then $\mu(ARB)=\mu(R)$ and
$\mu(R^T)=\mu(R)$. In particular, row and column permutations,
diagonal rephasings, and transposition do not change $\mu(R)$.
\end{lemma}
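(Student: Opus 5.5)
The plan is to prove both invariances by a change of variables in the maximization over $\mathcal M_3$ defining $\mu$, in Eq.~\eqref{eq:app_mu}. Two elementary facts are needed. First, $\mathcal M_3$ is a group that is closed under inversion, conjugate transpose, transpose, and complex conjugation. Second, for any unitary $W$ and any unitary $V$, the map $W\mapsto VWV^\dagger$ preserves the spectrum of $\operatorname{Re}W$, because $\operatorname{Re}(VWV^\dagger)=V(\operatorname{Re}W)V^\dagger$.

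\textbf{Left and right multiplication.} Let $R'=ARB$ with $A,B\in\mathcal M_3$. Then
\[
MR'^\dagger = M B^\dagger R^\dagger A^\dagger = A^\dagger\left(AMB^\dagger R^\dagger\right)A .
\]
Put $M'=AMB^\dagger$. As $M$ ranges over $\mathcal M_3$, so does $M'$, since the map $M\mapsto AMB^\dagger$ is a bijection of the group $\mathcal M_3$. Conjugation by the unitary $A^\dagger$ leaves $\lambda_{\min}(\operatorname{Re}\,\cdot)$ unchanged. Hence
\[
\lambda_{\min}\left[\operatorname{Re}(MR'^\dagger)\right]=\lambda_{\min}\left[\operatorname{Re}(M'R^\dagger)\right],
\]
and maximizing over $M$ gives $\mu(ARB)=\mu(R)$.

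\textbf{Transposition.} For a unitary $W$ we have $\operatorname{Re}(W^T)=(\operatorname{Re}W)^T$. A Hermitian matrix and its transpose have the same spectrum, so $\lambda_{\min}[\operatorname{Re}(W^T)]=\lambda_{\min}[\operatorname{Re}W]$. For $M\in\mathcal M_3$,
\[
MR^{T\dagger}=M\bar R=\left(R^\dagger M^T\right)^T .
\]
This gives $\lambda_{\min}[\operatorname{Re}(M\bar R)]=\lambda_{\min}[\operatorname{Re}(R^\dagger M^T)]$. Next, $R^\dagger M^T$ is unitarily conjugate to $M^T R^\dagger$, namely
\[
M^T R^\dagger = M^T\left(R^\dagger M^T\right)(M^T)^\dagger .
\]
Since $M\mapsto M^T$ is a bijection of $\mathcal M_3$, maximizing over $M$ yields $\mu(R^T)=\mu(R)$.

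The final sentence of the lemma then follows immediately. Row and column permutations and diagonal rephasings are left or right multiplications by elements of $\mathcal M_3$, and transposition was handled separately. There is no real obstacle here. The only step requiring care is the transpose case, where the order of the factors reverses; it is resolved by the cyclic conjugation $XY\sim YX$, which is valid when $X$ is unitary.
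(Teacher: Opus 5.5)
Your proof follows the same route as the paper's. It changes variables in the maximization over the group $\mathcal M_3$ and uses the invariance of $\lambda_{\min}(\operatorname{Re}\,\cdot)$ under unitary conjugation. For the transpose it uses transpose-invariance of a Hermitian spectrum, the similarity $XY\sim YX$, and the bijection $M\mapsto M^T$. The transpose part is correct as written; the paper starts from $(MR^\dagger)^T=\overline R M^T$ instead of from $M\overline R$, which makes no difference.

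There is, however, an algebra error in the left/right multiplication step. The displayed identity $MB^\dagger R^\dagger A^\dagger=A^\dagger(AMB^\dagger R^\dagger)A$ is false. Its right-hand side simplifies to $MB^\dagger R^\dagger A$, so it holds only when $A=A^\dagger$. With your choice $M'=AMB^\dagger$, the pointwise equality $\lambda_{\min}[\operatorname{Re}(MR'^\dagger)]=\lambda_{\min}[\operatorname{Re}(M'R^\dagger)]$ therefore fails in general. For example, take $M=B=I$, $A=\operatorname{diag}(e^{i\pi/4},1,1)$ and $R=A^\dagger$: the left side is $1$ and the right side is $0$. The correct substitution is $M'=A^\dagger MB^\dagger$, which gives $MR'^\dagger=A\,(M'R^\dagger)\,A^\dagger$; this is exactly the paper's $N$. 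Since $M\mapsto A^\dagger MB^\dagger$ is still a bijection of $\mathcal M_3$, the rest of your argument then goes through unchanged.
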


\begin{proof}
Let $R'=ARB$. For any $M\in\mathcal M_3$, define
$N=A^\dagger M B^\dagger$. Since $\mathcal M_3$ is a group,
$N\in\mathcal M_3$, and $M\mapsto N$ is bijective. Moreover,
$MR'^\dagger=ANR^\dagger A^\dagger$, and hence
\[
\operatorname{Re}(MR'^\dagger)
=
A\,\operatorname{Re}(NR^\dagger)A^\dagger.
\]
Thus the two Hermitian matrices have the same minimum eigenvalue.
Taking the maximum over $\mathcal M_3$ gives
$\mu(ARB)=\mu(R)$.

For the transpose, $(MR^\dagger)^T=\overline R M^T$.
Transposition preserves the eigenvalues of the Hermitian part, while
$\overline R M^T$ is unitarily similar to
$M^T\overline R=M^T(R^T)^\dagger$. Since
$M\mapsto M^T$ is a bijection of $\mathcal M_3$,
$\mu(R^T)=\mu(R)$.
\end{proof}

We first consider the large-matching region, then the interior and
orthostochastic boundary of the hard region.

\subsection{The large-matching region}
\label{app:large_matching}

\begin{lemma}
\label{lem:large_matching}
Let $R\in U(3)$. If $S(R)\ge S_\ast$, then
\[
\mu(R)
\ge
\frac{S(R)-1}{2}
\ge
c_\ast.
\]
\end{lemma}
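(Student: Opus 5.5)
The plan is to use the monomial freedom twice: first to put the optimal matching on the diagonal, and then to choose a global phase. The second step converts the matching sum into a bound on the eigenphases of a single relative unitary.

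\emph{Step 1 (normalization).} By Lemma~\ref{lem:app_monomial_invariance}, $\mu$ is unchanged under column permutations and diagonal rephasings of $R$. I will choose $\pi$ attaining $S(R)$ in Eq.~\eqref{eq:app_matching} and permute the columns so that this matching sits on the diagonal. I then rephase the rows so that every diagonal entry is real and nonnegative, $R_{ii}=|R_{ii}|$, which gives $\operatorname{tr}R=S(R)$. Put $W:=R^\dagger$, so that $\operatorname{tr}W=\overline{\operatorname{tr}R}=S(R)$ is real.

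\emph{Step 2 (global phase).} For any $\varphi\in\mathbb R$, the scalar matrix $e^{i\varphi}I$ belongs to $\mathcal M_3$. It therefore suffices to find $\varphi$ with
\[
\lambda_{\min}\!\left[\operatorname{Re}(e^{i\varphi}W)\right]\ge \frac{S(R)-1}{2}.
\]
Let $e^{i\theta_1},e^{i\theta_2},e^{i\theta_3}$ be the eigenvalues of $W$. As in the proof of Lemma~\ref{lem:qutrit_mu},
\[
\lambda_{\min}\!\left[\operatorname{Re}(e^{i\varphi}W)\right]=\min_j\cos(\theta_j+\varphi).
\]
The question thus becomes elementary plane geometry. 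Three unit vectors $z_j=e^{i\theta_j}$ satisfy $\left|\sum_j z_j\right|\ge \operatorname{Re}\operatorname{tr}W=S(R)$, and I must show that some rotation puts all of them within angle $\beta$ of the positive real axis, where $\cos\beta\ge(S-1)/2$.

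\emph{Step 3 (the arc lemma).} Let $2\beta$ be the length of the shortest closed arc of the unit circle containing $z_1,z_2,z_3$. Rotate so that this arc is centred at $1$; then $\min_j\cos(\theta_j+\varphi)=\cos\beta$.
\begin{itemize}
\item If $\beta\le\pi/2$, two of the points sit at the endpoints $e^{\pm i\beta}$, since the arc is minimal. The third point is some $e^{i\gamma}$ on the arc. Then
\[
\left|\sum_j z_j\right|=\left|2\cos\beta+e^{i\gamma}\right|\le 1+2\cos\beta .
\]
\item If $\beta>\pi/2$, the points lie in no open half-circle, so $0$ is in their convex hull. From this I will show $\left|\sum_j z_j\right|\le1$. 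One way is to write $0=\sum_j t_jz_j$ with $t_j\ge0$, $\sum_j t_j=1$, and bound $\sum_j z_j=\sum_j (1-3t_j)z_j$ directly. Another is to check the cases with two points antipodal or spanning more than a half-circle. This contradicts $S\ge S_\ast>1$, so the case cannot occur.
\end{itemize}
Combining the first case with Step 2 gives $S(R)\le 1+2\cos\beta$, that is, $\mu(R)\ge\cos\beta\ge (S(R)-1)/2$. Finally, $S(R)\ge S_\ast$ and $c_\ast=(S_\ast-1)/2$ give $\mu(R)\ge c_\ast$.

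The main obstacle I expect is Step 3: making the minimal-arc argument airtight. The points to check are that the two extreme eigenvalues really lie at the arc endpoints, and the degenerate case $\beta>\pi/2$. The rest is bookkeeping with Lemma~\ref{lem:app_monomial_invariance} and the identity $\lambda_{\min}(\operatorname{Re}W)=\min_j\cos\theta_j$.
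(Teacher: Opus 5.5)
Your proposal is correct and matches the paper's proof essentially step for step: the paper picks $M_0\in\mathcal M_3$ with $\operatorname{tr}(M_0R^\dagger)=S(R)$ (equivalent to your normalization), excludes a shortest containing arc of length $\ge\pi$ because three points in no open semicircle have $|\sum_j z_j|\le 1<S(R)$, and applies the global phase $e^{i\alpha}M_0\in\mathcal M_3$ to obtain $\mu(R)\ge\cos\delta\ge(S(R)-1)/2$. One caution on the sub-claim $|\sum_j z_j|\le 1$: the identity $\sum_j z_j=\sum_j(1-3t_j)z_j$ combined with the triangle inequality only gives $\sum_j|1-3t_j|$, which equals $2$ at $t=(1/2,1/2,0)$, so either use your case check or take weights $1-t_j/t_{\rm mid}$, which gives the bound $(t_{\max}-t_{\min})/t_{\rm mid}\le 1$ because $t_{\max}=|t_{\rm mid}z+t_{\min}z'|\le t_{\rm mid}+t_{\min}$.
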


\begin{proof}
Choose a permutation attaining the maximum in
Eq.~\eqref{eq:app_matching}. The diagonal phases in $\mathcal M_3$
allow us to choose $M_0\in\mathcal M_3$ such that
$\operatorname{tr}(M_0R^\dagger)=S(R)>0$.
Set $W=M_0R^\dagger$.

Let $2\delta$ be the length of the shortest closed arc containing the
three eigenvalues of $W$. Since $|\operatorname{tr}W|=S(R)>1$, these
eigenvalues lie in an open semicircle; otherwise three points on the
unit circle not contained in any open semicircle have a vector sum of
modulus at most one. Hence $0\le\delta<\pi/2$.

Choose $\alpha$ so that the shortest containing arc of
$e^{i\alpha}W$ is centered at zero. Its extreme eigenvalues are
$e^{-i\delta}$ and $e^{i\delta}$, while the third is $e^{ix}$ with
$|x|\le\delta$. Therefore
$
S(R)
=
|\operatorname{tr}W|
=
\left|2\cos\delta+e^{ix}\right|
\le
1+2\cos\delta,
$ and hence
\begin{equation}
\cos\delta
\ge
\frac{S(R)-1}{2}.
\label{eq:app_delta_bound}
\end{equation}
Moreover,
$
\lambda_{\min}
\left[
\operatorname{Re}(e^{i\alpha}W)
\right]
=
\cos\delta.
$
Since $e^{i\alpha}M_0\in\mathcal M_3$,
Eq.~\eqref{eq:app_delta_bound} gives
\[
\mu(R)
\ge
\frac{S(R)-1}{2}
\ge
\frac{S_\ast-1}{2}
=
c_\ast.
\]
\end{proof}

\begin{lemma}
\label{lem:matching_lower_bound}
For every $R\in U(3)$, $S(R)\ge\sqrt3$.
\end{lemma}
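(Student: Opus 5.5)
We want to show that every $R\in U(3)$ has a permutation $\pi$ with $\sum_i |R_{i,\pi(i)}|\ge\sqrt3$. The plan is to average over permutations and then use a lower bound on the resulting weighted sum. Let $B_{ij}=|R_{ij}|$. Since $R$ is unitary, the matrix $(B_{ij}^2)$ is doubly stochastic (unistochastic).

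The six permutations of $S_3$ cover every entry $(i,j)$ exactly twice. Hence
\[
\max_{\pi\in S_3}\sum_i B_{i,\pi(i)}\ \ge\ \frac16\sum_{\pi}\sum_i B_{i,\pi(i)}=\frac13\sum_{i,j}B_{ij}.
\]
The averaged quantity $\frac13\sum_{i,j}B_{ij}$ is then bounded below using $0\le B_{ij}\le 1$, which gives $B_{ij}\ge B_{ij}^2$. Summing yields $\sum_{ij}B_{ij}\ge\sum_{ij}B_{ij}^2=3$. This only gives $S(R)\ge 1$, which is too weak, so a sharper argument is needed.

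A better route is through Birkhoff's theorem. Write the doubly stochastic matrix $(B_{ij}^2)$ as a convex combination $\sum_\pi t_\pi P_\pi$. Then for some permutation $\pi$ with $t_\pi\ge 1/6$, every entry $B_{i,\pi(i)}^2\ge t_\pi$. This gives $S(R)\ge 3/\sqrt6$, which is still short of $\sqrt3$.

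For the sharp constant I would instead work directly with the six matching sums. Note that $\sqrt3$ is exactly the value attained by the Fourier matrix $F_3$ (all $B_{ij}=1/\sqrt3$). So the claim is that among unistochastic $3\times 3$ matrices the maximal matching sum is minimized at the flat matrix.

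The key step is to use the column constraint $\sum_i B_{ij}^2=1$ together with the three disjoint "cyclic" matchings $\pi=\mathrm{id},(123),(132)$. These partition the nine entries, so
\[
\max_{\pi}\sum_i B_{i\pi(i)}\ \ge\ \frac13\sum_{ij}B_{ij}.
\]
It therefore suffices to show $\sum_{ij}B_{ij}\ge 3\sqrt3$. This is false in general: a permutation matrix gives $3$. So the averaging must be combined with a case analysis.

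In the case where some entry is large, say $B_{11}\ge$ a threshold, I would take $\pi$ to pass through $(1,1)$ and bound the remaining $2\times2$ block. Its best matching has sum at least the sum of the two entries of the best row-compatible pair, and these are controlled by orthogonality: the $2\times2$ minor of a unitary has singular values $1$ and $B_{11}$.

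Concretely, for the lower-right $2\times2$ block $C$, the maximum of its two diagonal sums is at least $\frac12\sum|C_{ij}|\ge\frac12\|C\|_F\cdot$(const). I would use $\|C\|_F^2=1+B_{11}^2$ and $\max(|\operatorname{tr}C|,|\operatorname{tr}(CX)|)\ge$ largest singular value $=1$ for the appropriate swap. This gives $S\ge B_{11}+1$, which is at least $\sqrt3$ when $B_{11}\ge\sqrt3-1$.

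In the case where all $B_{ij}<\sqrt3-1$, each entry satisfies $B_{ij}\ge B_{ij}^2/(\sqrt3-1)$. Then $\sum_{ij}B_{ij}\ge 3/(\sqrt3-1)\approx4.1$, which is still short of $3\sqrt3\approx5.2$. The case thresholds must therefore be refined. I expect this balancing to be the main obstacle.

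A cleaner alternative I would try first is the concave estimate. On the set of unistochastic matrices, the function $\sum_{\text{3 entries}}\sqrt{x}$ with $x=B^2$ is concave in $x$, so its minimum is at extreme points. Applying this to $\frac13\sum_{\pi\text{ cyclic}}$ does not directly help, because a maximum of concave functions is not concave. So I would fall back on the explicit case split, optimizing the threshold so that $B_{11}+(\text{block bound})\ge\sqrt3$ in one case and the averaging bound $\ge\sqrt3$ in the other, using unistochasticity of rows (row sums of squares equal $1$, so $\sum_j B_{ij}\ge$ the minimal $\ell_1$-norm under an $\ell_\infty$ cap).
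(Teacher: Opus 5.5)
\textbf{There is a genuine gap: the proposal does not yet prove the lemma.} Your averaging and Birkhoff steps are correct, but they only give $S(R)\ge 1$ and $S(R)\ge 3/\sqrt6$. The case split you fall back on does not close, for two reasons.

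\emph{The block estimate is false as stated.} The inequality $\max(|\operatorname{tr}C|,|\operatorname{tr}(CX)|)\ge\sigma_{\max}(C)$ fails for $C=\operatorname{diag}(1,-1)$, the block of $R=\operatorname{diag}(1,1,-1)$, where both traces vanish. The conclusion you want, that some matching of $C$ has modulus-sum at least $1$, can be rescued from $|C_{11}C_{22}|+|C_{12}C_{21}|\ge|\det C|=B_{11}$. That gives $\sqrt{2B_{11}}$, which is at least $1$ only when $B_{11}\ge 1/2$.

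\emph{The small-entry case cannot be handled by averaging, whatever the threshold.} If all $B_{ij}\le m$, averaging gives only $S\ge 1/m$. Each row satisfies $\sum_j B_{ij}^2=1$, so some entry is at least $1/\sqrt3$. Hence $1/m\le\sqrt3$, with equality only at the flat matrix. So tuning the threshold cannot balance the two cases. The whole burden would fall on showing that the largest entry plus the best complementary matching is at least $\sqrt3$ when the largest entry lies in $[1/\sqrt3,\sqrt3-1)$. Your argument does not address that range, and you would have to use the maximality of that entry in an essential way.

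\textbf{The missing idea is to bound a \emph{product} along a permutation rather than a sum or a minimum.} This is exactly what the paper's proof does.
\begin{itemize}
\item Set $P_{ij}=|R_{ij}|^2$, which is doubly stochastic.
\item Van der Waerden's permanent bound gives $\operatorname{per}P\ge 3!/3^3=2/9$.
\item The permanent has six nonnegative terms, so some $\pi$ satisfies $\prod_i P_{i\pi(i)}\ge 1/27$.
\item AM--GM then gives $\sum_i|R_{i\pi(i)}|\ge 3\bigl(\prod_i P_{i\pi(i)}\bigr)^{1/6}\ge 3\cdot 27^{-1/6}=\sqrt3$.
\end{itemize}
This is the multiplicative sharpening of your Birkhoff step, which only guarantees each $P_{i\pi(i)}\ge 1/6$. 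It is tight precisely at the flat (Fourier-type) modulus matrix, where the permanent bound and AM--GM are both equalities. That is why it reaches the sharp constant where your additive estimates cannot.
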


\begin{proof}
Let $P_{ij}:=|R_{ij}|^2$. Since $R$ is unitary, $P$ is doubly
stochastic. The van der Waerden bound gives
$\operatorname{per}P\ge2/9$. Since the permanent contains six
nonnegative terms, some $\pi\in S_3$ satisfies
\[
\prod_{i=1}^{3}P_{i,\pi(i)}
\ge\frac1{27},
\qquad
\prod_{i=1}^{3}|R_{i,\pi(i)}|
\ge\frac1{3\sqrt3}.
\]
AM--GM then gives
\[
\sum_{i=1}^{3}|R_{i,\pi(i)}|\ge\sqrt3.
\]
Taking the maximum over $\pi$ proves the claim.
\end{proof}

Thus it remains to consider the hard region
$\sqrt3\le S(R)<S_\ast$.

\subsection{The hard unistochastic interior}
\label{app:qutrit_interior}

By Lemma~\ref{lem:app_monomial_invariance}, we may choose a maximal
matching to be diagonal and use diagonal rephasings to make its
entries real and nonnegative. Write
\[
|R|
=
\begin{pmatrix}
a & u & t\\
v & b & w\\
r & z & c
\end{pmatrix},
\qquad
a,b,c,u,v,t,w,r,z\ge0,
\]
so that $S=a+b+c$.

Maximality of the diagonal matching gives
\begin{align}
T_{12}:&\quad u+v\le a+b,
\label{eq:app_T12}\\
T_{23}:&\quad w+z\le b+c,
\label{eq:app_T23}\\
T_{13}:&\quad t+r\le a+c,
\label{eq:app_T13}\\
C_{+}:&\quad u+w+r\le S,
\label{eq:app_Cplus}\\
C_{-}:&\quad t+v+z\le S.
\label{eq:app_Cminus}
\end{align}
A constraint is called active when equality holds. Define the
corresponding slacks
\begin{align*}
\Delta_{T_{12}}&:=a+b-u-v,\\
\Delta_{T_{23}}&:=b+c-w-z,\\
\Delta_{T_{13}}&:=a+c-t-r,\\
\Delta_{C_+}&:=S-u-w-r,\\
\Delta_{C_-}&:=S-t-v-z.
\end{align*}

Up to relabeling, transposition, and interchange of the two cyclic
orientations, every nonempty active set reduces to one of the cases
below.

Write $\det R=e^{i\gamma}$. The cofactor identity
$ab-R_{12}R_{21}=ce^{i\gamma}$ gives
$u^2v^2=a^2b^2+c^2-2abc\cos\gamma$. Hence
\begin{equation}
\Phi
:=
\cos\gamma
=
\frac{a^2b^2+c^2-u^2v^2}{2abc}.
\label{eq:app_Phi}
\end{equation}

We next relate $\Phi$ to the smallest-eigenvalue optimization.
For the chosen diagonal matching, set $W=R^\dagger$. Since the
diagonal entries of $R$ are nonnegative, $\operatorname{tr}W=S>1$.
After centering the shortest containing arc of the eigenvalues of
$W$, write them as $e^{-i\delta}, e^{ix}, e^{i\delta}, |x|\le\delta,$
and define $\kappa:=\cos\delta$. For this matching, $\kappa$ is the
optimized minimum eigenvalue after the remaining overall phase
rotation, so
\begin{equation}
\mu(R)\ge\kappa.
\label{eq:app_mu_ge_kappa}
\end{equation}

Since
$S=\left|2\kappa+e^{ix}\right|$,
we have
\begin{equation}
\cos x
=
\frac{S^2-4\kappa^2-1}{4\kappa}.
\label{eq:app_cosx}
\end{equation}
If $\alpha$ is the centering phase, the trace and determinant give
$e^{-i\alpha}(2\kappa+e^{ix})=S$ and
$x\equiv3\alpha-\gamma\pmod{2\pi}$. Eliminating $\alpha$ and $x$
yields
\begin{widetext}
\begin{equation}
\Phi
=
\Psi(S,\kappa)
:=
\frac{
S^4+16S^2\kappa^4+4S^2\kappa^2-2S^2
-64\kappa^6+48\kappa^4-12\kappa^2+1
}{
8S^3\kappa^2
}.
\label{eq:app_Phi_kappa}
\end{equation}
\end{widetext}
Differentiation gives
\begin{equation}
\frac{\partial\Psi}{\partial\kappa}
=
-
\frac{
(4\kappa^2-1-S)
(4\kappa^2-1+S)
(8\kappa^2+1-S^2)
}{
4S^3\kappa^3
}.
\label{eq:app_dPhi_kappa}
\end{equation}
Because $|x|\le\delta$,
\[
8\kappa^2+1-S^2\le0,
\qquad
\kappa\ge\frac{S-1}{2}.
\]
Moreover, $\kappa<1$ since $S<3$. Using
$S^2\ge8\kappa^2+1$, we obtain
$S>4\kappa^2-1$, while
$4\kappa^2-1+S>S-1>0$. Hence
$\partial_\kappa\Psi<0$ in the interior of the admissible branch.
Thus minimizing $\kappa$ at fixed $S$ is equivalent to maximizing
$\Phi$.

For fixed $\sqrt3<S<S_\ast$, the squared-modulus matrices satisfying
normalization and the five matching inequalities form a compact
unistochastic feasible set. By Lemma~\ref{lem:orthostochastic} below,
its boundary satisfies $S\ge S_\ast$ and is therefore excluded.
Thus $\Phi$ attains a maximum in the interior.

At $S=\sqrt3$, every matching sum is at most $\sqrt3$. AM--GM gives
each permanent term at most $1/27$, so equality in the permanent
bound forces all moduli to equal $1/\sqrt3$. Then
Eqs.~\eqref{eq:app_Phi}--\eqref{eq:app_Phi_kappa} give
$\kappa=1/2$.

For $\sqrt3<S<S_\ast$, use $a,b,u^2$ as local coordinates and set
$x=u^2$, $c=S-a-b$. Normalization gives
\[
t^2=1-a^2-x,\qquad
z^2=1-b^2-x,\qquad
v^2=1-a^2-b^2-x+c^2.
\]

If no matching constraint is active, then
\[
\Phi(a,b,x)
=
\frac{
a^2b^2+c^2-x(1-a^2-b^2-x+c^2)
}{
2abc
}.
\]
At fixed $S,a,b$,
$\partial_x^2\Phi=1/(abc)>0$, so no interior maximum occurs.

Suppose next that $C_+$ is active. Normalization gives
$w^2-a^2=u^2-c^2$ and $r^2-b^2=u^2-c^2$, hence
\begin{equation}
u+w+r-S
=
(u-c)
\left[
1+\frac{u+c}{w+a}
+\frac{u+c}{r+b}
\right].
\label{eq:app_cycle_factorization}
\end{equation}
The bracket is positive, so
\begin{equation}
u=c,\qquad w=a,\qquad r=b.
\label{eq:app_cycle_rigidity}
\end{equation}
Thus
$
\Phi_C
=
\frac12
\left(
\frac{ab}{c}
+
\frac{ac}{b}
+
\frac{bc}{a}
\right)$,
$
a+b+c=S.
$
Its only interior stationary point is $a=b=c=S/3$. The Hessian in
two independent coordinates is
\[
\frac3a
\begin{pmatrix}
1&1/2\\
1/2&1
\end{pmatrix},
\]
which is positive definite. Hence this point is a strict minimum, so
a maximum on the single-cycle stratum requires another active
constraint.

Suppose a single transposition is active. Up to relabeling, take
$u+v=a+b$ and set
$p:=a+b$, $q:=a-b$, $c=S-p$, and $d:=u-v$. Then
\begin{equation}
d^2
=
2+2c^2-2p^2-q^2.
\label{eq:app_d_oneT}
\end{equation}
Substitution into Eq.~\eqref{eq:app_Phi} gives
\[
\Phi_T(p,q)
=
\frac{
(p^2-q^2)^2-(p^2-d^2)^2+16c^2
}{
8c(p^2-q^2)
}.
\]
At fixed $S$ and $p$,
\[
\frac{\partial\Phi_T}{\partial q}
=
-
\frac{
q[2p^2-(1+c)^2][2p^2-(1-c)^2]
}{
c(p^2-q^2)^2
}.
\]
Thus an interior stationary point must satisfy
$q=0$, $2p^2=(1+c)^2$, or $2p^2=(1-c)^2$.

On $q=0$,
\[
\left.
\frac{\partial\Phi_T}{\partial p}
\right|_{q=0}
=
-
\frac{
(S^2-1)P_S(p)
}{
2p^3(S-p)^2
},
\]
where
$
P_S(p)
=
5p^3-8Sp^2+(7S^2-3)p-2S^3+2S.
$
Since
$P_S'(p)=15p^2-16Sp+7S^2-3$ has discriminant
$180-164S^2<0$ for $S\ge\sqrt3$, $P_S$ is strictly increasing.
Moreover, $P_S(S-1)=2(S-1)^3>0$. Hence a zero would require
$p<S-1$, so $c=S-p>1$, which is impossible.

The branch $2p^2=(1-c)^2$ gives
$S=c+(1-c)/\sqrt2\le1$, contradicting $S\ge\sqrt3$.
For $2p^2=(1+c)^2$, $p=(1+c)/\sqrt2$ and $c>1/2$. At these points,
\begin{widetext}
\[
\left.
\frac{\partial\Phi_T}{\partial p}
\right|_{2p^2=(1+c)^2}
=
\frac{
4(1+\sqrt2)(1+c)
(c^2-6c-2q^2+1)
(c^2+2c-2q^2+1)
}{
c(\sqrt2 c-2q+\sqrt2)^2
(\sqrt2 c+2q+\sqrt2)^2
}.
\]
\end{widetext}
The second factor is
$c^2+2c-2q^2+1=2(p^2-q^2)>0$, while the first is negative for
$c>1/2$. Hence no genuine single-transposition branch contains an
interior maximum.

At the single-$T$ fold, write $s=S\in(\sqrt3,S_\ast)$.
Equation~\eqref{eq:app_d_oneT} gives local coordinates $(q,d)$.
Fixing $q=q_0$,
\[
p(d)=p_0-\frac{d^2}{4s},
\qquad
c(d)=c_0+\frac{d^2}{4s},
\]
where $c_0=s-p_0$. Set
$\Delta_0:=p_0^2-q_0^2>0$ and $\Phi_0:=\Phi(0)$. Expanding gives
\[
\Phi(d)=\Phi_0+K_Td^2+O(d^4),
\]
with
$
K_T=
\frac{
p_0q_0^2+2sp_0^2+8c_0
-2\Phi_0(\Delta_0-2p_0c_0)
}{
8sc_0\Delta_0
}.
$
Since $|\Phi_0|\le1$, if $\Delta_0\ge2p_0c_0$ the numerator is at
least
$
(p_0+2)q_0^2
+2(s-1)p_0^2
+8c_0
+4p_0c_0
>0.
$
If $\Delta_0<2p_0c_0$, it is at least
$
p_0q_0^2
+2sp_0^2
+2\Delta_0
+4c_0(2-p_0)
>0,
$
since $p_0<2$. Thus $K_T>0$.

For a $TC$ configuration, take $T_{12}$ and $C_+$ active.
Equation~\eqref{eq:app_cycle_factorization} gives
$u=c, w=a, r=b, v=a+b-c,$
and hence
$
a+b=\frac{S+v}{2},
c=\frac{S-v}{2}.
$
Writing $q=a-b$, normalization yields
\begin{equation}
q^2
=
\frac{8-S^2-2Sv-9v^2}{4}.
\label{eq:app_TC_q}
\end{equation}
The reduced phase function is
\begin{widetext}
\[
\Phi_{TC}(S,v)
=
\frac{
S^4+4S^3v-2S^2v^2+8S^2
+52Sv^3-48Sv+9v^4-24v^2+16
}{
8(S-v)(S^2+2Sv+5v^2-4)
}.
\]
\end{widetext}
Differentiation gives
\[
\frac{\partial\Phi_{TC}}{\partial v}
=
-
\frac{
A(S,v)F(S,v)
}{
8(S-v)^2(S^2+2Sv+5v^2-4)^2
},
\]
where
$
A(S,v)=3S^2+2Sv+3v^2-4
$
and
\[
F(S,v)
=
-S^4+4S^3v-54S^2v^2+24S^2
-28Sv^3+16Sv+15v^4+24v^2-16.
\]
Clearly $A(S,v)>0$. Writing $S=2c+v$,
\[
\frac{F(S,v)}{16}
=
D_0Q
+
4\left[
2cv^3+(3v^2-1)(2v^2-1)
\right],
\]
where
\[
D_0=c^2+2cv+2v^2-1=2ab>0,
\qquad
Q=-c^2+2cv+5-14v^2.
\]
From $q^2\ge0$ and $2c+v\ge\sqrt3$,
$0\le v\le1/\sqrt3$, so
$(3v^2-1)(2v^2-1)\ge0$. Moreover,
\[
Q
\ge
3+4cv-11v^2
\ge
3+2\sqrt3\,v-13v^2
\ge
\frac23.
\]
Thus $F(S,v)>0$ and
\begin{equation}
\frac{\partial\Phi_{TC}}{\partial v}<0.
\label{eq:app_TC_monotonic}
\end{equation}

At the $TC$ fold $q=0$, Eq.~\eqref{eq:app_TC_q} makes $q$ a local
coordinate. If $v_0=v(0)>0$, then
\[
v(q)
=
v_0-\frac{2q^2}{s+9v_0}+O(q^4),
\qquad
\Phi(q)
=
\Phi(0)+K_{TC}q^2+O(q^4),
\]
where
\[
K_{TC}
=
-\frac{2\,\partial_v\Phi_{TC}(s,v_0)}{s+9v_0}>0.
\]
Hence no $TC$ fold is a local maximum. Moreover,
$
\Delta_{C_-}
=
\Delta_{T_{13}}
+
\Delta_{T_{23}},
$
so $C_-$ cannot become active while both $T_{13}$ and $T_{23}$ remain
inactive. The relevant enlargement is therefore a $TTC$ configuration.

Next let $T_{12}$ and $T_{23}$ be active. Normalization gives
\[
w^2-u^2=a^2-c^2,
\qquad
z^2-v^2=a^2-c^2,
\]
and hence
\[
w+z-(b+c)
=
(a-c)
\left[
1+\frac{a+c}{w+u}
+\frac{a+c}{z+v}
\right].
\]
Since $T_{23}$ is active and the bracket is positive,
\begin{equation}
a=c.
\label{eq:app_TT_ac}
\end{equation}
Thus $w=u$, $z=v$, and $S=2a+b$. The reduced phase function is
\begin{widetext}
\[
\Phi_{TT}(a,b)
=
-
\frac{
a^4+4a^3b+4a^2b^2-6a^2
+8ab^3-4ab+4b^4-4b^2+1
}{
8a^2b
}.
\]
\end{widetext}

At fixed $S$, write
\[
a=\frac{S-b}{2},
\qquad
\varrho:=\frac{a}{b},
\qquad
y:=b^2.
\]
Normalization and $u+v=a+b$ give
\[
(u-v)^2=2-2b^2-(a+b)^2\ge0.
\]
Its right-hand side is decreasing in $b$, and at $b=S/3$ equals
$2-2S^2/3\le0$. Hence $b\le S/3$ and $a\ge b$.

One obtains
\[
\frac{d\Phi_{TT}}{db}
=
-
\frac{m(\varrho,y)}{8a^3b},
\]
where
$
m(\varrho,y)
=
A(\varrho)y^2+B(\varrho)y+1-\varrho,
$
with
$
A(\varrho)
=
-(\varrho^2-2\varrho-2)
(\varrho^3+3\varrho^2+6\varrho+2),
$
and
$
B(\varrho)
=
2(3\varrho^3-3\varrho-2).
$

The remaining cycle inequality gives $a^2+2b^2\ge1$. Together with
$(u-v)^2\ge0$,
\[
(\varrho^2+2)y\ge1,
\qquad
(\varrho^2+2\varrho+3)y\le2,
\qquad
S^2
\ge
\frac{(2\varrho+1)^2}{\varrho^2+2}.
\]
The last lower bound is increasing on $[1,4]$ and equals
$S_\ast^2$ at $\varrho=\sqrt2$; for $\varrho\ge4$ it is at least
$4>S_\ast^2$. Hence $S<S_\ast$ implies
$\varrho<\sqrt2$. We may therefore use
\[
1\le\varrho\le\sqrt2,
\qquad
\frac14\le y\le\frac13.
\]
On this range $A(\varrho)>0$, and
\[
\frac{\partial m}{\partial y}
\ge
\frac{A(\varrho)}2+B(\varrho)
=
-\frac12
\left(
\varrho^5+\varrho^4-14\varrho^3
-16\varrho^2-4\varrho+4
\right)>0.
\]
Hence
\[
m(\varrho,y)
\ge
m\left(\varrho,\frac14\right),
\]
while
\[
m\left(\varrho,\frac14\right)-\frac54
=
-
\frac{
(\varrho-1)
(\varrho^2-4\varrho-4)
(\varrho^2+6\varrho+4)
}{16}
\ge0.
\]
Therefore
\[
\frac{d\Phi_{TT}}{db}<0.
\]

At the $TT$ fold $d=u-v=0$,
\[
d^2
=
2-\frac{s^2}{4}-\frac{sb}{2}-\frac{9b^2}{4},
\]
so $d$ is a local coordinate. If $b_0=b(0)>0$, then
\[
b(d)
=
b_0-\frac{2d^2}{s+9b_0}+O(d^4),
\qquad
\Phi(d)
=
\Phi(0)+K_{TT}d^2+O(d^4),
\]
with
\[
K_{TT}
=
-\frac{2(d\Phi_{TT}/db)_{b=b_0}}{s+9b_0}>0.
\]

For the single-$T$, $TC$, and $TT$ folds above, these perturbations
preserve normalization, fixed $S$, and the indicated active
equalities. Since $S<S_\ast$ excludes the unistochastic boundary,
small perturbations remain feasible and inactive matching
inequalities remain strict. Hence these fold points are not local
maxima unless the active set enlarges.

Since $\Phi_{TT}$ decreases with $b$, its maximum occurs at the
smallest admissible $b$, where $a^2+2b^2=1$. At this endpoint the
remaining cycle constraint is saturated, and the $TT$ branch reaches
the $TTC$ critical ridge.

If both cyclic constraints are active, cycle rigidity in both
orientations gives
\[
u=v=c,\qquad
w=z=a,\qquad
r=t=b.
\]
The three transposition inequalities then force
$a=b=c=1/\sqrt3$, yielding the Fourier-type modulus matrix.
Likewise, if all three transpositions are active, the preceding
$TT$ argument applied cyclically again gives the Fourier-type
configuration. For this configuration,
$\kappa=1/2$ and hence $\mu(R)\ge1/2>c_\ast$.

Every remaining active set with at least three constraints contains
two transpositions and one cycle and is therefore of $TTC$ type.
Thus, up to the symmetries above, every fixed-$S$ interior maximizer
of $\Phi$ in the hard region lies on the $TTC$ critical ridge.

\subsection{The $TTC$ critical ridge}
\label{app:TTC_ridge}

Up to row and column permutations and transposition, the modulus
matrix on the $TTC$ ridge has the form
\[
|R|
=
\begin{pmatrix}
a & a & t\\
b & b & a\\
b & b & a
\end{pmatrix}.
\]
Normalization gives
\begin{equation}
a^2+2b^2=1,
\qquad
t^2=4b^2-1,
\label{eq:app_TTC_norm}
\end{equation}
while
\begin{equation}
S=2a+b.
\label{eq:app_TTC_S}
\end{equation}
Eliminating $a$ yields
$9b^2-2Sb+S^2-4=0$. Since $t^2\ge0$, $b\ge1/2$, so the admissible
branch on $\sqrt3\le S\le S_\ast$ is
\begin{equation}
b(S)
=
\frac{
S+2\sqrt{9-2S^2}
}{9},
\label{eq:app_bS}
\end{equation}
with $a(S)=(S-b(S))/2$.

Substituting the $TTC$ relations into Eq.~\eqref{eq:app_Phi} gives
\[
\Phi_{\rm TTC}(S)=\frac{1}{2b(S)}.
\]

\begin{lemma}
\label{lem:TTC_bound}
Along the $TTC$ critical ridge, the smallest-eigenvalue branch
associated with the maximal diagonal matching satisfies
$\kappa\ge c_\ast$. More precisely,
$\kappa>c_\ast$ for $\sqrt3\le S<S_\ast$, whereas
$\kappa=c_\ast$ at $S=S_\ast$.

Consequently, $\mu(R)>c_\ast$ for $\sqrt3\le S<S_\ast$, and
$\mu(R)\ge c_\ast$ at $S=S_\ast$.
\end{lemma}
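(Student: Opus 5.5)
The plan is to convert the statement about $\kappa$ into a one-variable inequality in $S$, using the implicit relation $\Phi=\Psi(S,\kappa)$ from Eq.~\eqref{eq:app_Phi_kappa} and the monotonicity in Eq.~\eqref{eq:app_dPhi_kappa}. On the admissible branch, $\kappa\ge (S-1)/2$ and $8\kappa^2+1\le S^2$, and there $\Psi(S,\cdot)$ is strictly decreasing. So for fixed $S$ the value $\kappa$ is the unique solution of $\Psi(S,\kappa)=\Phi_{\rm TTC}(S)=1/(2b(S))$. Moreover, $\kappa\ge c_\ast$ holds if and only if
\[
G(S):=\Psi(S,c_\ast)-\frac{1}{2b(S)}\ge0 ,
\]
with strict inequality corresponding to $\kappa>c_\ast$. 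This reduction requires $c_\ast$ itself to lie in the admissible interval for every $S\in[\sqrt3,S_\ast]$. I would check this first.
\begin{itemize}
\item The condition $c_\ast\ge (S-1)/2$ is equivalent to $S\le S_\ast$.
\item The condition $8c_\ast^2+1\le S^2$ follows from $8c_\ast^2+1\approx2.67<3\le S^2$.
\end{itemize}
If $\kappa<c_\ast$ were to leave the branch, then already $\kappa\ge(S-1)/2$ gives the claim near $S_\ast$. So the monotone comparison is legitimate.

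Next I would handle the two endpoints.
\begin{itemize}
\item At $S=\sqrt3$, Eq.~\eqref{eq:app_bS} gives $b=1/\sqrt3$ and $\Phi=\sqrt3/2$. Solving gives $\kappa=1/2>c_\ast$, consistent with the Fourier-type value found earlier.
\item At $S=S_\ast$, one has $9-2S_\ast^2=(2\sqrt2-1)^2/2$, hence $b(S_\ast)=1/2$ and $\Phi_{\rm TTC}=1$. A direct substitution should give $\Psi(S_\ast,c_\ast)=1$, that is $G(S_\ast)=0$, so $\kappa=c_\ast$ there. Equivalently, $x=0$ in Eq.~\eqref{eq:app_cosx}, since $S_\ast=2c_\ast+1$.
\end{itemize}

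For the interior I would avoid the square root in $b(S)$ by parametrizing the ridge by $b\in[1/2,1/\sqrt3]$. Inverting $9b^2-2Sb+S^2-4=0$ on the admissible branch gives $S=b+2\sqrt{1-2b^2}=b+2a$, with $a=\sqrt{1-2b^2}$. Equivalently I would use $(a,b)$ with $a^2+2b^2=1$, parametrized rationally, e.g.\ $a=\cos\phi$ and $b=\sin\phi/\sqrt2$, or by a tangent half-angle substitution. Then $G$ becomes a rational function of one variable. Clearing the positive denominator $8S^3c_\ast^2\cdot 2b$ leaves a polynomial $N$, which vanishes at the endpoint $b=1/2$. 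My steps would be:
\begin{itemize}
\item factor out the root at the $S_\ast$ endpoint, $b-1/2$ or its image in the rational parameter;
\item show the cofactor has constant sign on the interval, by a Sturm/interval-arithmetic check or by bounding its derivative;
\item check the sign at $S=\sqrt3$, where $G>0$ is already known, to fix orientation.
\end{itemize}
As a fallback, I would use the fact that $b(S)$ is strictly decreasing, since $b'(S)=\bigl(1-2S/\sqrt{9-2S^2}\bigr)/9<0$ for $S\ge\sqrt3$. Then $1/(2b(S))$ is increasing, and it would suffice to show that $\Psi(S,c_\ast)$ stays above it. This could be done by comparing derivatives near $S_\ast$ and using coarse bounds elsewhere. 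The main obstacle is this polynomial sign verification: it involves irrational coefficients, through $c_\ast$ containing $\sqrt2$, and a double-square-root structure.

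Finally, Eq.~\eqref{eq:app_mu_ge_kappa} gives $\mu(R)\ge\kappa$. This yields $\mu(R)>c_\ast$ for $\sqrt3\le S<S_\ast$ and $\mu(R)\ge c_\ast$ at $S=S_\ast$, as stated.
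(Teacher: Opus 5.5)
Your reduction is the same as the paper's and is correct. Both $\kappa$ and $c_\ast$ lie in the admissible interval $(S-1)/2\le\kappa\le\sqrt{(S^2-1)/8}$ for $\sqrt3\le S\le S_\ast$, and $\Psi(S,\cdot)$ is strictly decreasing there. Hence $\kappa\ge c_\ast$ if and only if $G(S):=\Psi(S,c_\ast)-1/(2b(S))\ge0$. Your endpoint computations also check out: $\kappa=1/2$ at $S=\sqrt3$, and $b(S_\ast)=1/2$, $\Psi(S_\ast,c_\ast)=1$ at $S=S_\ast$. You should drop the sentence about $\kappa$ ``leaving the branch.'' The value $\kappa$ lies in the admissible interval by construction, and for $S<S_\ast$ the bound $\kappa\ge(S-1)/2$ is weaker than $\kappa\ge c_\ast$, so it proves nothing.

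The gap is that the lemma's only real content, namely $G(S)>0$ on $[\sqrt3,S_\ast)$, is never established. You describe a route: parametrize $a^2+2b^2=1$ rationally, divide out the root at $b=1/2$, then run a Sturm or interval-arithmetic check. But you never produce the cofactor or certify its sign. The margin is also thin: $G(\sqrt3)\approx0.0098$, $G(1.9)\approx0.0054$, and $G(S_\ast)=0$. So your ``coarse bounds'' fallback is not a routine step either.

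The paper closes this with an explicit certificate that needs no parametrization. With $\eta=\sqrt{9-2S^2}$ one has $1/(2b)=9/[2(S+2\eta)]$ and $\Psi(S,c_\ast)=N/(8S^3c_\ast^2)$. Here $N$ is the numerator of $\Psi$ at $\kappa=c_\ast$, a quadratic in $X=S^2$. Therefore
\[
G(S)=\frac{2\eta N-SL}{72\,b\,S^3c_\ast^2},\qquad L:=36S^2c_\ast^2-N .
\]
On $3\le X\le S_\ast^2$ one has $0<N<X^2$ and $L>0$, so the conjugate $2\eta N+SL$ is positive. Multiply numerator and denominator by it and use the polynomial identity $(2\eta N)^2-(SL)^2=4(9-2X)N^2-XL^2=9(S_\ast^2-X)Q_0(X)$. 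Here $Q_0$ is an explicit monic quartic that is positive for $X\ge3$ by inspection of its coefficients. This gives $\operatorname{sign}G=\operatorname{sign}(S_\ast-S)$.

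Your parametrized version could work in principle. But until you exhibit the analogous factorization and a verified sign for the cofactor, the inequality $\kappa>c_\ast$ on $[\sqrt3,S_\ast)$ is asserted rather than proved.
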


\begin{proof}
Since $S\le S_\ast$ and
$8c_\ast^2+1<3\le S^2$, the value $c_\ast$ lies in the admissible
comparison range for $\kappa$. By
Eq.~\eqref{eq:app_dPhi_kappa},
\[
\kappa\ge c_\ast
\quad\Longleftrightarrow\quad
\Phi_{\rm TTC}(S)\le\Psi(S,c_\ast).
\]
Substituting Eq.~\eqref{eq:app_bS} and
$c_\ast=(2\sqrt2-1)/4$ reduces this inequality to $E(S)\ge0$, where
\begin{equation}
E(S)
:=
\frac{S_\ast-S}{72\,b(S)}
\left[
(4+3\sqrt2)(S_\ast+S)
+(2\sqrt2-1)\sqrt{9-2S^2}
\right].
\label{eq:app_ES}
\end{equation}

To verify that only positive factors were cleared, set
$\eta=\sqrt{9-2S^2}$,
\[
N
=
S^4+\frac{117-88\sqrt2}{16}S^2
+\frac{-605+428\sqrt2}{64},
\qquad
L=36S^2c_\ast^2-N,
\]
and let $\mathcal B(S)$ denote the bracket in
Eq.~\eqref{eq:app_ES}. Define
\begin{widetext}
\[
\begin{aligned}
Q_0(X)
=
X^4-\left(8\sqrt2-\frac{67}{8}\right)X^3
+\frac{12305-8656\sqrt2}{256}X^2
-\frac{569169-402460\sqrt2}{1024}X
+\frac{219073-154908\sqrt2}{256},
\end{aligned}
\]
\end{widetext}
$$\Psi(S,c_\ast)-\frac1{2b(S)}
=
\frac{
9(S_\ast+S)Q_0(S^2)
}{
S^3c_\ast^2(2\eta N+SL)\mathcal B(S)
}
E(S).
$$
This follows from
\[
(2\eta N-SL)(2\eta N+SL)
=
9(S_\ast^2-S^2)Q_0(S^2).
\]

The cubic coefficient of $Q_0$ is greater than $-3$, its quadratic
coefficient exceeds the magnitude of its linear coefficient, and its
constant is positive. Hence $Q_0(X)>0$ for $X\ge3$.
Writing $N=X^2+AX+B$, one has
$-1<A<0$ and $0<B<-3A$, so $0<N<X^2$ for $X\ge3$.
Since $36c_\ast^2>S_\ast^2\ge X$, also $L>0$.
Thus every factor in the multiplier above is positive on
$\sqrt3\le S\le S_\ast$.

Since $b(S)>0$ and the bracket in Eq.~\eqref{eq:app_ES} is positive,
the sign of $E(S)$ is determined by $S_\ast-S$. Hence
\[
E(S)>0
\quad
(\sqrt3\le S<S_\ast),
\qquad
E(S_\ast)=0.
\]
Because $\Psi(S,\kappa)$ is strictly decreasing in $\kappa$,
\[
\kappa>c_\ast
\quad
(\sqrt3\le S<S_\ast),
\qquad
\kappa=c_\ast
\quad
(S=S_\ast).
\]
The conclusions for $\mu(R)$ follow from
Eq.~\eqref{eq:app_mu_ge_kappa}.
\end{proof}

At the endpoints,
$
S=\sqrt3
\quad\Longrightarrow\quad
a=b=t=\frac1{\sqrt3},
$
corresponding to the Fourier-type configuration, while
$
S=S_\ast
\quad\Longrightarrow\quad
a=\frac1{\sqrt2},
\qquad
b=\frac12,
\qquad
t=0.
$
The latter will be shown below to attain the tight constant exactly.

\subsection{The orthostochastic boundary}
\label{app:orthostochastic}

If $P_{ij}=|R_{ij}|^2$ lies on the boundary of the
$3\times3$ unistochastic set, then $R$ is equivalent, up to left and
right diagonal rephasings, to a real orthogonal matrix. For positive
entries, the orthogonality triangle between two rows degenerates at
the boundary, so the rows can be rephased to be real. The third row
is then real up to an overall phase. If some entry vanishes, the same
conclusion follows by directly rephasing the remaining nonzero
entries. A row sign may finally be chosen so that the determinant is
positive.

By Lemma~\ref{lem:app_monomial_invariance}, it is therefore enough to
consider $O\in SO(3)$. Let $\mathcal C_3\subset SO(3)$ be the group of
orientation-preserving signed permutation matrices. Since
$\mathcal C_3\subset\mathcal M_3$, it is enough to optimize over this
subgroup.

\begin{lemma}
\label{lem:orthostochastic}
For every $O\in SO(3)$,
\[
\max_{C\in\mathcal C_3}
\lambda_{\min}
\left[
\operatorname{Re}(CO^T)
\right]
\ge
c_\ast.
\]
\end{lemma}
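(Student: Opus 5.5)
The plan is to reduce the statement to a covering-radius estimate for the rotation group of the cube inside $SO(3)$. I would then prove that estimate with unit quaternions, where it becomes an elementary inequality on $S^3$.

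\emph{Step 1 (reduction to the rotation angle).} For $C\in\mathcal C_3$ the product $W=CO^T$ lies in $SO(3)$. It is therefore a rotation by some angle $\varphi\in[0,\pi]$, with eigenvalues $1,e^{\pm i\varphi}$. Hence $\lambda_{\min}[\operatorname{Re}W]=\cos\varphi=(\operatorname{tr}W-1)/2$. The claim is thus equivalent to the following: for every $O\in SO(3)$ some $C\in\mathcal C_3$ satisfies $\operatorname{tr}(CO^T)\ge S_\ast=\sqrt2+\tfrac12$.

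\emph{Step 2 (quaternions).} I would use the double cover $S^3\to SO(3)$. If $q_O,q_C$ are unit quaternions representing $O,C$, then $CO^T$ is represented by $q_C\bar q_O$. The identity $\operatorname{tr}W=4(\mathrm{Re}\,q_W)^2-1$ gives
\[
\operatorname{tr}(CO^T)=4\langle q_C,q_O\rangle^2-1,
\]
where $\langle\cdot,\cdot\rangle$ is the Euclidean inner product on $\mathbb R^4$. The preimage of $\mathcal C_3$ is the binary octahedral group $\mathcal G$ of $48$ unit quaternions:
\[
\pm e_i,\qquad \tfrac12(\pm1,\pm1,\pm1,\pm1),\qquad \tfrac1{\sqrt2}(\pm e_i\pm e_j)\ (i<j).
\]
The required inequality becomes $\max_{g\in\mathcal G}|\langle g,q\rangle|\ge\kappa$ for every unit $q\in\mathbb R^4$, where
\[
\kappa^2=\frac{S_\ast+1}{4}=\frac{3+2\sqrt2}{8},\qquad \kappa=\frac{2+\sqrt2}{4}.
\]

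\emph{Step 3 (symmetry and a convex maximization).} The set $\mathcal G$ is invariant under all signed permutations of the four coordinates. So we may assume $q_0\ge q_1\ge q_2\ge q_3\ge 0$. Then the best inner products from the three families are $q_0$, $(q_0+q_1+q_2+q_3)/2$ and $(q_0+q_1)/\sqrt2$.

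Suppose, for contradiction, that all three are strictly less than $\kappa$. Then $q$ lies in the polytope
\[
P=\bigl\{q_0\ge q_1\ge q_2\ge q_3\ge0,\ q_0\le\kappa,\ q_0+q_1\le\sqrt2\kappa,\ q_0+q_1+q_2+q_3\le2\kappa\bigr\},
\]
and strictly inside the last three constraints. Since $\|q\|^2$ is convex, its maximum over $P$ is attained at a vertex. I would enumerate the vertices of $P$, which is routine since $P$ is a small $4$-dimensional polytope, and show that every vertex has $\|q\|^2\le1$. The only vertex with equality should be
\[
q=\kappa\bigl(1,\sqrt2-1,\sqrt2-1,3-2\sqrt2\bigr),
\]
whose squared norm is $\kappa^2(24-16\sqrt2)=1$. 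A point of $P$ that satisfies the three upper constraints strictly then has $\|q\|<1$, which is the contradiction.

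As a consistency check, this extremal $q$ is the quaternion of a rotation at angle exactly $\theta_\ast$ from every element of the cube group. It should correspond to the matrix $R_\ast$ up to the symmetries of Lemma~\ref{lem:app_monomial_invariance}.

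\emph{Main obstacle.} The delicate step is the vertex enumeration in Step 3. One must check that no other vertex, in particular those where the ordering constraints become tight in different combinations, has norm exceeding $1$. I would organize it by which of the three upper constraints are tight. If $q_0+q_1+q_2+q_3\le2\kappa$ is slack, the bound $\|q\|^2\le q_0^2+q_1^2+q_1^2+q_1^2$ together with $q_0+q_1\le\sqrt2\kappa$ already gives a strict estimate. The fully tight case is the extremal point above.
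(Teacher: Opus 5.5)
Your reduction is correct and the argument goes through. It is the paper's idea carried out in different coordinates. Both proofs lift to the binary octahedral group $\mathcal G$, encode ``no element of $\mathcal G$ is close'' by linear inequalities, and maximize a convex function at a vertex of the resulting polytope.

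The paper first moves $O$ into the Voronoi cell of the identity and uses Rodrigues coordinates $x_i=q_i/q_0$. There the cell becomes the three-dimensional polytope $|x_i|\le\sqrt2-1$, $\sum_i|x_i|\le1$, and its extremal point $(\alpha,\alpha,\alpha^2)$ follows from a one-line mass-transfer argument. You stay on $S^3$, use the full signed-permutation symmetry of $\mathcal G$, and argue by contradiction. This avoids the nearest-element step and the Rodrigues chart, at the price of a genuinely four-dimensional vertex enumeration.

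Two statements in your sketch of that enumeration are wrong, although the conclusion survives.

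First, the equality vertex is not unique. The point $q=(\tfrac1{\sqrt2},\tfrac12,\tfrac12,0)=\kappa(2\sqrt2-2,\,2-\sqrt2,\,2-\sqrt2,\,0)$ is a vertex of $P$ with $\|q\|=1$. Its tight constraints are $q_1=q_2$, $q_3=0$, $q_0+q_1=\sqrt2\kappa$ and $\sum_iq_i=2\kappa$, while $q_0\le\kappa$ is slack. It is the $90^\circ$ rotation about the axis $(1,1,0)$. Left multiplication by $\tfrac1{\sqrt2}(e_0-e_1)\in\mathcal G$, followed by a signed permutation, carries it to your vertex. So the two are equivalent under the symmetries of $\mathcal G$, but not under signed permutations alone.

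Second, your shortcut for the case $\sum_iq_i<2\kappa$ fails. Under $q_1\le q_0\le\kappa$ and $q_0+q_1\le\sqrt2\kappa$ alone, $q_0^2+3q_1^2$ reaches $2\kappa^2=(3+2\sqrt2)/4>1$ at $q_0=q_1=\kappa/\sqrt2$. Bounding $q_2,q_3$ by $q_1$ therefore throws away exactly the constraint that matters. Instead, note that the nonzero vertices with $\sum_iq_i<2\kappa$ are $\kappa(1,0,0,0)$, $\tfrac{\kappa}{\sqrt2}(1,1,0,0)$, $\kappa(1,\sqrt2-1,0,0)$ and $\kappa(1,\sqrt2-1,\sqrt2-1,0)$. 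Their largest squared norm is $(7-4\sqrt2)\kappa^2=(5+2\sqrt2)/8<1$. The full enumeration then confirms $\max_P\|q\|^2=1$, attained at the two vertices above.

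Strictness does not need uniqueness either. If the three upper constraints hold strictly, then $(1+\epsilon)q\in P$ for small $\epsilon>0$, so $\|q\|<1$.
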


\begin{proof}
Represent $O$ by a unit quaternion
$q=(q_0,q_1,q_2,q_3)$,
$q_0^2+q_1^2+q_2^2+q_3^2=1$.
Modulo the binary octahedral group associated with $\mathcal C_3$,
every rotation may be represented in the Voronoi cell of the identity.
Taking $q_0>0$ and introducing Rodrigues coordinates
$x_i=q_i/q_0$, comparison with the nearest $\pi/2$ coordinate-axis
rotations gives
\[
|x_i|\le\alpha,
\qquad
\alpha:=\sqrt2-1,
\]
while comparison with the $2\pi/3$ body-diagonal rotations gives
\[
|x_1|+|x_2|+|x_3|\le1.
\]
Containment in this polytope is sufficient for the estimate below.

Let $r^2=x_1^2+x_2^2+x_3^2$. By symmetry take
$0\le x_3\le x_2\le x_1\le\alpha$. At a maximizer,
$x_1+x_2+x_3=1$. Transferring mass $\varepsilon>0$ from a smaller
coordinate to a larger unsaturated one increases $r^2$ by
$2\varepsilon(x_i-x_j)+2\varepsilon^2>0$. Since
$2\alpha<1<3\alpha$, the maximum occurs at
$
(x_1,x_2,x_3)
=
(\alpha,\alpha,\alpha^2),
$
and therefore
$
r_{\max}^2
=
2\alpha^2+\alpha^4
=
23-16\sqrt2.
$

For Rodrigues coordinates,
$r=\tan(\theta/2)$, so
\[
\cos\theta
=
\frac{1-r^2}{1+r^2}
\ge
\frac{1-(23-16\sqrt2)}{1+(23-16\sqrt2)}
=
\frac{2\sqrt2-1}{4}
=
c_\ast.
\]
For a suitable $C\in\mathcal C_3$, the relative rotation $CO^T$
therefore has rotation angle $\theta$ satisfying this bound.
Its eigenvalues are $1,e^{i\theta},e^{-i\theta}$, and hence
\[
\lambda_{\min}
\left[
\operatorname{Re}(CO^T)
\right]
=
\cos\theta
\ge
c_\ast.
\]
\end{proof}

The lemma also gives $S(O)\ge S_\ast$: for the chosen $C$,
\[
S(O)
\ge
\operatorname{tr}(CO^T)
=
1+2\cos\theta
\ge
S_\ast.
\]
Since rephasings preserve $S$, every unistochastic boundary
configuration satisfies $S(R)\ge S_\ast$.

\subsection{Exact extremality of the qutrit endpoint}
\label{app:exact_extremality}

We now show that the endpoint of the $TTC$ ridge attains the full
monomial optimization.

Let
$
a:=\frac1{\sqrt2},
b:=\frac12,
$
and define
\begin{equation}
R_\ast
=
\begin{pmatrix}
a & 0 & a\\
-b & a & b\\
-b & -a & b
\end{pmatrix}.
\label{eq:app_Rstar}
\end{equation}
A direct calculation gives
$R_\ast^TR_\ast=I$ and $\det R_\ast=1$.

\begin{lemma}
\label{lem:Rstar_exact}
For the matrix $R_\ast$ in Eq.~\eqref{eq:app_Rstar},
$\mu(R_\ast)=c_\ast$.
\end{lemma}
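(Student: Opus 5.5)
The equality $\mu(R_\ast)=c_\ast$ splits into two inequalities. I would take the lower bound from earlier results and put the real work into the upper bound.

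\emph{Lower bound.} Since $R_\ast\in SO(3)$, Lemma~\ref{lem:orthostochastic} applies directly. Together with $\mathcal C_3\subset\mathcal M_3$, it gives $\mu(R_\ast)\ge\max_{C\in\mathcal C_3}\lambda_{\min}[\operatorname{Re}(CR_\ast^T)]\ge c_\ast$. The identity $M=I$ already gives the value $c_\ast$: $R_\ast^\dagger$ is a rotation with $\operatorname{tr}R_\ast=1+2c_\ast$, so its eigenvalues are $1,e^{\pm i\theta_\ast}$. It therefore remains to show that no $M=P_\pi D\in\mathcal M_3$ does better, i.e.
\[
\lambda_{\min}\bigl[\operatorname{Re}(P_\pi D R_\ast^\dagger)\bigr]\le c_\ast\qquad\text{for all }\pi\in S_3,\ D\in\mathcal U_{\rm diag}(3).
\]

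\emph{Reductions.} First I would use Lemma~\ref{lem:app_monomial_invariance} together with the symmetry of $R_\ast$. Interchanging its second and third rows, followed by suitable column permutations and sign changes, maps $R_\ast$ to a monomial multiple of itself. This reduces the six permutations to a few representative classes. For any $\pi$ whose matching sum $\sum_i|(R_\ast)_{i\pi(i)}|$ is at most $1$, I would argue as in Lemma~\ref{lem:large_matching}. The trace of $W=P_\pi DR_\ast^\dagger$ then has modulus at most $1$ for every $D$, so the eigenvalues cannot lie in an open semicircle and $\lambda_{\min}(\operatorname{Re}W)\le 0<c_\ast$. The matchings of $|R_\ast|$, whose rows are $(a,0,a),(b,a,b),(b,a,b)$, are explicit, so this case is a finite check. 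The surviving permutations have matching sums in $(1,S_\ast]$.

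\emph{Surviving permutations.} For such a $\pi$, write the eigenvalues of $e^{-i\alpha}W$ as $e^{-i\delta},e^{ix},e^{i\delta}$, centred as in Sec.~\ref{app:qutrit_interior}, and set $\kappa=\cos\delta$. The trace and determinant of $W$ are explicit trigonometric functions of the three phases in $D$. The same elimination of $\alpha$ and $x$ that produced Eq.~\eqref{eq:app_Phi_kappa} then gives a relation $\Phi=\Psi(|\operatorname{tr}W|,\kappa)$. Here $\Phi$ is the cosine of the combination of $\arg\det W$ and $\arg\operatorname{tr}W$ that replaces $\gamma$ for this matching. Because $\partial_\kappa\Psi<0$ by Eq.~\eqref{eq:app_dPhi_kappa}, the target inequality $\kappa\le c_\ast$ is equivalent to
\[
\Phi\ge\Psi\bigl(|\operatorname{tr}W|,c_\ast\bigr).
\]
This is an explicit inequality on a two-torus of relative phases; the overall phase drops out. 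I would prove it by locating the critical points of the difference. At the fully aligned phases with $|\operatorname{tr}W|=S_\ast$, the configuration is the $TTC$ endpoint, where Lemma~\ref{lem:TTC_bound} gives equality $E(S_\ast)=0$. For the other critical points I would check that the inequality holds strictly.

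\emph{Main obstacle and fallback.} The main obstacle is this last step: showing that misaligning the phases, which lowers $|\operatorname{tr}W|$ but changes the determinant phase, can never push $\kappa$ above $c_\ast$. If the critical-point analysis becomes unwieldy, I would switch to a numerical-range certificate. For each surviving $\pi$ and each $D$, I would exhibit an explicit unit vector $v$, depending continuously on the phases, with $\operatorname{Re}\langle v|P_\pi DR_\ast^\dagger|v\rangle\le c_\ast$. The natural candidate is the rotation axis of $R_\ast$ paired with the plane orthogonal to it, since for $M=I$ the eigenvector belonging to $e^{\pm i\theta_\ast}$ attains $c_\ast$ exactly.
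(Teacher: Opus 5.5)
\textbf{There is a genuine gap: the core of the upper bound is not proved.} Several parts of your proposal are fine. Your lower bound via $M=I$ is the paper's, and you do not need Lemma~\ref{lem:orthostochastic} for it. Your treatment of the two permutations whose matching sum in $|R_\ast|$ equals $1$ is also valid. There you can even skip the semicircle argument, since $\lambda_{\min}(\operatorname{Re}W)\le\frac13\operatorname{Re}\operatorname{tr}W\le\frac13<c_\ast$. If you keep the semicircle argument, note that you need the converse of Lemma~\ref{lem:large_matching}: three unit vectors in an open semicircle have a sum of modulus $>1$. This is true, by $|2\cos\delta+e^{ix}|^2=4\cos^2\delta+4\cos\delta\cos x+1$, but it is not that lemma.

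The gap is in the other four permutations, all of which have matching sum exactly $S_\ast$. Bounding $\lambda_{\min}[\operatorname{Re}(P_\pi DR_\ast^T)]$ by $c_\ast$ uniformly in $D$ for these is the entire content of the upper bound, and you only outline a programme for it. The inequality $\Phi\ge\Psi(|\operatorname{tr}W|,c_\ast)$ on the torus of relative phases is a global trigonometric inequality for which you give no argument. ``Locate the critical points and check them'' is never carried out. Even the reduction needs care: the centering and the monotonicity of $\Psi$ hold only on the admissible branch ($|\operatorname{tr}W|>1$ and $|\operatorname{tr}W|^2\ge 8c_\ast^2+1$), so the rest of the phase region must be handled separately. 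Your fallback is not specified either. The rotation axis of $R_\ast$ and its orthogonal plane are adapted to $M=I$ with aligned phases, and they do not give a certificate valid for all $\pi$ and $D$.

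\textbf{The missing idea} is to use test vectors supported on only two coordinates. That is, pass to a $2\times2$ principal submatrix $B$ of $H_M=\operatorname{Re}(MR_\ast^T)$ and apply Cauchy interlacing. The paper uses the index pairs $(2,3)$, $(1,2)$, $(2,3)$, $(1,3)$ for $\pi=(1,2,3)$, $(1,3,2)$, $(2,3,1)$, $(3,2,1)$ respectively. For each of these, the diagonal entries of $B$ are real parts of unimodular multiples of entries of modulus $a$ and $b$. Hence they are at most $a$ and $b$ whatever the phases. The off-diagonal entry has the form $z=\frac12(\varepsilon_1ae^{-i\phi}+\varepsilon_2be^{i\psi})$, so $|z|\ge(a-b)/2$ whatever the phases. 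For instance, for $\pi=\mathrm{id}$ and the pair $(2,3)$ one has $B_{22}=a\cos\phi_2$, $B_{33}=b\cos\phi_3$ and $z=\frac12(-ae^{i\phi_2}+be^{-i\phi_3})$. Since $B\preceq\bigl(\begin{smallmatrix}a&z\\ \bar z&b\end{smallmatrix}\bigr)$,
\[
\lambda_{\min}(H_M)\le\frac{a+b-\sqrt{(a-b)^2+4|z|^2}}{2}\le\frac{a+b-\sqrt2\,(a-b)}{2}=c_\ast
\]
for every $D$. This needs no critical-point analysis and no use of $\Psi$ or Lemma~\ref{lem:TTC_bound}. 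It is the concrete numerical-range certificate your fallback was looking for, and it is the paper's proof.
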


\begin{proof}
Since $R_\ast\in SO(3)$, it is a three-dimensional rotation with
$
\operatorname{tr}R_\ast
=
2a+b
=
\sqrt2+\frac12
=
1+2c_\ast.
$
Hence its rotation angle $\theta_\ast$ satisfies
$\cos\theta_\ast=c_\ast$. Taking $M=I$ in
Eq.~\eqref{eq:app_mu} gives
$
\lambda_{\min}
\left[
\operatorname{Re}(R_\ast^T)
\right]
=
c_\ast,
$
so $\mu(R_\ast)\ge c_\ast$.

For the reverse inequality, let
$M=P_\pi D$, where
$D=\operatorname{diag}(e^{i\phi_1},e^{i\phi_2},e^{i\phi_3})$, and set
$H_M:=\operatorname{Re}(MR_\ast^T)$.

For $\pi=(2,1,3)$ and $\pi=(3,1,2)$,
$\pi^{-1}(1)=2$. Since $(R_\ast)_{12}=0$,
$(H_M)_{11}=0$, so
$\lambda_{\min}(H_M)\le0<c_\ast$.

For the remaining four permutations
$
(1,2,3),\quad
(1,3,2),\quad
(2,3,1),\quad
(3,2,1),
$
choose the principal index pairs
$(2,3)$, $(1,2)$, $(2,3)$, and $(1,3)$, respectively.
In each case the corresponding principal submatrix has the form
\[
B
=
\begin{pmatrix}
p & z\\
\overline z & q
\end{pmatrix},
\qquad
p\le a,\quad q\le b,
\]
with
$
z
=
\frac12
\left(
\varepsilon_1 a e^{-i\phi}
+
\varepsilon_2 b e^{i\psi}
\right)$,
$
|\varepsilon_1|=|\varepsilon_2|=1.
$
Thus
\[
|z|\ge\frac{a-b}{2}.
\]

By Cauchy's interlacing theorem,
$\lambda_{\min}(H_M)\le\lambda_{\min}(B)$. Since
\[
B
\preceq
\begin{pmatrix}
a & z\\
\overline z & b
\end{pmatrix},
\]
we obtain
\[
\lambda_{\min}(H_M)
\le
\frac{
a+b-\sqrt{(a-b)^2+4|z|^2}
}{2}
\le
\frac{
a+b-\sqrt2\,(a-b)
}{2}
=
c_\ast.
\]
Hence $\mu(R_\ast)\le c_\ast$, and therefore
$\mu(R_\ast)=c_\ast$.
\end{proof}

Finally, after the column permutation $(1,3,2)$,
\[
|R_\ast|
\sim
\begin{pmatrix}
\dfrac1{\sqrt2} & \dfrac1{\sqrt2} & 0\\[2mm]
\dfrac12 & \dfrac12 & \dfrac1{\sqrt2}\\[2mm]
\dfrac12 & \dfrac12 & \dfrac1{\sqrt2}
\end{pmatrix}.
\]
Thus $R_\ast$ is exactly the endpoint of the $TTC$ ridge with
$
a=\frac1{\sqrt2},
\qquad
b=\frac12,
\qquad
t=0,
\qquad
S=S_\ast.
$

\subsection{Completion of the proof}
\label{app:qutrit_completion}
Let $R\in U(3)$ be arbitrary. By
Lemma~\ref{lem:matching_lower_bound},
$S(R)\ge\sqrt3$. If $S(R)\ge S_\ast$,
Lemma~\ref{lem:large_matching} gives $\mu(R)\ge c_\ast$.

It remains to consider
$\sqrt3\le S(R)<S_\ast$. In the unistochastic interior, the fixed-$S$
analysis above shows that
$
\Phi(R)\le\Phi_{\rm TTC}(S).
$
Since $\Phi=\Psi(S,\kappa)$ is strictly decreasing in $\kappa$,
$
\kappa(R)\ge\kappa_{\rm TTC}(S).
$
Together with $\mu(R)\ge\kappa(R)$ and
Lemma~\ref{lem:TTC_bound}, this gives
$\mu(R)>c_\ast$ throughout the interior hard region.
On the unistochastic boundary,
Lemma~\ref{lem:orthostochastic} gives the same universal bound.
Therefore
\begin{equation}
\mu(R)
\ge
\frac{2\sqrt2-1}{4}
=
c_\ast,
\qquad
R\in U(3).
\label{eq:app_global_lower}
\end{equation}
For every nondegenerate qutrit state,
$
D_{\mathcal I}^{U}(\rho)
\le
\arccos\left(
\frac{2\sqrt2-1}{4}
\right).
$

Finally, Lemma~\ref{lem:Rstar_exact} gives
$\mu(R_\ast)=c_\ast$. Hence
$
\inf_{R\in U(3)}\mu(R)
=
\frac{2\sqrt2-1}{4}.
$
Since
$D_{\mathcal I}^{U}(\rho)=\arccos\mu(R)$
for nondegenerate qutrits and $\arccos$ is decreasing on $[-1,1]$,
$
\sup_{\rho\ {\rm nondegenerate}}
D_{\mathcal I}^{U}(\rho)
=
\arccos\left(
\frac{2\sqrt2-1}{4}
\right).
$
This completes the proof of
Theorem~\ref{thm:qutrit_tight_bound}.
\end{document}